\definecolor{AliceBlue}{rgb}{0.94,0.97,1.00}
\definecolor{AntiqueWhite1}{rgb}{1.00,0.94,0.86}
\definecolor{AntiqueWhite2}{rgb}{0.93,0.87,0.80}
\definecolor{AntiqueWhite3}{rgb}{0.80,0.75,0.69}
\definecolor{AntiqueWhite4}{rgb}{0.55,0.51,0.47}
\definecolor{AntiqueWhite}{rgb}{0.98,0.92,0.84}
\definecolor{BlanchedAlmond}{rgb}{1.00,0.92,0.80}
\definecolor{BlueViolet}{rgb}{0.54,0.17,0.89}
\definecolor{CadetBlue1}{rgb}{0.60,0.96,1.00}
\definecolor{CadetBlue2}{rgb}{0.56,0.90,0.93}
\definecolor{CadetBlue3}{rgb}{0.48,0.77,0.80}
\definecolor{CadetBlue4}{rgb}{0.33,0.53,0.55}
\definecolor{CadetBlue}{rgb}{0.37,0.62,0.63}
\definecolor{CornflowerBlue}{rgb}{0.39,0.58,0.93}
\definecolor{DarkBlue}{rgb}{0.00,0.00,0.55}
\definecolor{DarkCyan}{rgb}{0.00,0.55,0.55}
\definecolor{DarkGoldenrod1}{rgb}{1.00,0.73,0.06}
\definecolor{DarkGoldenrod2}{rgb}{0.93,0.68,0.05}
\definecolor{DarkGoldenrod3}{rgb}{0.80,0.58,0.05}
\definecolor{DarkGoldenrod4}{rgb}{0.55,0.40,0.03}
\definecolor{DarkGoldenrod}{rgb}{0.72,0.53,0.04}
\definecolor{DarkGray}{rgb}{0.66,0.66,0.66}
\definecolor{DarkGreen}{rgb}{0.00,0.39,0.00}
\definecolor{DarkGrey}{rgb}{0.66,0.66,0.66}
\definecolor{DarkKhaki}{rgb}{0.74,0.72,0.42}
\definecolor{DarkMagenta}{rgb}{0.55,0.00,0.55}
\definecolor{DarkOliveGreen1}{rgb}{0.79,1.00,0.44}
\definecolor{DarkOliveGreen2}{rgb}{0.74,0.93,0.41}
\definecolor{DarkOliveGreen3}{rgb}{0.64,0.80,0.35}
\definecolor{DarkOliveGreen4}{rgb}{0.43,0.55,0.24}
\definecolor{DarkOliveGreen}{rgb}{0.33,0.42,0.18}
\definecolor{DarkOrange1}{rgb}{1.00,0.50,0.00}
\definecolor{DarkOrange2}{rgb}{0.93,0.46,0.00}
\definecolor{DarkOrange3}{rgb}{0.80,0.40,0.00}
\definecolor{DarkOrange4}{rgb}{0.55,0.27,0.00}
\definecolor{DarkOrange}{rgb}{1.00,0.55,0.00}
\definecolor{DarkOrchid1}{rgb}{0.75,0.24,1.00}
\definecolor{DarkOrchid2}{rgb}{0.70,0.23,0.93}
\definecolor{DarkOrchid3}{rgb}{0.60,0.20,0.80}
\definecolor{DarkOrchid4}{rgb}{0.41,0.13,0.55}
\definecolor{DarkOrchid}{rgb}{0.60,0.20,0.80}
\definecolor{DarkRed}{rgb}{0.55,0.00,0.00}
\definecolor{DarkSalmon}{rgb}{0.91,0.59,0.48}
\definecolor{DarkSeaGreen1}{rgb}{0.76,1.00,0.76}
\definecolor{DarkSeaGreen2}{rgb}{0.71,0.93,0.71}
\definecolor{DarkSeaGreen3}{rgb}{0.61,0.80,0.61}
\definecolor{DarkSeaGreen4}{rgb}{0.41,0.55,0.41}
\definecolor{DarkSeaGreen}{rgb}{0.56,0.74,0.56}
\definecolor{DarkSlateBlue}{rgb}{0.28,0.24,0.55}
\definecolor{DarkSlateGray1}{rgb}{0.59,1.00,1.00}
\definecolor{DarkSlateGray2}{rgb}{0.55,0.93,0.93}
\definecolor{DarkSlateGray3}{rgb}{0.47,0.80,0.80}
\definecolor{DarkSlateGray4}{rgb}{0.32,0.55,0.55}
\definecolor{DarkSlateGray}{rgb}{0.18,0.31,0.31}
\definecolor{DarkSlateGrey}{rgb}{0.18,0.31,0.31}
\definecolor{DarkTurquoise}{rgb}{0.00,0.81,0.82}
\definecolor{DarkViolet}{rgb}{0.58,0.00,0.83}
\definecolor{DeepPink1}{rgb}{1.00,0.08,0.58}
\definecolor{DeepPink2}{rgb}{0.93,0.07,0.54}
\definecolor{DeepPink3}{rgb}{0.80,0.06,0.46}
\definecolor{DeepPink4}{rgb}{0.55,0.04,0.31}
\definecolor{DeepPink}{rgb}{1.00,0.08,0.58}
\definecolor{DeepSkyBlue1}{rgb}{0.00,0.75,1.00}
\definecolor{DeepSkyBlue2}{rgb}{0.00,0.70,0.93}
\definecolor{DeepSkyBlue3}{rgb}{0.00,0.60,0.80}
\definecolor{DeepSkyBlue4}{rgb}{0.00,0.41,0.55}
\definecolor{DeepSkyBlue}{rgb}{0.00,0.75,1.00}
\definecolor{DimGray}{rgb}{0.41,0.41,0.41}
\definecolor{DimGrey}{rgb}{0.41,0.41,0.41}
\definecolor{DodgerBlue1}{rgb}{0.12,0.56,1.00}
\definecolor{DodgerBlue2}{rgb}{0.11,0.53,0.93}
\definecolor{DodgerBlue3}{rgb}{0.09,0.45,0.80}
\definecolor{DodgerBlue4}{rgb}{0.06,0.31,0.55}
\definecolor{DodgerBlue}{rgb}{0.12,0.56,1.00}
\definecolor{FloralWhite}{rgb}{1.00,0.98,0.94}
\definecolor{ForestGreen}{rgb}{0.13,0.55,0.13}
\definecolor{GhostWhite}{rgb}{0.97,0.97,1.00}
\definecolor{GreenYellow}{rgb}{0.68,1.00,0.18}
\definecolor{HotPink1}{rgb}{1.00,0.43,0.71}
\definecolor{HotPink2}{rgb}{0.93,0.42,0.65}
\definecolor{HotPink3}{rgb}{0.80,0.38,0.56}
\definecolor{HotPink4}{rgb}{0.55,0.23,0.38}
\definecolor{HotPink}{rgb}{1.00,0.41,0.71}
\definecolor{IndianRed1}{rgb}{1.00,0.42,0.42}
\definecolor{IndianRed2}{rgb}{0.93,0.39,0.39}
\definecolor{IndianRed3}{rgb}{0.80,0.33,0.33}
\definecolor{IndianRed4}{rgb}{0.55,0.23,0.23}
\definecolor{IndianRed}{rgb}{0.80,0.36,0.36}
\definecolor{LavenderBlush1}{rgb}{1.00,0.94,0.96}
\definecolor{LavenderBlush2}{rgb}{0.93,0.88,0.90}
\definecolor{LavenderBlush3}{rgb}{0.80,0.76,0.77}
\definecolor{LavenderBlush4}{rgb}{0.55,0.51,0.53}
\definecolor{LavenderBlush}{rgb}{1.00,0.94,0.96}
\definecolor{LawnGreen}{rgb}{0.49,0.99,0.00}
\definecolor{LemonChiffon1}{rgb}{1.00,0.98,0.80}
\definecolor{LemonChiffon2}{rgb}{0.93,0.91,0.75}
\definecolor{LemonChiffon3}{rgb}{0.80,0.79,0.65}
\definecolor{LemonChiffon4}{rgb}{0.55,0.54,0.44}
\definecolor{LemonChiffon}{rgb}{1.00,0.98,0.80}
\definecolor{LightBlue1}{rgb}{0.75,0.94,1.00}
\definecolor{LightBlue2}{rgb}{0.70,0.87,0.93}
\definecolor{LightBlue3}{rgb}{0.60,0.75,0.80}
\definecolor{LightBlue4}{rgb}{0.41,0.51,0.55}
\definecolor{LightBlue}{rgb}{0.68,0.85,0.90}
\definecolor{LightCoral}{rgb}{0.94,0.50,0.50}
\definecolor{LightCyan1}{rgb}{0.88,1.00,1.00}
\definecolor{LightCyan2}{rgb}{0.82,0.93,0.93}
\definecolor{LightCyan3}{rgb}{0.71,0.80,0.80}
\definecolor{LightCyan4}{rgb}{0.48,0.55,0.55}
\definecolor{LightCyan}{rgb}{0.88,1.00,1.00}
\definecolor{LightGoldenrod1}{rgb}{1.00,0.93,0.55}
\definecolor{LightGoldenrod2}{rgb}{0.93,0.86,0.51}
\definecolor{LightGoldenrod3}{rgb}{0.80,0.75,0.44}
\definecolor{LightGoldenrod4}{rgb}{0.55,0.51,0.30}
\definecolor{LightGoldenrodYellow}{rgb}{0.98,0.98,0.82}
\definecolor{LightGoldenrod}{rgb}{0.93,0.87,0.51}
\definecolor{LightGray}{rgb}{0.83,0.83,0.83}
\definecolor{LightGreen}{rgb}{0.56,0.93,0.56}
\definecolor{LightGrey}{rgb}{0.83,0.83,0.83}
\definecolor{LightPink1}{rgb}{1.00,0.68,0.73}
\definecolor{LightPink2}{rgb}{0.93,0.64,0.68}
\definecolor{LightPink3}{rgb}{0.80,0.55,0.58}
\definecolor{LightPink4}{rgb}{0.55,0.37,0.40}
\definecolor{LightPink}{rgb}{1.00,0.71,0.76}
\definecolor{LightSalmon1}{rgb}{1.00,0.63,0.48}
\definecolor{LightSalmon2}{rgb}{0.93,0.58,0.45}
\definecolor{LightSalmon3}{rgb}{0.80,0.51,0.38}
\definecolor{LightSalmon4}{rgb}{0.55,0.34,0.26}
\definecolor{LightSalmon}{rgb}{1.00,0.63,0.48}
\definecolor{LightSeaGreen}{rgb}{0.13,0.70,0.67}
\definecolor{LightSkyBlue1}{rgb}{0.69,0.89,1.00}
\definecolor{LightSkyBlue2}{rgb}{0.64,0.83,0.93}
\definecolor{LightSkyBlue3}{rgb}{0.55,0.71,0.80}
\definecolor{LightSkyBlue4}{rgb}{0.38,0.48,0.55}
\definecolor{LightSkyBlue}{rgb}{0.53,0.81,0.98}
\definecolor{LightSlateBlue}{rgb}{0.52,0.44,1.00}
\definecolor{LightSlateGray}{rgb}{0.47,0.53,0.60}
\definecolor{LightSlateGrey}{rgb}{0.47,0.53,0.60}
\definecolor{LightSteelBlue1}{rgb}{0.79,0.88,1.00}
\definecolor{LightSteelBlue2}{rgb}{0.74,0.82,0.93}
\definecolor{LightSteelBlue3}{rgb}{0.64,0.71,0.80}
\definecolor{LightSteelBlue4}{rgb}{0.43,0.48,0.55}
\definecolor{LightSteelBlue}{rgb}{0.69,0.77,0.87}
\definecolor{LightYellow1}{rgb}{1.00,1.00,0.88}
\definecolor{LightYellow2}{rgb}{0.93,0.93,0.82}
\definecolor{LightYellow3}{rgb}{0.80,0.80,0.71}
\definecolor{LightYellow4}{rgb}{0.55,0.55,0.48}
\definecolor{LightYellow}{rgb}{1.00,1.00,0.88}
\definecolor{LimeGreen}{rgb}{0.20,0.80,0.20}
\definecolor{MediumAquamarine}{rgb}{0.40,0.80,0.67}
\definecolor{MediumBlue}{rgb}{0.00,0.00,0.80}
\definecolor{MediumOrchid1}{rgb}{0.88,0.40,1.00}
\definecolor{MediumOrchid2}{rgb}{0.82,0.37,0.93}
\definecolor{MediumOrchid3}{rgb}{0.71,0.32,0.80}
\definecolor{MediumOrchid4}{rgb}{0.48,0.22,0.55}
\definecolor{MediumOrchid}{rgb}{0.73,0.33,0.83}
\definecolor{MediumPurple1}{rgb}{0.67,0.51,1.00}
\definecolor{MediumPurple2}{rgb}{0.62,0.47,0.93}
\definecolor{MediumPurple3}{rgb}{0.54,0.41,0.80}
\definecolor{MediumPurple4}{rgb}{0.36,0.28,0.55}
\definecolor{MediumPurple}{rgb}{0.58,0.44,0.86}
\definecolor{MediumSeaGreen}{rgb}{0.24,0.70,0.44}
\definecolor{MediumSlateBlue}{rgb}{0.48,0.41,0.93}
\definecolor{MediumSpringGreen}{rgb}{0.00,0.98,0.60}
\definecolor{MediumTurquoise}{rgb}{0.28,0.82,0.80}
\definecolor{MediumVioletRed}{rgb}{0.78,0.08,0.52}
\definecolor{MidnightBlue}{rgb}{0.10,0.10,0.44}
\definecolor{MintCream}{rgb}{0.96,1.00,0.98}
\definecolor{MistyRose1}{rgb}{1.00,0.89,0.88}
\definecolor{MistyRose2}{rgb}{0.93,0.84,0.82}
\definecolor{MistyRose3}{rgb}{0.80,0.72,0.71}
\definecolor{MistyRose4}{rgb}{0.55,0.49,0.48}
\definecolor{MistyRose}{rgb}{1.00,0.89,0.88}
\definecolor{NavajoWhite1}{rgb}{1.00,0.87,0.68}
\definecolor{NavajoWhite2}{rgb}{0.93,0.81,0.63}
\definecolor{NavajoWhite3}{rgb}{0.80,0.70,0.55}
\definecolor{NavajoWhite4}{rgb}{0.55,0.47,0.37}
\definecolor{NavajoWhite}{rgb}{1.00,0.87,0.68}
\definecolor{NavyBlue}{rgb}{0.00,0.00,0.50}
\definecolor{OldLace}{rgb}{0.99,0.96,0.90}
\definecolor{OliveDrab1}{rgb}{0.75,1.00,0.24}
\definecolor{OliveDrab2}{rgb}{0.70,0.93,0.23}
\definecolor{OliveDrab3}{rgb}{0.60,0.80,0.20}
\definecolor{OliveDrab4}{rgb}{0.41,0.55,0.13}
\definecolor{OliveDrab}{rgb}{0.42,0.56,0.14}
\definecolor{OrangeRed1}{rgb}{1.00,0.27,0.00}
\definecolor{OrangeRed2}{rgb}{0.93,0.25,0.00}
\definecolor{OrangeRed3}{rgb}{0.80,0.22,0.00}
\definecolor{OrangeRed4}{rgb}{0.55,0.15,0.00}
\definecolor{OrangeRed}{rgb}{1.00,0.27,0.00}
\definecolor{PaleGoldenrod}{rgb}{0.93,0.91,0.67}
\definecolor{PaleGreen1}{rgb}{0.60,1.00,0.60}
\definecolor{PaleGreen2}{rgb}{0.56,0.93,0.56}
\definecolor{PaleGreen3}{rgb}{0.49,0.80,0.49}
\definecolor{PaleGreen4}{rgb}{0.33,0.55,0.33}
\definecolor{PaleGreen}{rgb}{0.60,0.98,0.60}
\definecolor{PaleTurquoise1}{rgb}{0.73,1.00,1.00}
\definecolor{PaleTurquoise2}{rgb}{0.68,0.93,0.93}
\definecolor{PaleTurquoise3}{rgb}{0.59,0.80,0.80}
\definecolor{PaleTurquoise4}{rgb}{0.40,0.55,0.55}
\definecolor{PaleTurquoise}{rgb}{0.69,0.93,0.93}
\definecolor{PaleVioletRed1}{rgb}{1.00,0.51,0.67}
\definecolor{PaleVioletRed2}{rgb}{0.93,0.47,0.62}
\definecolor{PaleVioletRed3}{rgb}{0.80,0.41,0.54}
\definecolor{PaleVioletRed4}{rgb}{0.55,0.28,0.36}
\definecolor{PaleVioletRed}{rgb}{0.86,0.44,0.58}
\definecolor{PapayaWhip}{rgb}{1.00,0.94,0.84}
\definecolor{PeachPuff1}{rgb}{1.00,0.85,0.73}
\definecolor{PeachPuff2}{rgb}{0.93,0.80,0.68}
\definecolor{PeachPuff3}{rgb}{0.80,0.69,0.58}
\definecolor{PeachPuff4}{rgb}{0.55,0.47,0.40}
\definecolor{PeachPuff}{rgb}{1.00,0.85,0.73}
\definecolor{PowderBlue}{rgb}{0.69,0.88,0.90}
\definecolor{RosyBrown1}{rgb}{1.00,0.76,0.76}
\definecolor{RosyBrown2}{rgb}{0.93,0.71,0.71}
\definecolor{RosyBrown3}{rgb}{0.80,0.61,0.61}
\definecolor{RosyBrown4}{rgb}{0.55,0.41,0.41}
\definecolor{RosyBrown}{rgb}{0.74,0.56,0.56}
\definecolor{RoyalBlue1}{rgb}{0.28,0.46,1.00}
\definecolor{RoyalBlue2}{rgb}{0.26,0.43,0.93}
\definecolor{RoyalBlue3}{rgb}{0.23,0.37,0.80}
\definecolor{RoyalBlue4}{rgb}{0.15,0.25,0.55}
\definecolor{RoyalBlue}{rgb}{0.25,0.41,0.88}
\definecolor{SaddleBrown}{rgb}{0.55,0.27,0.07}
\definecolor{SandyBrown}{rgb}{0.96,0.64,0.38}
\definecolor{SeaGreen1}{rgb}{0.33,1.00,0.62}
\definecolor{SeaGreen2}{rgb}{0.31,0.93,0.58}
\definecolor{SeaGreen3}{rgb}{0.26,0.80,0.50}
\definecolor{SeaGreen4}{rgb}{0.18,0.55,0.34}
\definecolor{SeaGreen}{rgb}{0.18,0.55,0.34}
\definecolor{SkyBlue1}{rgb}{0.53,0.81,1.00}
\definecolor{SkyBlue2}{rgb}{0.49,0.75,0.93}
\definecolor{SkyBlue3}{rgb}{0.42,0.65,0.80}
\definecolor{SkyBlue4}{rgb}{0.29,0.44,0.55}
\definecolor{SkyBlue}{rgb}{0.53,0.81,0.92}
\definecolor{SlateBlue1}{rgb}{0.51,0.44,1.00}
\definecolor{SlateBlue2}{rgb}{0.48,0.40,0.93}
\definecolor{SlateBlue3}{rgb}{0.41,0.35,0.80}
\definecolor{SlateBlue4}{rgb}{0.28,0.24,0.55}
\definecolor{SlateBlue}{rgb}{0.42,0.35,0.80}
\definecolor{SlateGray1}{rgb}{0.78,0.89,1.00}
\definecolor{SlateGray2}{rgb}{0.73,0.83,0.93}
\definecolor{SlateGray3}{rgb}{0.62,0.71,0.80}
\definecolor{SlateGray4}{rgb}{0.42,0.48,0.55}
\definecolor{SlateGray}{rgb}{0.44,0.50,0.56}
\definecolor{SlateGrey}{rgb}{0.44,0.50,0.56}
\definecolor{SpringGreen1}{rgb}{0.00,1.00,0.50}
\definecolor{SpringGreen2}{rgb}{0.00,0.93,0.46}
\definecolor{SpringGreen3}{rgb}{0.00,0.80,0.40}
\definecolor{SpringGreen4}{rgb}{0.00,0.55,0.27}
\definecolor{SpringGreen}{rgb}{0.00,1.00,0.50}
\definecolor{SteelBlue1}{rgb}{0.39,0.72,1.00}
\definecolor{SteelBlue2}{rgb}{0.36,0.67,0.93}
\definecolor{SteelBlue3}{rgb}{0.31,0.58,0.80}
\definecolor{SteelBlue4}{rgb}{0.21,0.39,0.55}
\definecolor{SteelBlue}{rgb}{0.27,0.51,0.71}
\definecolor{VioletRed1}{rgb}{1.00,0.24,0.59}
\definecolor{VioletRed2}{rgb}{0.93,0.23,0.55}
\definecolor{VioletRed3}{rgb}{0.80,0.20,0.47}
\definecolor{VioletRed4}{rgb}{0.55,0.13,0.32}
\definecolor{VioletRed}{rgb}{0.82,0.13,0.56}
\definecolor{WhiteSmoke}{rgb}{0.96,0.96,0.96}
\definecolor{YellowGreen}{rgb}{0.60,0.80,0.20}
\definecolor{aliceblue}{rgb}{0.94,0.97,1.00}
\definecolor{antiquewhite}{rgb}{0.98,0.92,0.84}
\definecolor{aquamarine1}{rgb}{0.50,1.00,0.83}
\definecolor{aquamarine2}{rgb}{0.46,0.93,0.78}
\definecolor{aquamarine3}{rgb}{0.40,0.80,0.67}
\definecolor{aquamarine4}{rgb}{0.27,0.55,0.45}
\definecolor{aquamarine}{rgb}{0.50,1.00,0.83}
\definecolor{azure1}{rgb}{0.94,1.00,1.00}
\definecolor{azure2}{rgb}{0.88,0.93,0.93}
\definecolor{azure3}{rgb}{0.76,0.80,0.80}
\definecolor{azure4}{rgb}{0.51,0.55,0.55}
\definecolor{azure}{rgb}{0.94,1.00,1.00}
\definecolor{beige}{rgb}{0.96,0.96,0.86}
\definecolor{bisque1}{rgb}{1.00,0.89,0.77}
\definecolor{bisque2}{rgb}{0.93,0.84,0.72}
\definecolor{bisque3}{rgb}{0.80,0.72,0.62}
\definecolor{bisque4}{rgb}{0.55,0.49,0.42}
\definecolor{bisque}{rgb}{1.00,0.89,0.77}
\definecolor{black}{rgb}{0.00,0.00,0.00}
\definecolor{blanchedalmond}{rgb}{1.00,0.92,0.80}
\definecolor{blue1}{rgb}{0.00,0.00,1.00}
\definecolor{blue2}{rgb}{0.00,0.00,0.93}
\definecolor{blue3}{rgb}{0.00,0.00,0.80}
\definecolor{blue4}{rgb}{0.00,0.00,0.55}
\definecolor{blueviolet}{rgb}{0.54,0.17,0.89}
\definecolor{blue}{rgb}{0.00,0.00,1.00}
\definecolor{brown1}{rgb}{1.00,0.25,0.25}
\definecolor{brown2}{rgb}{0.93,0.23,0.23}
\definecolor{brown3}{rgb}{0.80,0.20,0.20}
\definecolor{brown4}{rgb}{0.55,0.14,0.14}
\definecolor{brown}{rgb}{0.65,0.16,0.16}
\definecolor{burlywood1}{rgb}{1.00,0.83,0.61}
\definecolor{burlywood2}{rgb}{0.93,0.77,0.57}
\definecolor{burlywood3}{rgb}{0.80,0.67,0.49}
\definecolor{burlywood4}{rgb}{0.55,0.45,0.33}
\definecolor{burlywood}{rgb}{0.87,0.72,0.53}
\definecolor{cadetblue}{rgb}{0.37,0.62,0.63}
\definecolor{chartreuse1}{rgb}{0.50,1.00,0.00}
\definecolor{chartreuse2}{rgb}{0.46,0.93,0.00}
\definecolor{chartreuse3}{rgb}{0.40,0.80,0.00}
\definecolor{chartreuse4}{rgb}{0.27,0.55,0.00}
\definecolor{chartreuse}{rgb}{0.50,1.00,0.00}
\definecolor{chocolate1}{rgb}{1.00,0.50,0.14}
\definecolor{chocolate2}{rgb}{0.93,0.46,0.13}
\definecolor{chocolate3}{rgb}{0.80,0.40,0.11}
\definecolor{chocolate4}{rgb}{0.55,0.27,0.07}
\definecolor{chocolate}{rgb}{0.82,0.41,0.12}
\definecolor{coral1}{rgb}{1.00,0.45,0.34}
\definecolor{coral2}{rgb}{0.93,0.42,0.31}
\definecolor{coral3}{rgb}{0.80,0.36,0.27}
\definecolor{coral4}{rgb}{0.55,0.24,0.18}
\definecolor{coral}{rgb}{1.00,0.50,0.31}
\definecolor{cornflowerblue}{rgb}{0.39,0.58,0.93}
\definecolor{cornsilk1}{rgb}{1.00,0.97,0.86}
\definecolor{cornsilk2}{rgb}{0.93,0.91,0.80}
\definecolor{cornsilk3}{rgb}{0.80,0.78,0.69}
\definecolor{cornsilk4}{rgb}{0.55,0.53,0.47}
\definecolor{cornsilk}{rgb}{1.00,0.97,0.86}
\definecolor{cyan1}{rgb}{0.00,1.00,1.00}
\definecolor{cyan2}{rgb}{0.00,0.93,0.93}
\definecolor{cyan3}{rgb}{0.00,0.80,0.80}
\definecolor{cyan4}{rgb}{0.00,0.55,0.55}
\definecolor{cyan}{rgb}{0.00,1.00,1.00}
\definecolor{darkblue}{rgb}{0.00,0.00,0.55}
\definecolor{darkcyan}{rgb}{0.00,0.55,0.55}
\definecolor{darkgoldenrod}{rgb}{0.72,0.53,0.04}
\definecolor{darkgray}{rgb}{0.66,0.66,0.66}
\definecolor{darkgreen}{rgb}{0.00,0.39,0.00}
\definecolor{darkgrey}{rgb}{0.66,0.66,0.66}
\definecolor{darkkhaki}{rgb}{0.74,0.72,0.42}
\definecolor{darkmagenta}{rgb}{0.55,0.00,0.55}
\definecolor{darkolive}{rgb}{0.33,0.42,0.18}
\definecolor{darkorange}{rgb}{1.00,0.55,0.00}
\definecolor{darkorchid}{rgb}{0.60,0.20,0.80}
\definecolor{darkred}{rgb}{0.55,0.00,0.00}
\definecolor{darksalmon}{rgb}{0.91,0.59,0.48}
\definecolor{darksea}{rgb}{0.56,0.74,0.56}
\definecolor{darkslate}{rgb}{0.18,0.31,0.31}
\definecolor{darkslate}{rgb}{0.18,0.31,0.31}
\definecolor{darkslate}{rgb}{0.28,0.24,0.55}
\definecolor{darkturquoise}{rgb}{0.00,0.81,0.82}
\definecolor{darkviolet}{rgb}{0.58,0.00,0.83}
\definecolor{deeppink}{rgb}{1.00,0.08,0.58}
\definecolor{deepsky}{rgb}{0.00,0.75,1.00}
\definecolor{dimgray}{rgb}{0.41,0.41,0.41}
\definecolor{dimgrey}{rgb}{0.41,0.41,0.41}
\definecolor{dodgerblue}{rgb}{0.12,0.56,1.00}
\definecolor{firebrick1}{rgb}{1.00,0.19,0.19}
\definecolor{firebrick2}{rgb}{0.93,0.17,0.17}
\definecolor{firebrick3}{rgb}{0.80,0.15,0.15}
\definecolor{firebrick4}{rgb}{0.55,0.10,0.10}
\definecolor{firebrick}{rgb}{0.70,0.13,0.13}
\definecolor{floralwhite}{rgb}{1.00,0.98,0.94}
\definecolor{forestgreen}{rgb}{0.13,0.55,0.13}
\definecolor{gainsboro}{rgb}{0.86,0.86,0.86}
\definecolor{ghostwhite}{rgb}{0.97,0.97,1.00}
\definecolor{gold1}{rgb}{1.00,0.84,0.00}
\definecolor{gold2}{rgb}{0.93,0.79,0.00}
\definecolor{gold3}{rgb}{0.80,0.68,0.00}
\definecolor{gold4}{rgb}{0.55,0.46,0.00}
\definecolor{goldenrod1}{rgb}{1.00,0.76,0.15}
\definecolor{goldenrod2}{rgb}{0.93,0.71,0.13}
\definecolor{goldenrod3}{rgb}{0.80,0.61,0.11}
\definecolor{goldenrod4}{rgb}{0.55,0.41,0.08}
\definecolor{goldenrod}{rgb}{0.85,0.65,0.13}
\definecolor{gold}{rgb}{1.00,0.84,0.00}
\definecolor{gray0}{rgb}{0.00,0.00,0.00}
\definecolor{gray100}{rgb}{1.00,1.00,1.00}
\definecolor{gray10}{rgb}{0.10,0.10,0.10}
\definecolor{gray11}{rgb}{0.11,0.11,0.11}
\definecolor{gray12}{rgb}{0.12,0.12,0.12}
\definecolor{gray13}{rgb}{0.13,0.13,0.13}
\definecolor{gray14}{rgb}{0.14,0.14,0.14}
\definecolor{gray15}{rgb}{0.15,0.15,0.15}
\definecolor{gray16}{rgb}{0.16,0.16,0.16}
\definecolor{gray17}{rgb}{0.17,0.17,0.17}
\definecolor{gray18}{rgb}{0.18,0.18,0.18}
\definecolor{gray19}{rgb}{0.19,0.19,0.19}
\definecolor{gray1}{rgb}{0.01,0.01,0.01}
\definecolor{gray20}{rgb}{0.20,0.20,0.20}
\definecolor{gray21}{rgb}{0.21,0.21,0.21}
\definecolor{gray22}{rgb}{0.22,0.22,0.22}
\definecolor{gray23}{rgb}{0.23,0.23,0.23}
\definecolor{gray24}{rgb}{0.24,0.24,0.24}
\definecolor{gray25}{rgb}{0.25,0.25,0.25}
\definecolor{gray26}{rgb}{0.26,0.26,0.26}
\definecolor{gray27}{rgb}{0.27,0.27,0.27}
\definecolor{gray28}{rgb}{0.28,0.28,0.28}
\definecolor{gray29}{rgb}{0.29,0.29,0.29}
\definecolor{gray2}{rgb}{0.02,0.02,0.02}
\definecolor{gray30}{rgb}{0.30,0.30,0.30}
\definecolor{gray31}{rgb}{0.31,0.31,0.31}
\definecolor{gray32}{rgb}{0.32,0.32,0.32}
\definecolor{gray33}{rgb}{0.33,0.33,0.33}
\definecolor{gray34}{rgb}{0.34,0.34,0.34}
\definecolor{gray35}{rgb}{0.35,0.35,0.35}
\definecolor{gray36}{rgb}{0.36,0.36,0.36}
\definecolor{gray37}{rgb}{0.37,0.37,0.37}
\definecolor{gray38}{rgb}{0.38,0.38,0.38}
\definecolor{gray39}{rgb}{0.39,0.39,0.39}
\definecolor{gray3}{rgb}{0.03,0.03,0.03}
\definecolor{gray40}{rgb}{0.40,0.40,0.40}
\definecolor{gray41}{rgb}{0.41,0.41,0.41}
\definecolor{gray42}{rgb}{0.42,0.42,0.42}
\definecolor{gray43}{rgb}{0.43,0.43,0.43}
\definecolor{gray44}{rgb}{0.44,0.44,0.44}
\definecolor{gray45}{rgb}{0.45,0.45,0.45}
\definecolor{gray46}{rgb}{0.46,0.46,0.46}
\definecolor{gray47}{rgb}{0.47,0.47,0.47}
\definecolor{gray48}{rgb}{0.48,0.48,0.48}
\definecolor{gray49}{rgb}{0.49,0.49,0.49}
\definecolor{gray4}{rgb}{0.04,0.04,0.04}
\definecolor{gray50}{rgb}{0.50,0.50,0.50}
\definecolor{gray51}{rgb}{0.51,0.51,0.51}
\definecolor{gray52}{rgb}{0.52,0.52,0.52}
\definecolor{gray53}{rgb}{0.53,0.53,0.53}
\definecolor{gray54}{rgb}{0.54,0.54,0.54}
\definecolor{gray55}{rgb}{0.55,0.55,0.55}
\definecolor{gray56}{rgb}{0.56,0.56,0.56}
\definecolor{gray57}{rgb}{0.57,0.57,0.57}
\definecolor{gray58}{rgb}{0.58,0.58,0.58}
\definecolor{gray59}{rgb}{0.59,0.59,0.59}
\definecolor{gray5}{rgb}{0.05,0.05,0.05}
\definecolor{gray60}{rgb}{0.60,0.60,0.60}
\definecolor{gray61}{rgb}{0.61,0.61,0.61}
\definecolor{gray62}{rgb}{0.62,0.62,0.62}
\definecolor{gray63}{rgb}{0.63,0.63,0.63}
\definecolor{gray64}{rgb}{0.64,0.64,0.64}
\definecolor{gray65}{rgb}{0.65,0.65,0.65}
\definecolor{gray66}{rgb}{0.66,0.66,0.66}
\definecolor{gray67}{rgb}{0.67,0.67,0.67}
\definecolor{gray68}{rgb}{0.68,0.68,0.68}
\definecolor{gray69}{rgb}{0.69,0.69,0.69}
\definecolor{gray6}{rgb}{0.06,0.06,0.06}
\definecolor{gray70}{rgb}{0.70,0.70,0.70}
\definecolor{gray71}{rgb}{0.71,0.71,0.71}
\definecolor{gray72}{rgb}{0.72,0.72,0.72}
\definecolor{gray73}{rgb}{0.73,0.73,0.73}
\definecolor{gray74}{rgb}{0.74,0.74,0.74}
\definecolor{gray75}{rgb}{0.75,0.75,0.75}
\definecolor{gray76}{rgb}{0.76,0.76,0.76}
\definecolor{gray77}{rgb}{0.77,0.77,0.77}
\definecolor{gray78}{rgb}{0.78,0.78,0.78}
\definecolor{gray79}{rgb}{0.79,0.79,0.79}
\definecolor{gray7}{rgb}{0.07,0.07,0.07}
\definecolor{gray80}{rgb}{0.80,0.80,0.80}
\definecolor{gray81}{rgb}{0.81,0.81,0.81}
\definecolor{gray82}{rgb}{0.82,0.82,0.82}
\definecolor{gray83}{rgb}{0.83,0.83,0.83}
\definecolor{gray84}{rgb}{0.84,0.84,0.84}
\definecolor{gray85}{rgb}{0.85,0.85,0.85}
\definecolor{gray86}{rgb}{0.86,0.86,0.86}
\definecolor{gray87}{rgb}{0.87,0.87,0.87}
\definecolor{gray88}{rgb}{0.88,0.88,0.88}
\definecolor{gray89}{rgb}{0.89,0.89,0.89}
\definecolor{gray8}{rgb}{0.08,0.08,0.08}
\definecolor{gray90}{rgb}{0.90,0.90,0.90}
\definecolor{gray91}{rgb}{0.91,0.91,0.91}
\definecolor{gray92}{rgb}{0.92,0.92,0.92}
\definecolor{gray93}{rgb}{0.93,0.93,0.93}
\definecolor{gray94}{rgb}{0.94,0.94,0.94}
\definecolor{gray95}{rgb}{0.95,0.95,0.95}
\definecolor{gray96}{rgb}{0.96,0.96,0.96}
\definecolor{gray97}{rgb}{0.97,0.97,0.97}
\definecolor{gray98}{rgb}{0.98,0.98,0.98}
\definecolor{gray99}{rgb}{0.99,0.99,0.99}
\definecolor{gray9}{rgb}{0.09,0.09,0.09}
\definecolor{gray}{rgb}{0.75,0.75,0.75}
\definecolor{green1}{rgb}{0.00,1.00,0.00}
\definecolor{green2}{rgb}{0.00,0.93,0.00}
\definecolor{green3}{rgb}{0.00,0.80,0.00}
\definecolor{green4}{rgb}{0.00,0.55,0.00}
\definecolor{greenyellow}{rgb}{0.68,1.00,0.18}
\definecolor{green}{rgb}{0.00,1.00,0.00}
\definecolor{grey0}{rgb}{0.00,0.00,0.00}
\definecolor{grey100}{rgb}{1.00,1.00,1.00}
\definecolor{grey10}{rgb}{0.10,0.10,0.10}
\definecolor{grey11}{rgb}{0.11,0.11,0.11}
\definecolor{grey12}{rgb}{0.12,0.12,0.12}
\definecolor{grey13}{rgb}{0.13,0.13,0.13}
\definecolor{grey14}{rgb}{0.14,0.14,0.14}
\definecolor{grey15}{rgb}{0.15,0.15,0.15}
\definecolor{grey16}{rgb}{0.16,0.16,0.16}
\definecolor{grey17}{rgb}{0.17,0.17,0.17}
\definecolor{grey18}{rgb}{0.18,0.18,0.18}
\definecolor{grey19}{rgb}{0.19,0.19,0.19}
\definecolor{grey1}{rgb}{0.01,0.01,0.01}
\definecolor{grey20}{rgb}{0.20,0.20,0.20}
\definecolor{grey21}{rgb}{0.21,0.21,0.21}
\definecolor{grey22}{rgb}{0.22,0.22,0.22}
\definecolor{grey23}{rgb}{0.23,0.23,0.23}
\definecolor{grey24}{rgb}{0.24,0.24,0.24}
\definecolor{grey25}{rgb}{0.25,0.25,0.25}
\definecolor{grey26}{rgb}{0.26,0.26,0.26}
\definecolor{grey27}{rgb}{0.27,0.27,0.27}
\definecolor{grey28}{rgb}{0.28,0.28,0.28}
\definecolor{grey29}{rgb}{0.29,0.29,0.29}
\definecolor{grey2}{rgb}{0.02,0.02,0.02}
\definecolor{grey30}{rgb}{0.30,0.30,0.30}
\definecolor{grey31}{rgb}{0.31,0.31,0.31}
\definecolor{grey32}{rgb}{0.32,0.32,0.32}
\definecolor{grey33}{rgb}{0.33,0.33,0.33}
\definecolor{grey34}{rgb}{0.34,0.34,0.34}
\definecolor{grey35}{rgb}{0.35,0.35,0.35}
\definecolor{grey36}{rgb}{0.36,0.36,0.36}
\definecolor{grey37}{rgb}{0.37,0.37,0.37}
\definecolor{grey38}{rgb}{0.38,0.38,0.38}
\definecolor{grey39}{rgb}{0.39,0.39,0.39}
\definecolor{grey3}{rgb}{0.03,0.03,0.03}
\definecolor{grey40}{rgb}{0.40,0.40,0.40}
\definecolor{grey41}{rgb}{0.41,0.41,0.41}
\definecolor{grey42}{rgb}{0.42,0.42,0.42}
\definecolor{grey43}{rgb}{0.43,0.43,0.43}
\definecolor{grey44}{rgb}{0.44,0.44,0.44}
\definecolor{grey45}{rgb}{0.45,0.45,0.45}
\definecolor{grey46}{rgb}{0.46,0.46,0.46}
\definecolor{grey47}{rgb}{0.47,0.47,0.47}
\definecolor{grey48}{rgb}{0.48,0.48,0.48}
\definecolor{grey49}{rgb}{0.49,0.49,0.49}
\definecolor{grey4}{rgb}{0.04,0.04,0.04}
\definecolor{grey50}{rgb}{0.50,0.50,0.50}
\definecolor{grey51}{rgb}{0.51,0.51,0.51}
\definecolor{grey52}{rgb}{0.52,0.52,0.52}
\definecolor{grey53}{rgb}{0.53,0.53,0.53}
\definecolor{grey54}{rgb}{0.54,0.54,0.54}
\definecolor{grey55}{rgb}{0.55,0.55,0.55}
\definecolor{grey56}{rgb}{0.56,0.56,0.56}
\definecolor{grey57}{rgb}{0.57,0.57,0.57}
\definecolor{grey58}{rgb}{0.58,0.58,0.58}
\definecolor{grey59}{rgb}{0.59,0.59,0.59}
\definecolor{grey5}{rgb}{0.05,0.05,0.05}
\definecolor{grey60}{rgb}{0.60,0.60,0.60}
\definecolor{grey61}{rgb}{0.61,0.61,0.61}
\definecolor{grey62}{rgb}{0.62,0.62,0.62}
\definecolor{grey63}{rgb}{0.63,0.63,0.63}
\definecolor{grey64}{rgb}{0.64,0.64,0.64}
\definecolor{grey65}{rgb}{0.65,0.65,0.65}
\definecolor{grey66}{rgb}{0.66,0.66,0.66}
\definecolor{grey67}{rgb}{0.67,0.67,0.67}
\definecolor{grey68}{rgb}{0.68,0.68,0.68}
\definecolor{grey69}{rgb}{0.69,0.69,0.69}
\definecolor{grey6}{rgb}{0.06,0.06,0.06}
\definecolor{grey70}{rgb}{0.70,0.70,0.70}
\definecolor{grey71}{rgb}{0.71,0.71,0.71}
\definecolor{grey72}{rgb}{0.72,0.72,0.72}
\definecolor{grey73}{rgb}{0.73,0.73,0.73}
\definecolor{grey74}{rgb}{0.74,0.74,0.74}
\definecolor{grey75}{rgb}{0.75,0.75,0.75}
\definecolor{grey76}{rgb}{0.76,0.76,0.76}
\definecolor{grey77}{rgb}{0.77,0.77,0.77}
\definecolor{grey78}{rgb}{0.78,0.78,0.78}
\definecolor{grey79}{rgb}{0.79,0.79,0.79}
\definecolor{grey7}{rgb}{0.07,0.07,0.07}
\definecolor{grey80}{rgb}{0.80,0.80,0.80}
\definecolor{grey81}{rgb}{0.81,0.81,0.81}
\definecolor{grey82}{rgb}{0.82,0.82,0.82}
\definecolor{grey83}{rgb}{0.83,0.83,0.83}
\definecolor{grey84}{rgb}{0.84,0.84,0.84}
\definecolor{grey85}{rgb}{0.85,0.85,0.85}
\definecolor{grey86}{rgb}{0.86,0.86,0.86}
\definecolor{grey87}{rgb}{0.87,0.87,0.87}
\definecolor{grey88}{rgb}{0.88,0.88,0.88}
\definecolor{grey89}{rgb}{0.89,0.89,0.89}
\definecolor{grey8}{rgb}{0.08,0.08,0.08}
\definecolor{grey90}{rgb}{0.90,0.90,0.90}
\definecolor{grey91}{rgb}{0.91,0.91,0.91}
\definecolor{grey92}{rgb}{0.92,0.92,0.92}
\definecolor{grey93}{rgb}{0.93,0.93,0.93}
\definecolor{grey94}{rgb}{0.94,0.94,0.94}
\definecolor{grey95}{rgb}{0.95,0.95,0.95}
\definecolor{grey96}{rgb}{0.96,0.96,0.96}
\definecolor{grey97}{rgb}{0.97,0.97,0.97}
\definecolor{grey98}{rgb}{0.98,0.98,0.98}
\definecolor{grey99}{rgb}{0.99,0.99,0.99}
\definecolor{grey9}{rgb}{0.09,0.09,0.09}
\definecolor{grey}{rgb}{0.75,0.75,0.75}
\definecolor{honeydew1}{rgb}{0.94,1.00,0.94}
\definecolor{honeydew2}{rgb}{0.88,0.93,0.88}
\definecolor{honeydew3}{rgb}{0.76,0.80,0.76}
\definecolor{honeydew4}{rgb}{0.51,0.55,0.51}
\definecolor{honeydew}{rgb}{0.94,1.00,0.94}
\definecolor{hotpink}{rgb}{1.00,0.41,0.71}
\definecolor{indianred}{rgb}{0.80,0.36,0.36}
\definecolor{ivory1}{rgb}{1.00,1.00,0.94}
\definecolor{ivory2}{rgb}{0.93,0.93,0.88}
\definecolor{ivory3}{rgb}{0.80,0.80,0.76}
\definecolor{ivory4}{rgb}{0.55,0.55,0.51}
\definecolor{ivory}{rgb}{1.00,1.00,0.94}
\definecolor{khaki1}{rgb}{1.00,0.96,0.56}
\definecolor{khaki2}{rgb}{0.93,0.90,0.52}
\definecolor{khaki3}{rgb}{0.80,0.78,0.45}
\definecolor{khaki4}{rgb}{0.55,0.53,0.31}
\definecolor{khaki}{rgb}{0.94,0.90,0.55}
\definecolor{lavenderblush}{rgb}{1.00,0.94,0.96}
\definecolor{lavender}{rgb}{0.90,0.90,0.98}
\definecolor{lawngreen}{rgb}{0.49,0.99,0.00}
\definecolor{lemonchiffon}{rgb}{1.00,0.98,0.80}
\definecolor{lightblue}{rgb}{0.68,0.85,0.90}
\definecolor{lightcoral}{rgb}{0.94,0.50,0.50}
\definecolor{lightcyan}{rgb}{0.88,1.00,1.00}
\definecolor{lightgoldenrod}{rgb}{0.93,0.87,0.51}
\definecolor{lightgoldenrod}{rgb}{0.98,0.98,0.82}
\definecolor{lightgray}{rgb}{0.83,0.83,0.83}
\definecolor{lightgreen}{rgb}{0.56,0.93,0.56}
\definecolor{lightgrey}{rgb}{0.83,0.83,0.83}
\definecolor{lightpink}{rgb}{1.00,0.71,0.76}
\definecolor{lightsalmon}{rgb}{1.00,0.63,0.48}
\definecolor{lightsea}{rgb}{0.13,0.70,0.67}
\definecolor{lightsky}{rgb}{0.53,0.81,0.98}
\definecolor{lightslate}{rgb}{0.47,0.53,0.60}
\definecolor{lightslate}{rgb}{0.47,0.53,0.60}
\definecolor{lightslate}{rgb}{0.52,0.44,1.00}
\definecolor{lightsteel}{rgb}{0.69,0.77,0.87}
\definecolor{lightyellow}{rgb}{1.00,1.00,0.88}
\definecolor{limegreen}{rgb}{0.20,0.80,0.20}
\definecolor{linen}{rgb}{0.98,0.94,0.90}
\definecolor{magenta1}{rgb}{1.00,0.00,1.00}
\definecolor{magenta2}{rgb}{0.93,0.00,0.93}
\definecolor{magenta3}{rgb}{0.80,0.00,0.80}
\definecolor{magenta4}{rgb}{0.55,0.00,0.55}
\definecolor{magenta}{rgb}{1.00,0.00,1.00}
\definecolor{maroon1}{rgb}{1.00,0.20,0.70}
\definecolor{maroon2}{rgb}{0.93,0.19,0.65}
\definecolor{maroon3}{rgb}{0.80,0.16,0.56}
\definecolor{maroon4}{rgb}{0.55,0.11,0.38}
\definecolor{maroon}{rgb}{0.69,0.19,0.38}
\definecolor{mediumaquamarine}{rgb}{0.40,0.80,0.67}
\definecolor{mediumblue}{rgb}{0.00,0.00,0.80}
\definecolor{mediumorchid}{rgb}{0.73,0.33,0.83}
\definecolor{mediumpurple}{rgb}{0.58,0.44,0.86}
\definecolor{mediumsea}{rgb}{0.24,0.70,0.44}
\definecolor{mediumslate}{rgb}{0.48,0.41,0.93}
\definecolor{mediumspring}{rgb}{0.00,0.98,0.60}
\definecolor{mediumturquoise}{rgb}{0.28,0.82,0.80}
\definecolor{mediumviolet}{rgb}{0.78,0.08,0.52}
\definecolor{midnightblue}{rgb}{0.10,0.10,0.44}
\definecolor{mintcream}{rgb}{0.96,1.00,0.98}
\definecolor{mistyrose}{rgb}{1.00,0.89,0.88}
\definecolor{moccasin}{rgb}{1.00,0.89,0.71}
\definecolor{navajowhite}{rgb}{1.00,0.87,0.68}
\definecolor{navyblue}{rgb}{0.00,0.00,0.50}
\definecolor{navy}{rgb}{0.00,0.00,0.50}
\definecolor{oldlace}{rgb}{0.99,0.96,0.90}
\definecolor{olivedrab}{rgb}{0.42,0.56,0.14}
\definecolor{orange1}{rgb}{1.00,0.65,0.00}
\definecolor{orange2}{rgb}{0.93,0.60,0.00}
\definecolor{orange3}{rgb}{0.80,0.52,0.00}
\definecolor{orange4}{rgb}{0.55,0.35,0.00}
\definecolor{orangered}{rgb}{1.00,0.27,0.00}
\definecolor{orange}{rgb}{1.00,0.65,0.00}
\definecolor{orchid1}{rgb}{1.00,0.51,0.98}
\definecolor{orchid2}{rgb}{0.93,0.48,0.91}
\definecolor{orchid3}{rgb}{0.80,0.41,0.79}
\definecolor{orchid4}{rgb}{0.55,0.28,0.54}
\definecolor{orchid}{rgb}{0.85,0.44,0.84}
\definecolor{palegoldenrod}{rgb}{0.93,0.91,0.67}
\definecolor{palegreen}{rgb}{0.60,0.98,0.60}
\definecolor{paleturquoise}{rgb}{0.69,0.93,0.93}
\definecolor{paleviolet}{rgb}{0.86,0.44,0.58}
\definecolor{papayawhip}{rgb}{1.00,0.94,0.84}
\definecolor{peachpuff}{rgb}{1.00,0.85,0.73}
\definecolor{peru}{rgb}{0.80,0.52,0.25}
\definecolor{pink1}{rgb}{1.00,0.71,0.77}
\definecolor{pink2}{rgb}{0.93,0.66,0.72}
\definecolor{pink3}{rgb}{0.80,0.57,0.62}
\definecolor{pink4}{rgb}{0.55,0.39,0.42}
\definecolor{pink}{rgb}{1.00,0.75,0.80}
\definecolor{plum1}{rgb}{1.00,0.73,1.00}
\definecolor{plum2}{rgb}{0.93,0.68,0.93}
\definecolor{plum3}{rgb}{0.80,0.59,0.80}
\definecolor{plum4}{rgb}{0.55,0.40,0.55}
\definecolor{plum}{rgb}{0.87,0.63,0.87}
\definecolor{powderblue}{rgb}{0.69,0.88,0.90}
\definecolor{purple1}{rgb}{0.61,0.19,1.00}
\definecolor{purple2}{rgb}{0.57,0.17,0.93}
\definecolor{purple3}{rgb}{0.49,0.15,0.80}
\definecolor{purple4}{rgb}{0.33,0.10,0.55}
\definecolor{purple}{rgb}{0.63,0.13,0.94}
\definecolor{red1}{rgb}{1.00,0.00,0.00}
\definecolor{red2}{rgb}{0.93,0.00,0.00}
\definecolor{red3}{rgb}{0.80,0.00,0.00}
\definecolor{red4}{rgb}{0.55,0.00,0.00}
\definecolor{red}{rgb}{1.00,0.00,0.00}
\definecolor{rosybrown}{rgb}{0.74,0.56,0.56}
\definecolor{royalblue}{rgb}{0.25,0.41,0.88}
\definecolor{saddlebrown}{rgb}{0.55,0.27,0.07}
\definecolor{salmon1}{rgb}{1.00,0.55,0.41}
\definecolor{salmon2}{rgb}{0.93,0.51,0.38}
\definecolor{salmon3}{rgb}{0.80,0.44,0.33}
\definecolor{salmon4}{rgb}{0.55,0.30,0.22}
\definecolor{salmon}{rgb}{0.98,0.50,0.45}
\definecolor{sandybrown}{rgb}{0.96,0.64,0.38}
\definecolor{seagreen}{rgb}{0.18,0.55,0.34}
\definecolor{seashell1}{rgb}{1.00,0.96,0.93}
\definecolor{seashell2}{rgb}{0.93,0.90,0.87}
\definecolor{seashell3}{rgb}{0.80,0.77,0.75}
\definecolor{seashell4}{rgb}{0.55,0.53,0.51}
\definecolor{seashell}{rgb}{1.00,0.96,0.93}
\definecolor{sienna1}{rgb}{1.00,0.51,0.28}
\definecolor{sienna2}{rgb}{0.93,0.47,0.26}
\definecolor{sienna3}{rgb}{0.80,0.41,0.22}
\definecolor{sienna4}{rgb}{0.55,0.28,0.15}
\definecolor{sienna}{rgb}{0.63,0.32,0.18}
\definecolor{skyblue}{rgb}{0.53,0.81,0.92}
\definecolor{slateblue}{rgb}{0.42,0.35,0.80}
\definecolor{slategray}{rgb}{0.44,0.50,0.56}
\definecolor{slategrey}{rgb}{0.44,0.50,0.56}
\definecolor{snow1}{rgb}{1.00,0.98,0.98}
\definecolor{snow2}{rgb}{0.93,0.91,0.91}
\definecolor{snow3}{rgb}{0.80,0.79,0.79}
\definecolor{snow4}{rgb}{0.55,0.54,0.54}
\definecolor{snow}{rgb}{1.00,0.98,0.98}
\definecolor{springgreen}{rgb}{0.00,1.00,0.50}
\definecolor{steelblue}{rgb}{0.27,0.51,0.71}
\definecolor{tan1}{rgb}{1.00,0.65,0.31}
\definecolor{tan2}{rgb}{0.93,0.60,0.29}
\definecolor{tan3}{rgb}{0.80,0.52,0.25}
\definecolor{tan4}{rgb}{0.55,0.35,0.17}
\definecolor{tan}{rgb}{0.82,0.71,0.55}
\definecolor{thistle1}{rgb}{1.00,0.88,1.00}
\definecolor{thistle2}{rgb}{0.93,0.82,0.93}
\definecolor{thistle3}{rgb}{0.80,0.71,0.80}
\definecolor{thistle4}{rgb}{0.55,0.48,0.55}
\definecolor{thistle}{rgb}{0.85,0.75,0.85}
\definecolor{tomato1}{rgb}{1.00,0.39,0.28}
\definecolor{tomato2}{rgb}{0.93,0.36,0.26}
\definecolor{tomato3}{rgb}{0.80,0.31,0.22}
\definecolor{tomato4}{rgb}{0.55,0.21,0.15}
\definecolor{tomato}{rgb}{1.00,0.39,0.28}
\definecolor{turquoise1}{rgb}{0.00,0.96,1.00}
\definecolor{turquoise2}{rgb}{0.00,0.90,0.93}
\definecolor{turquoise3}{rgb}{0.00,0.77,0.80}
\definecolor{turquoise4}{rgb}{0.00,0.53,0.55}
\definecolor{turquoise}{rgb}{0.25,0.88,0.82}
\definecolor{violetred}{rgb}{0.82,0.13,0.56}
\definecolor{violet}{rgb}{0.93,0.51,0.93}
\definecolor{wheat1}{rgb}{1.00,0.91,0.73}
\definecolor{wheat2}{rgb}{0.93,0.85,0.68}
\definecolor{wheat3}{rgb}{0.80,0.73,0.59}
\definecolor{wheat4}{rgb}{0.55,0.49,0.40}
\definecolor{wheat}{rgb}{0.96,0.87,0.70}
\definecolor{whitesmoke}{rgb}{0.96,0.96,0.96}
\definecolor{white}{rgb}{1.00,1.00,1.00}
\definecolor{yellow1}{rgb}{1.00,1.00,0.00}
\definecolor{yellow2}{rgb}{0.93,0.93,0.00}
\definecolor{yellow3}{rgb}{0.80,0.80,0.00}
\definecolor{yellow4}{rgb}{0.55,0.55,0.00}
\definecolor{yellowgreen}{rgb}{0.60,0.80,0.20}
\definecolor{yellow}{rgb}{1.00,1.00,0.00}
\newcommand{\rules}[2]{\mbox{$\frac%
    {\mbox{\normalsize \rule[-5pt]{0pt}{14pt} $#1$}}
    {\mbox{\normalsize \rule[0pt]{0pt}{10pt}$#2$}}$}}
\theoremstyle{plain}
\newtheorem{theorem}{Theorem}[section]
\newtheorem{proposition}[theorem]{Proposition}
\newtheorem{lemma}[theorem]{Lemma}
\newtheorem{corollary}[theorem]{Corollary}
\theoremstyle{definition}
\newtheorem{definition}[theorem]{Definition}
\newtheorem{example}[theorem]{Example}
\newtheorem{remark}[theorem]{Remark}
\newenvironment{gray}{\begin{color}{gray}}{\end{color}}
\newcommand{\BA}{\mathsf{BA}}
\newcommand{\Stone}{\mathsf{Stone}}
\newcommand{\Cat}{\mathsf{Cat}}
\newcommand{\Set}{\mathsf{Set}}
\newcommand{\Pos}{\mathsf{Pos}}
\newcommand{\Posframed}{\S{\Pos}}
\newcommand{\DL}{\mathsf{DL}}
\newcommand{\Pri}{\mathsf{Pri}}
\newcommand{\twobb}{\mathbbm{2}}
\newcommand{\Two}{\mathbbm{2}}
\newcommand{\acal}{\mathcal{A}}
\newcommand{\bcal}{\mathcal{B}}
\newcommand{\ccal}{\mathcal{C}}
\newcommand{\kcal}{\mathcal{K}}
\newcommand{\xcal}{\mathcal{X}}
\newcommand{\op}{{\mathrm{op}}}
\newcommand{\co}{{\mathrm{co}}}
\newcommand{\id}{\mathrm{id}}
\newcommand{\pskip}{\medskip\noindent}
\newcommand{\eps}{\varepsilon}
\newcommand{\Id}{\mathrm{Id}}
\newcommand{\Rel}{\mathsf{Rel}}
\newcommand{\Span}{\mathsf{Span}}
\newcommand{\Cospan}{\mathsf{Cospan}}
\newcommand{\relf}{\mathit{Rel}}
\newcommand{\commaf}{\mathit{Comma}}
\newcommand{\cocommaf}{\mathit{Cocomma}}
\newcommand{\graphf}{\mathit{Graph}}
\newcommand{\collagef}{\mathit{Collage}}
\newcommand{\Onto}{\mathit{Onto}}
\newcommand{\Emb}{\mathit{Emb}}
\newcommand{\con}{\mathit{con}}
\newcommand{\tot}{\mathit{tot}}
\newcommand{\dashRel}{\mathsf{\textsf{-}Rel}}
\newcommand{\dashPre}{\mathsf{\textsf{-}Pre}}
\newcommand{\dashPos}{\mathsf{\textsf{-}Pos}}
\newcommand{\dashIpl}{\mathsf{\textsf{-}Ipl}}
\newcommand{\CDL}{\mathsf{CDL}}
\title{Stone Duality for Relations}
\author{Alexander Kurz, Andrew Moshier, Achim Jung}
\begin{document}
\date{December 8, 2020}
\maketitle

\begin{abstract} 
We show how Stone duality can be extended from maps to relations. This is achieved by working order-enriched and defining a relation from $A$ to $B$ as both an order-preserving function $A^\op\times B\to \Two$ and as a subobject of $A\times B$. We show that dual adjunctions and equivalences between regular categories, taken in a suitably order-enriched sense, extend to (framed bi)categories of relations.
\end{abstract}

\setcounter{tocdepth}{2}  
{
\small
\tableofcontents      
}

\newpage
\section{Introduction}

In this article we will extend Stone-type dualities  from maps to relations. 
We view relations $A\looparrowright B$ as generalising functions, not as generalising subsets. Accordingly, composition of relations $A\looparrowright B$ and $B\looparrowright C$ and the functorial embedding from maps $A\to B$ into relations $A\looparrowright B$ will play a major role. On the other hand, relations $R\subseteq A_1\times\ldots A_n$  as generalised subsets are outside of the scope of this paper. 

\medskip\noindent
Motivation stems, independently, from domain theory and from duality theory, as we will explain in more detail now.

\paragraph{Domain Theory.}
Starting from Scott~\cite{scott:outline}, domain theory is, at least in part, concerned with describing infinite data as well as continuous functions via finite approximants. This leads to Scott's algebraic domains and approximable maps, the latter being relations between the finite approximants of two domains that capture continuity of functions between the domains themselves. Smyth \cite{smyth92} continued the development of this idea by supposing the finite approximants play the role of propositions in a logic of properties of the domain elements. Abramsky \cite{abramsky:dtlf} investigated a similar idea in the context of SFP domains, providing analysis of a wide variety of domain constructions in terms of relations on the corresponding distributive lattices. Jung and S\"underhauf \cite{jung+sunderhauf} extended the techniques to general stably compact spaces and proximity lattices (distributive lattices equipped with a suitable ``way below'' relation). Kegelmann, Jung and Moshier \cite{multi-lingual} then extended the Jung-S\"underhauf duality to relations on the stably compact spaces. This permitted many constructions (products, coproducts, lifting, etc.) on stably compact spaces to be dealt with by Abramsky's logical form methods. Following up on Kegelmann et al, in a more  purely topological setting, \cite{moshier04} establishes a duality for compact Hausdorff spaces and proximity lattices that satisfy a simple strong form of distributivity.

\medskip\noindent
This duality for compact Hausdorff spaces can be derived from the duality of Boolean algebras and Stone spaces by a sequence of purely category theoretic constructions. In order to do this, one needs to work order-enriched and so the construction starts out from the duality of bounded distributive lattices and Priestley spaces (=ordered Stone spaces) and proceeds as follows.
\begin{itemize}
\item Extend the duality of distributive lattices and Priestley spaces from functions to relations.
\item Complete these relational categories by the (ordered) Karoubi envelope (=ordered splitting of idempotents), obtaining a duality for weakening relations of continuous spaces.
\item Restrict this relational duality to maps.
\end{itemize}
Each step in this construction is purely categorical and, therefore, preserves dual adjunctions. In fact, starting from the dual equivalence of distributive lattices and Priestley spaces we arrive at the dual equivalence of proximity lattices and Nachbin spaces (=ordered compact Hausdorff spaces).\footnote{Splitting of idempotents in relations and then restricting to maps gives the exact completion, see \cite[Sec.3]{carboni+vitale} for ordinary categories and the introduction of \cite{lack-exact-completion} for further references. \cite{bourke:thesis,garner+lack,bourke+garner} develop enriched generalisations of regularity and exactness.} 

\medskip\noindent
In this paper we concentrate on the first step, which consists of extending a duality of maps to a duality of relations.

\paragraph{Duality Theory.} For the applications we have in mind, we need that a relation $A\looparrowright B$ is both on the algebraic side and on the topological side a subobject of $A\times B$, or, in the ordered setting, an upward closed subobject of $A^\op\times B$. But since the dual of a subobject of the product is a quotient of the coproduct and not itself again a subobject of a product, this endeavor seems to be doomed to fail. One of the main points of this article is to show that in the order-enriched setting, for so-called weakening closed relations, it is possible to circumvent these problems by exploiting a duality of certain spans and cospans. 

\medskip\noindent Indeed, at the heart of the construction is the observation that in the order-enriched setting relations can be both tabulated as spans and co-tabulated as cospans. This will allow us to define the Stone dual of a relation $R$ as the cospan obtained from dualising the span tabulating $R$. The main result of this paper shows that this construction extends a given duality of maps to a duality for relations. 

\medskip\noindent
In order to formulate this result precisely we first review order-enriched category theory (Section 2) followed by a study of order-enriched spans and cospans (Section 3).  We then show how the extension from maps to relations works in the category of posets (Section 4). Building on this, we will be in a position to extend the duality of bounded distributive lattices and Priestley spaces to relations (Section 5). As it turns out, this result can be generalized to order-regular categories (Section 6), which do support a general duality theory of relations (Section 7). 

\medskip\noindent\textbf{In a nutshell,} the three technical observations at the heart of the paper are the following.

\medskip\noindent First, if we tabulate a relation $R\subseteq X\times Y$ in sets as a span $X\leftarrow R\rightarrow Y$ and then let $X\leftarrow R'\rightarrow Y$ be the pullback of the pushout of $R$, the two relations $R$ and $R'$ will in general not coincide. On the other hand, if we view $X\leftarrow R\rightarrow Y$ as a span of discrete posets and we let $X\leftarrow R'\rightarrow Y$ be the comma of the cocomma of $R$, then the two relations $R$ and $R'$ are equal. This will be reviewed in detail in Section~\ref{sec:span-cospan-duality}.

\medskip\noindent Second, if we 
\begin{itemize}
\item
tabulate a relation in finite sets as a span $X\leftarrow R\rightarrow Y$, 
\item dualise it to a cospan of Boolean algebras $\Two^X\rightarrow \Two^R\leftarrow\Two^Y$,
\item tabulate it via pullback as a Boolean relation $\Two^X\leftarrow R' \rightarrow\Two^R$,
\item dualise it to a cospan of sets $X\leftarrow \Two^{R'}\rightarrow Y$,
\item tabulate it via pullback as a relation $X\leftarrow R'' \rightarrow Y$,
\end{itemize}
then, in general, the double dual $R''$ will be different from  $R$. On the other hand, if we view $R$ as a relation of discrete posets and we repeat the same steps with the categories of posets and distributive lattices, replacing pullbacks by comma objects, the double dual $R''$ will coincide with the original relation $R$.

\pskip
Third, spans work well on both sides of the duality in order to inherit algebraic and topological structure, allowing us to extend the duality from finite posets and finite distributive lattices to Priestley spaces and distributive lattices (and other similar dualities).

\medskip\noindent\textbf{Examples} of dual relations arise from different questions including the following.
\begin{itemize}
\item Given a topological space equipped with an equivalence relation, preorder or partial order, what is the algebraic structure dual to the quotient of the topological space by its equivalence relation (or by its preorder or by its partial order)?
\item Given a non-deterministic computation formalised as a relation in a category of domains or topological spaces, what is its dual relation between preconditions and postconditions?
\item Given algebraic structure extended with relations, what is its topological dual?
\item In particular, given a sequent calculus formalised as a relation in a category of algebras, what is its dual semantics for which it is sound and complete?
\end{itemize}
Answers to some of these questions in concrete examples (Sections \ref{sec:exles-pos}, \ref{sec:exles-pridl}, \ref{sec:exles-ordreg}, \ref{sec:exles-adjunctions})  are meant to be read before going into the details of the technical developments.

\medskip\noindent\textbf{Contributions} of the paper include:
\begin{itemize}
\item Formula \eqref{eq:Twor-span} for computing the dual of a relation.
\item Example~\ref{exle:unit-interval} showing that, as a consequence of \eqref{eq:Twor-span},   the dual of the relation that quotients Cantor space to the unit interval is the way below relation on the algebra of clopens. 
\item Theorem~\ref{thm:dlpriduality} on the equivalence of Priestley and distributive lattice relations.
\item Theorem~\ref{thm:universalproperty}
on extending functors between concretely order-regular categories  from maps to relations
\item Theorem~\ref{thm:dualityrelations}  on extending equivalences of categories of maps to equivalences of categories of relations.
\item Theorem~\ref{thm:adjunctionrelations} on extending adjunctions of categories of maps to adjunctions of framed bicategories of relations.
\end{itemize}

\paragraph{Related Work.} 

We draw on a  range of previous work. From the point of view of \emph{domain theory} this paper is in the tradition of Abramsky's Domain Theory in Logical From \cite{abramsky:dtlf} and Smyth \cite{smyth92}. Both emphasize domains as systems of data that can be described by finitary (logical) means.  We bring this together with the tradition of domain theory as enriched category theory introduced by Smyth and Plotkin \cite{smyth-plotkin} and continued by eg \cite{wagner,Rutten96,Rutten98,bonsangue-etal,waszkiewicz,stubbe,hofmann}. 
We also rely on Kelly's monograph on \emph{enriched category theory} \cite{kelly} and work by Guitart~\cite{guitart} and Street~\cite{street:fibrations-yoneda,street:fibrations-bicategories} who investigated relations in category-enriched categories whereas we specialise to poset-enriched categories. 
The \emph{categorical theory of relation lifting} started with Barr~\cite{barr} who also showed that the relation lifting of a set-functor is functorial iff the functor preserves weak pullbacks (or exact squares). Work by Trnkova~\cite{trnkova}, Freyd and Scedrov~\cite{FreydS90}, Hermida and Jacobs \cite{hermida+jacobs,hermida}, and Moss~\cite{moss}  has also been influential.
Extending adjunctions (as opposed to equivalences) to categories of relations requires tools from \emph{higher category theory} with work by Grandis and Par\'e \cite{GP99,GP04,Grandis} and Shulman \cite{shulman} being particularly valuable.
In the field of \emph{ordered algebra}, work by Scott \cite{scott:outline,scott:data-types}, Goguen, Thatcher and Wright~\cite{goguen-etal} and, in particular, Bloom and Wright~\cite{p-varieties} and Kelly and Power~\cite{kelly+power} was important, as well as our own continuation \cite{ordered-algebras} which introduced order-regular categories. 
\emph{Weakening relation algebras} are studied by Jipsen and Galatos in \cite{jipsen,galatos+jipsen}.
Our paper is also part of  \emph{coalgebra}, in particular of the line of research extending set-based coalgebra to coalgebras over enriched categories initiated by Rutten~\cite{Rutten98} and Worrell~\cite{Worrell00}. In particular,  we take from \cite{relations-spans,relations-cospans} the insight, ultimately going back to Street~\cite{street:fibrations-bicategories}, that, in the order-enriched setting, relations can be both tabulated and co-tabulated. 
Last but not least, from the field of \emph{duality theory}, we rely on the classical results of Stone \cite{stone:dl} and Priestley \cite{priestley}, summarised in the monographs of Johnstone \cite{StoneSpaces} and Davey and Priestley \cite{davey+priestley}.

\section{Preliminaries on Ordered Category Theory}

We review some known material on order-enriched categories. Most important for us is that  weakening relations can be  both tabulated via spans and co-tabulated via cospans. This observation is pivotal for our duality of relations.

\subsection{Ordered Categories and Weighted Limits}\label{sec:weighted}

An  important aspect of ordered categories is that they offer a richer notion of limits. Of particular importance to us will be the ordered analogues of pullback, pushout and coequalizer, also known as comma object\footnote{The name ``comma object" stems from Lawvere's comma categories.},  co-comma object and co-inserter. Comma objects tabulate (and cocomma objects co-tabulate) relations. Coinserters take quotients wrt theories of inequations. 

\medskip\noindent
Throughout this paper,  $\Pos$ denotes the category of partially ordered sets (aka posets) and order-preserving (aka monotone) functions. 

\medskip\noindent
A \textbf{$\Pos$-category} $\ccal$ is a category in which the homsets are posets and where composition is monotone in both arguments. In other words, a $\Pos$-category is a category enriched over $\Pos$. 
A \textbf{$\Pos$-functor} is a functor that is \textbf{locally monotone}, that is, a functor that preserves the order on the homsets.

\pskip
If $\ccal$ is a $\Pos$-category, then $\ccal^\op$ denotes the $\Pos$-category which turns around the arrows and $\ccal^\co$ denotes the $\Pos$-category which turns around the order on the homsets.

\pskip
Since $\Pos$ is cartesian closed and complete and cocomplete we are in the framework studied in Kelly's monograph \cite{kelly}. If we want to emphasise this, we follow Kelly and prefix the notions with ``$\Pos$-'', but, still following Kelly, we also may drop the prefix if it is clear from the context. If we want to emphasise non enriched categories, we speak of ``ordinary'' categories, ``ordinary'' functors, etc.

\pskip
$\Pos$ is itself a $\Pos$-category. We write $$[A,B]$$ for the poset of maps $A\to B$ ordered pointwise. 

\pskip
Notions such as epi and mono carry over from ordinary category theory to $\Pos$-enriched category theory unchanged. But they are not always the most useful notions. For example, more important to us than injection is \textbf{embedding}, that is, a map $m:A\to B$ in $\Pos$ that is order-reflecting. If $m:A\to B$ is an embedding then $m$ is injective and $A$ inherits the order from $B$.

\begin{definition}
Let $\ccal$ be a $\Pos$-category. An arrow $m:A\to B$ is a \textbf{P-mono} if $\ccal(-,m):\ccal(X,A)\to\ccal(X,B)$ is an embedding. An arrow is a \textbf{P-epi} if it is a P-mono in $\ccal^\op$.
\end{definition}

\begin{remark}
Explicitly, $m$ is a P-mono iff $m\circ f \le m\circ g \ \Rightarrow \ f\le g$ and $e$ is a P-epi iff $f\circ e \le g\circ e \ \Rightarrow \ f\le g$.
\end{remark}

Whereas $\Set$ has epi/mono factorizations, $\Pos$ has P-epi/P-mono factorizations: 

\begin{example}\label{exle:onto-emb}
In $\Pos$ the P-monos are precisely the embeddings and the P-epis are precisely the epis or surjections. They form the $$(\Onto,\Emb)$$ factorization system that will play a major role later.
\end{example}

While pullbacks will continue to play a role in $\Pos$, we also need what could be called order-pullbacks or P-pullbacks, but are more commonly known as quasi-pullbacks or comma objects.

\begin{definition}[comma, P-kernel, cocomma]\label{def:comma}
Given a diagram (aka a cospan)  $A\rightarrow C\leftarrow B$, the \textbf{comma object} (or just comma for short) of the cospan is a span $A\leftarrow W\rightarrow B$ such that in the diagram
\begin{equation*}
\vcenter{
\xymatrix{
& W\ar[dl]_{}\ar[dr]^{} & \\
A\ar[dr]& \le & B\ar[dl]^{} \\
& C &
}
}
\end{equation*}
the left-hand composition is smaller than the right-hand composition and such that for any other span $A\leftarrow X\rightarrow B $ with this property there is a unique $X\to W$ such that the two triangles in 
\begin{equation*}
\vcenter{
\xymatrix{
& X\ar@/_/[ddl]^{\ =}\ar@/^/[ddr]_{=\ } \ar@{..>}[d] & \\
& W\ar[dl]_{}\ar[dr]^{} & \\
A\ar[dr]& \le & B\ar[dl]^{} \\
& C &
}
}
\end{equation*}
commute. Moreover, there is a 2-dimensional requirement: If there are two cones $A\stackrel{f_i}{\longleftarrow} X\stackrel{g_i}{\longrightarrow} B$ with $f_1\le f_2$ and $g_1\le g_2$, then also $h_1\le h_2$ for the unique arrows $h_i:X\to W$. 
In the special case where the two legs of the cospan are the same arrow $f$, we speak of the order-kernel or \textbf{P-kernel} of $f$. A \textbf{cocomma} in $\ccal$ is a comma in $\ccal^\op$. 
\end{definition}

While we will encounter comma-objects in other categories than $\Pos$, we will only need to compute it in $\Pos$ itself.

\begin{example}
In $\Pos$,  the comma of the cospan $(j,k)$ is given by $W=\{(a,b) \mid j(a)\le k(b)\}$ together with the two projections on the domain of $j$ and $k$, respectively. The order on $W$ is inherited from the order on $A$ and $B$, that is, the induced $W\to A\times B$ is an embedding. 
\end{example}

The next example highlights one of the reasons why we need to work order-enriched. In the order-enriched setting, the order on a cocomma object $C$ in $A\to C\stackrel{}{\longleftarrow} B$ can encode any weakening relation $R:A\looparrowright B$ (see the next subsection for more on weakening relations). 

\begin{example}
In $\Pos$, the cocomma of a span $A\stackrel{p}{\longleftarrow} R\stackrel{q}{\longrightarrow} B$ is the cospan $A\stackrel{j}{\longrightarrow} C\stackrel{k}{\longleftarrow} B$ where the carrier of $C$ is the disjoint union of $A$ and $B$ and the order on $C$ is inherited from $A$, $B$ and $R$. In detail, $\le_C$ is the smallest partial order satisfying $a \le_C a' \Leftrightarrow a\le_A a'$ and $a \le_C b \Leftarrow aRb$ and $b \le_C b'\Leftrightarrow b\le_B b'$. 
\end{example}

In universal algebra regular factorizations play a crucial role. The regular factorization of an arrow $f$ is obtained by taking the coequalizer of its kernel. In the ordered setting, we factor $f$ by taking the coinserter (or P-coequalizer) of its P-kernel. Intuitively, while coequalizers quotient by equations, coinserters quotient by inequations:

\begin{definition}
Given a pair of  two parallel arrows $(f,g)$ the \textbf{coinserter} $e$ is the universal arrow wrt the property $e\circ f\le e\circ g$. In detail, this means that if there are $k_1\le k_2$ such that $k_i\circ f\le k_i\circ g$ then there are unique $h_1\le h_2$ such that $h_i\circ e= k_i$. An arrow that is a coinserter is also called a \textbf{P-regular epi}.
\end{definition}

\begin{example}
The coinserters in $\Pos$ are precisely the surjections. In fact, in $\Pos$ the notions of surjection, epi, P-epi, and P-regular epi coincide. In the category of preorders, the coinserter of $(p,q)$ with $p,q:X\to Y$ is simply given by $(Y,\sqsubseteq)$ where $\sqsubseteq$ is the smallest preorder containing the order of $Y$ and $\{(p(x), q(x))\mid x\in X\}$. So we see clearly how taking a coinserter corresponds to adding inequations. A coinserter in $\Pos$ is computed by first taking the coinserter in preorders and then quotienting by the equivalence $y\equiv y' \ \Leftrightarrow y\sqsubseteq y' \ \& \ y'\sqsubseteq y$.
\end{example}

\begin{remark}[inserter]\label{rmk:inserter}
An inserter in $\ccal$ is a coinserter in $\ccal^\op$. Inserters will only appear in minor remarks and examples in this paper. It is enough to know that in $\Pos$, the inserter of $(j,k)$ with $(j,k) : X\to Y$ is the subposet of $X$ given by $\{x \mid j(x)\le k(x)\}$.
For a reader who wishes to see examples of how the duality of inserters and coinserters plays out in a setting similar to ours we refer to \cite{dahlqvist+kurz}.
\end{remark}

\begin{remark}[On Terminology]
The point of view of enriched category theory and the one of  universal algebra often suggest different terminology. 
\begin{itemize}
\item Bloom and Wright \cite{p-varieties} noticed that many results in ordered universal algebra can be stated verbatim the same way as the corresponding results in ordinary universal algebra if one is careful about how to define the corresponding notions in the ordered setting. They mark these ordered notions by prefixing them with a ``P-''. Sometimes these notions agree with those from enriched category theory. For example, a P-category is a $\Pos$-category, a P-functor is a $\Pos$-functor, a P-monad is a $\Pos$-monad, but the same is not true for P-monos, P-epis, P-faithful, P-kernel, P-coequalizer. One theme is that P-notions often add a requirement of order-reflection. Another is that P-notions work well with inequational theories instead of only with equational theories. As a rule, in a category with discrete homsets, the P-notions should coincide with the ordinary notions.
\item On the other hand, the category theoretic notions have the advantage that they make sense in other enriched categories. For example, some results in ordered algebra arise as the poset-collapse of more general results from category-enriched categories, which have a  well-developed theory (see eg \cite{street:fibrations-yoneda,street:fibrations-bicategories,street:2-sheaves,kelly+power,bourke:thesis,bourke+garner}) that can be exploited in the poset-enriched setting. 
\item Another advantage of category theoretic notions such as comma object and coinserter is that they include the 2-dimensional aspect of weighted limits, as opposed to Bloom and Wright's P-kernel or P-coequalizer. The 2-dimensional aspect is essential in abstract $\Pos$-categories, but comes for free in $\Pos$ itself, as well as in other concrete $\Pos$-categories, which explains why the difference does not matter for the purposes of this paper. 
\item  We summarize our compromise terminology in Table~\ref{table:terminology}. All of the P-notions are from Bloom and Wright \cite{p-varieties}.

\end{itemize}

\begin{table}
\renewcommand{\arraystretch}{1.2}
\begin{center}
\begin{tabular}{ c|c}
   \hline
  category theory & universal algebra  \\ \hline\hline
  {$\Pos$-category} & \begin{gray}P-category\end{gray} \\ \hline
  {$\Pos$-functor} & \begin{gray}P-functor\end{gray} \\ \hline
   -  & {P-faithful} \\ \hline
   \begin{gray}representably fully faithful\end{gray} & {P-mono} \\ \hline
   - & {P-epi}  \\ \hline
   {comma object} & - \\ \hline
   - & {P-kernel} \\ \hline
    {coinserter} & \begin{gray}P-coequaliser\end{gray} \\ \hline
   \begin{gray}(coinserter)\end{gray} & {P-regular epi} \\ \hline
\end{tabular}
\end{center}
\caption{Summary of Terminology}\label{table:terminology}
\end{table}
\end{remark}

There are other weighted limits than comma objects and inserters. For our purposes, the easiest way to define the totality of all weighted limits is to use a theorem of Kelly \cite[(3.68)]{kelly} which states that if a complete and cocomplete category has the special weighted limits known as powers and the special weighted colimits known as tensors, then it has all weighted limits and all weighted colimits:

\begin{definition}\label{def:power-tensor}
Let $A$ be an object of a $\Pos$-category $\ccal$ and $X\in\Pos$. Then the co-tensor or \textbf{power} $X\pitchfork B$ is defined as the unique up-to-iso solution of the equation
$$[X,\ccal(A,B)]\cong\ccal(A,X\pitchfork B)$$
 and the dual notion of co-power or \textbf{tensor} $X\bullet A$ is determined by
$$[X,\ccal(A,B)]\cong\ccal(X\bullet A,B).$$
\end{definition}

\begin{example}
In posets, the power $X\pitchfork B$ is the poset of monotone functions $X\to B$. In distributive lattices, with $X$ a poset and $B$ a distributive lattice, $X\pitchfork B$ is the distributive lattice of monotone functions $X\to B$.
\end{example}

We can now define completeness in the enriched sense.

\begin{definition}\label{def:complete}
A $\Pos$-category is \textbf{(finitely) complete} if it has (finite) products, equalizers and powers,  and it is \textbf{(finitely) cocomplete} if it has (finite) coproducts, coequalizers and tensors. In particular, a complete $\Pos$-category has commas and inserters and a cocomplete $\Pos$-category has cocommas and coinserters.
\end{definition}

\subsection{Weakening Relations}\label{sec:monotone-relations}
This section introduces the protagonists of this paper, namely  monotone, or weakening-closed, relations. Let 
$$\Rel(\Pos)
\quad\quad
\quad
\textrm{or}
\quad
\quad\quad
\overline\Pos$$ 
denote the $\Pos$-category where objects are posets $A,B,\ldots$, arrows $A\looparrowright B$ are monotone maps $A^\op\times B\to\twobb$, and 2-cells are given pointwise (in other words, if we identify a relation with $\{(a,b)\mid R(a,b)=1\}$, then relations are ordered by set-inclusion). Since $\twobb=\{0<1\}$ is a  poset, all homsets $\overline\Pos(A,B)$ are posets. We let $B(a,b)=1$ if  $a\le_B b$ and $B(a,b)=0$ otherwise. The identity of $A$ is the order of $A$ and composition is ordinary relational composition.  Composition of $R:A\looparrowright B$ and $S:B\looparrowright C$ is written as $R\,;S$ or $S\cdot R$. 

We call these relations \textbf{monotone relations} or \textbf{weakening-closed relations} or \textbf{weakening relations} for short. They are also the $\Pos$-enriched cousins of their category-enriched relatives known as profunctors, distributors, or bimodules. The term weakening-closure derives from the fact that the monotonicity of $A^\op\times B\to\twobb$ amounts to the rule
$$
\frac{a'\le a \,R \,b \le b'}{a' \,R \,b'}
$$
which is known as weakening in the case where  $R$ is a Gentzen-style $\vdash$ in a proof theoretic setting. 

\pskip
For every map $f:A\to B$ in $\Pos$ there is a relation (called companion in \cite{GP04,shulman})
\begin{equation}\label{eq:f_ast}
f_\ast:A\looparrowright B
\end{equation}
given by 
$(a,b) \mapsto B(fa,b):A^\op\times B\to\Two$
and a relation (called adjoint or conjoint in \cite{GP04,shulman})
\begin{equation}\label{eq:f^ast}
f^\ast:B\looparrowright A
\end{equation}
given by 
$(a,b) \mapsto B(b,fa):A\times B^\op\to\Two.$ Recall that a relation $L:A\looparrowright B$ is left-adjoint to $R:B\looparrowright A$, written as $$L\dashv R,$$  if we have (unit) $a\le a' \Rightarrow \exists b\,.\, L(a,b)\wedge R(b,a')$  and (counit) $\exists a\,.\,R(b,a)\wedge L(a,b')\Rightarrow b\le b'$. 
We have that 
$$f_\ast \dashv f^\ast$$ 
in $\overline\Pos$. 
Moreover, the left-adjoints recover the maps among the relations: If we have $L\dashv R$, then there is a monotone function $f$ in $\Pos$ such that $L=f_\ast$ and $R=f^\ast$. 
\footnote{In the discrete setting, a function $f$ and its relation $f_\ast$ are the same set $\{(x,f(x))\}$ of pairs. In the ordered setting, $f_\ast$ corresponds to the set of pairs $\{(x,y)\mid f(x)\le y\}$. To recover $f$ from an adjunction $L\dashv R$ we obtain  from the unit that (i) every $a\in A$ gives rise to an upset $aL=L(a,-)$ and a downset $Ra=R(-,a)$ with non-empty intersection and we obtain from the counit that (ii) this intersection can contain at most one element. Thus $fa=aL\cap Ra$.
} 

\pskip
The functor $(-)_\ast:\Pos\to\overline\Pos$ is covariant on 1-cells and contravariant on 2-cells. 

\pskip
The functor $(-)^\ast:\Pos\to\overline\Pos$ is contravariant on 1-cells and covariant on 2-cells. 

\pskip
This notation can be used to explain how a span $(A\stackrel{p}\leftarrow W\stackrel{q}{\rightarrow} B)$ represents, or \textbf{tabulates}, the relation $\relf(p,q)=q_\ast\cdot p^\ast$ and how a cospan $(A\stackrel{j}\rightarrow C\stackrel{k}{\leftarrow} B)$ represents, or \textbf{cotabulates}, the relation $\relf(j,k)=k^\ast\cdot j_\ast$.

\medskip
We conclude with a couple of useful observations.

\begin{proposition}
The identity relation on $A$ is the comma of $A\stackrel{\id}{\to}A\stackrel{\id}{\leftarrow} A$.
\end{proposition}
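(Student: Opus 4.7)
The plan is to unpack both sides of the claim and observe that they coincide after a short calculation. First, I would recall that the identity on $A$ in $\overline{\Pos}$ is, by the definition of composition of weakening relations (where the identity at $A$ is the order $\le_A$ on $A$), precisely the map $\le_A : A^\op \times A \to \twobb$; equivalently, it is $(\id_A)_\ast = (\id_A)^\ast$, since $A(\id a, b) = A(a,b)$.

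Next, I would compute the comma of $A \stackrel{\id}{\to} A \stackrel{\id}{\leftarrow} A$ directly in $\Pos$, using the description given earlier: the comma of a cospan $(j,k)$ is the subposet $W = \{(a,b) \mid j(a) \le k(b)\}$ of $A \times B$ equipped with the two projections. For $j = k = \id_A$ this yields the apex $W = \{(a,b) \in A \times A \mid a \le b\}$ with projections $p = \pi_1$ and $q = \pi_2$.

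It then remains to check that the span $(p,q)$ tabulates the identity relation. Using the formula $\relf(p,q) = q_\ast \cdot p^\ast$ from the previous subsection, we compute
\[
(q_\ast \cdot p^\ast)(a,b) \;=\; \exists (a',b') \in W : \ a \le a' \text{ and } b' \le b.
\]
Since $(a',b') \in W$ precisely when $a' \le b'$, the right-hand side holds iff there exist $a',b'$ with $a \le a' \le b' \le b$, which holds iff $a \le b$. Taking $(a',b') = (a,b)$ (respectively $(b,b)$ when $a\le b$) shows the two conditions agree, so $q_\ast \cdot p^\ast = \le_A$, which is the identity relation on $A$ in $\overline{\Pos}$.

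There is no serious obstacle here: the statement is essentially a sanity check, confirming that the tabulation/comma dictionary introduced just above sends the trivial cospan to the identity relation. The only thing to be careful about is to invoke the 2-dimensional (inequational) universal property of comma objects when identifying the apex $W$ — but this is immediate from the $\Pos$-description of commas recorded earlier in this section.
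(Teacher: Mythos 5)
Your proof is correct and is exactly the computation the paper leaves implicit: the proposition is stated without proof, and the intended argument is precisely to read off the comma of $(\id_A,\id_A)$ as $W=\{(a,b)\mid a\le b\}$ from the explicit description of commas in $\Pos$ and to check that $q_\ast\cdot p^\ast$ is $\le_A$, the identity of $\overline\Pos$. Nothing is missing.
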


\begin{proposition}\label{prop:surjections-embeddings}
A monotone function $m$ is an embedding in $\Pos$ if and only if $m^\ast\cdot m_\ast=\Id$. A monotone function $e$ is a surjection in $\Pos$ if and only if $\Id = e_\ast\cdot e^\ast$. \footnote{One can replace ``$=$'' by ``$\le$'' since the other direction is, respectively, the unit and counit of the adjunction and always holds.}
\end{proposition}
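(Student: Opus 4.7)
The plan is to unfold the definition of relational composition in $\overline{\Pos}$ and then reduce each statement to a direct order-theoretic condition on $m$ or $e$. By the adjunction $f_\ast \dashv f^\ast$ recalled just before the proposition, the unit gives $\Id_A \le m^\ast \cdot m_\ast$ and the counit gives $e_\ast \cdot e^\ast \le \Id_B$ for free, as noted in the footnote; so in each case only one inequality must be established, and the proof reduces to the $\le$ direction in the definition of ``embedding'' / ``surjection''.

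For the first statement, recall $m_\ast(a,b) = 1 \Leftrightarrow ma \le b$ and $m^\ast(b,a') = 1 \Leftrightarrow b \le ma'$. Composing as in $\overline{\Pos}$,
\[
(m^\ast \cdot m_\ast)(a,a') \;=\; \bigvee_{b \in B}\, m_\ast(a,b)\wedge m^\ast(b,a')\;=\;[\exists b\colon ma \le b \le ma']\;=\;[ma \le ma'].
\]
Since $\Id_A(a,a') = [a \le a']$, the equation $m^\ast\cdot m_\ast = \Id_A$ is equivalent to the biconditional $ma\le ma' \Leftrightarrow a\le a'$. The forward direction is order-reflection (i.e.\ embedding), and the backward direction is monotonicity of $m$, which holds automatically; so the equation holds iff $m$ is an embedding.

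For the second statement, $e_\ast(a,b) = 1 \Leftrightarrow ea \le b$ and $e^\ast(b,a) = 1 \Leftrightarrow b \le ea$, hence
\[
(e_\ast \cdot e^\ast)(b,b') \;=\; \bigvee_{a \in A}\, e^\ast(b,a)\wedge e_\ast(a,b')\;=\;[\exists a\colon b \le ea \le b'].
\]
If $e$ is surjective and $b \le b'$, pick $a$ with $ea = b$; then $b \le ea \le b'$, so $(e_\ast \cdot e^\ast)(b,b') \ge \Id_B(b,b')$ and combining with the counit we get equality. Conversely, if $\Id_B = e_\ast \cdot e^\ast$, taking $b' = b$ yields some $a$ with $b \le ea \le b$, so $ea = b$, showing $e$ is surjective.

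There is no real obstacle here; the entire argument is a bookkeeping exercise, and the only thing to keep track of is that the existential witness $b$ (resp.\ $a$) in the composite collapses, via antisymmetry and the adjoint inequality, into exactly the order-reflection (resp.\ surjectivity) condition.
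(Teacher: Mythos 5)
Your proof is correct, and since the paper states this proposition without proof, your direct computation of the composites $(m^\ast\cdot m_\ast)(a,a')=[ma\le ma']$ and $(e_\ast\cdot e^\ast)(b,b')=[\exists a.\ b\le ea\le b']$ is exactly the argument the authors leave implicit: it reduces the two equations to order-reflection and surjectivity respectively, using the unit/counit inequalities from $f_\ast\dashv f^\ast$ for the free directions, just as the footnote indicates. The only point worth being careful about, which you handle correctly, is the order of relational composition ($S\cdot R$ applies $R$ first), so that $m^\ast\cdot m_\ast$ is an endorelation on the domain and $e_\ast\cdot e^\ast$ one on the codomain.
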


\subsection{Ordered Algebra}\label{sec:ordered-algebra}

Stone duality for relations takes place in an order-enriched setting. To understand the algebraic side of the duality, we review some aspects of order-enriched algebra.
For the purposes of this paper, ordered algebra is $\Pos$-enriched algebra. In particular, all operations are order-preserving. This has the advantage that a relation $A\looparrowright B$ between two ordered algebras can be simply defined as a monotone function $A^\op\times B\to\Two$ such that the legs of the corresponding span are algebra homomorphisms. We explain this now in more detail and conclude with examples of algebraic structure with order-reversing operations.

Our notion of an ordered (quasi)-variety is the one of Bloom and Wright \cite{p-varieties}. As in the ordinary case, a  \textbf{P-variety} can be defined in various equivalent ways. (Recall that a functor is locally monotone if it preserves the order on the homsets.) A P-variety $U:\acal\to\Pos$ is, equivalently,
\begin{itemize}
\item 
a category of algebras with monotone operations for a finitary signature definable by a set of inequations.
\item 
a category of algebras with monotone operations for a finitary signature closed under HSP. Here we need to take H as closure under quotients by inequations, or closure under coinserters,  to use the terminology of Section~\ref{sec:weighted}. Similarly, closure under SP needs to be generalized to include all weighted limits. This can be done by adding closure under powers, or by generalizing closure under S from equalizers to inserters.
\item
a category of algebras for a locally monotone monad $T:\Pos\to\Pos$ that is \emph{strongly finitary} in the sense that it is the $\Pos$-enriched left-Kan extension of its restriction to finite discrete posets.
\end{itemize}
We settle for the last item as our official definition, since it is the most succinct one and  liberates us from repeating the standard definitions of universal algebra such as signature, inequations, and closure under HSP. The equivalence of the last item with the previous two can be found in Theorem 6.9 of \cite{ordered-algebras}, which also contains a full explanation of the technical notions involved as well as further references.

 Convenient properties  that follow from this definition are that coinserters (=quotients by inequations) are surjections and that free constructions as well as the monad $T$ preserve surjections.\footnote{\cite[Thm.6.3]{ordered-algebras} shows that strongly finitary functors preserve surjections. On the other hand, \cite[Exle.6.4]{ordered-algebras} shows that, conversely, being finitary and preserving surjections is not  enough to imply strongly finitary.} This need not be the case for the more general notion of finitary monads on $\Pos$, which are a special case of the notion of algebra theory studied by Kelly and Power~\cite{kelly+power}.

Since surjections coincide with P-regular epis, we can also say that $\acal$ is the category of algebras for a strongly finitary P-regular monad and, since $\Pos$ is P-regular but not regular in the ordinary sense, we may, with a slight abuse of language, simply say that P-varieties are the categories of algebras for a strongly finitary, regular monad on $\Pos$.\footnote{The notion of a P-regular category was introduced in \cite[Def.3.18]{ordered-algebras} where it was simply called regular.}

\begin{example}
The category $\DL$ of bounded distributive lattices is a P-variety.  Note that, as a P-variety, $\DL$ is different from the ordinary variety of distributive lattices as $\DL$ now has ordered homsets. The category $\BA$ of Boolean algebras is the full subcategory of $\DL$ consisting of Boolean algebras. Note that, because $\DL$-morphisms between Boolean algebras preserve negation,  $\DL(B,B')$ is discrete if $B,B'$ are Boolean algebras. 
\end{example}

\begin{remark} 
It was shown in \cite[Thm.12]{dahlqvist+kurz} that the inclusion $\BA\to\DL$ is the free completion of $\BA$ wrt a certain class of inserters. Informally, we may say that $\DL$ is the smallest category containing $\BA$ and closed under $\Pos$-enriched subobjects.
\end{remark}

\begin{definition}\label{def:DLrelation}
A $\DL$-relation $R:A\looparrowright B$ is a weakening relation $UA\looparrowright UB$ that is tabulated by a span in $\DL$.
\end{definition}

\noindent
In other words, $R:A\looparrowright B$ is $\DL$-relation if it is weakening closed and a subalgebra of $A\times B$.
Spelling this out in detail this means that $R$ is closed under the following rules.
\[
\rules{a'\le a R b \le b'}{a'Rb'}
\quad\quad 
\rules{}{0 R 0}
\quad\quad 
\rules{}{1 R 1}
\quad\quad 
\rules{aRb \quad a'Rb'}{(a\wedge a') R (b\wedge b')}
\quad\quad
\rules{aRb \quad a'Rb'}{(a\vee a') R (b\vee b')}
\]
\begin{example} \label{exle:DLrelation}
The property of being a subalgebra interacts with weakening closure in a subtle way.
\begin{enumerate}
\item If $A=B=\twobb\in\DL$, then there are only two relations $A\looparrowright B$, namely the identity $\le_\twobb$ and the total relation.
\item
If $A$ and $B$  are the 3-chain distributive lattice, then the smallest weakening-closed $\DL$-relation $A\looparrowright B$ is $\{(0,0),(0,b), (0,1), (a,1), (1,1)\}$, where the middle elements of the chains are called $a$ and $b$, respectively. This relation is the weakening-closure of the initial span $(p:\twobb\to A,q:\twobb \to B)$. \end{enumerate}
\end{example}

\noindent As we will see in this paper, this interplay between weakening closure and the subalgebra property is crucial to extend Stone duality to relations. It has some, maybe at first sight unexpected, consequences for structures that have order-reversing operations. For example, if we equip Boolean algebras with their natural order then the only weakening-closed $\BA$-relation is the total relation.

\begin{example}\label{exle:BArelation}
Let $R:A\looparrowright B$ be a Boolean relation between Boolean algebras equipped with their natural order. Then $R$ is the total relation. Indeed, because of $(0,0)\in R$, we have, by weakening closure, $(0,1)\in R$ and then by closure under negation $(1,0)\in R$, hence, again by weakening closure, $(a,b)\in R$ for all $a\in A$ and $b\in B$.
\end{example}

This example raises the question of what the order of Boolean algebras should be in the order-enriched setting. There are two possible answers. In the first remark below, the order on a Boolean algebra is discrete, in the second the order is the natural order, as inherited from distributive lattices.

\begin{remark}[The P-variety of Boolean algebras is discrete]
If we want Boolean algebras to form a P-variety, all operations need to be monotone. Since Boolean algebras have negation, the order on the homsets as well as the order on individual Boolean algebras (witnessed by the forgetful functor) must be discrete. This does not contradict closure under ordered quotients since ordered congruences in Boolean algebras are necessarily symmetric and hence equivalence relations \cite[Sec.2.2]{dahlqvist+kurz}. It also does not contradict closure under weighted limits, since the forgetful functor to $\Pos$ has a left-adjoint given by composing the connected component functor $\Pos\to\Set$ with the ordinary free construction of Boolean algebras. More technically, we can say that the category of discretely ordered Boolean algebras is exact in the ordered sense \cite[Exle.3.22]{ordered-algebras}.
It then follows from \cite[Thm.5.9]{ordered-algebras} that the forgetful functor from discretely ordered Boolean algebras to $\Pos$ is a P-variety.
\end{remark}

The previous remark is only of interest since it indicates that the theory of P-varieties specializes to the theory of ordinary varieties in the discrete case. But this discrete point of view fails to exhibit any new order related structure. Therefore, in the rest of the paper, we consider $\BA$ as a full subcategory of the P-variety $\DL$ equipped with the forgetful functor $\BA\to\DL\to\Pos$. From this point of view the homsets between Boolean algebras are still discrete, but their carriers are not.

\begin{remark}[$\BA$ is not a P-variety]
While $\BA\to\DL\to\Pos$ equips Boolean algebras with their natural order, $\BA\to\Pos$ is now not a P-variety, since $\BA$ is not closed under weighted limits. For example, the power $\Two\pitchfork\Two$, see Definition~\ref{def:power-tensor}, is the three element distributive lattice. In fact, every $\DL$ is an inserter of Boolean algebras in a canonical way \cite[Prop.10]{dahlqvist+kurz} and $\DL$ is a closure of $\BA$ under weighted limits.
\end{remark}

Let us note that, mutatis mutandis, the last two remarks also apply to other ordered structures such as Heyting algebras. Our approach can deal with mixed variance only indirectly by embedding the mixed variant signatures into order-preserving signatures. The next example illustrates that this is related to our interest in heterogeneous relations, that is, relations $A\looparrowright B$ where $A\not=B$.

\begin{remark}
Let us illustrate why order-reversing operations present a problem for binary relations $A\looparrowright B$. For example, in the case of Boolean algebras or Heyting algebras, we might want to add, respectively, to Definition~\ref{def:DLrelation} of a $\DL$-relation the clauses
\[
\rules{a\,R\,b}{\neg b \,R \,\neg a}
\quad\quad
\quad\quad
\rules{a_1\,R\,b_1 \quad a_2\,R\,b_2}{(b_1\to a_2) \, R \, (a_1\to b_2)}
\]
as eg in Pigozzi~\cite[Def.2.1]{pigozzi}. The $a$s and $b$s are switching sides and this  presents no problems if $A=B$.  But in this paper we are mainly interested in ``multi-lingual'' relations \cite{multi-lingual} connecting objects $A\not= B$.  
Future work should take a cue from Greco et.al.~\cite{GRMT} who account for rules such as the ones above with the help of opposite relations. 
\end{remark}

\subsection{Ordered Stone Duality}\label{sec:stone-duality}

As we have seen in Section~\ref{sec:ordered-algebra} on Ordered Algebra, weakening relations are more interesting in the ordered category $\DL$ then in the discrete category of Boolean algebras. We therefore decided to treat $\BA$ as a full subcategory of $\DL$ and dualize Boolean relations inside the larger category of distributive lattices. This lines up nicely with the way that Johnstone \cite{StoneSpaces} introduces Stone duality where he first presents the duality of spectral spaces and distributive lattices and then obtains the duality of Stone spaces and Boolean algebras as the discrete restriction. We follow this approach in that we  take the duality for distributive lattice as more fundamental, but find it convenient to rely on Priestley's \cite{priestley} version of the duality  as laid out for example in Davey and Priestley~\cite{davey+priestley}.

Let us recall that the dual equivalence between the category $\DL$ of distributive lattices and the category $\Pri$ of Priestley spaces is mediated by two contravariant functors $\DL(-,\Two)$ and $\Pri(-,\Two)$ which we both abbreviate as $\Two^-$.
We only need to add to this that the two contravariant functors determined by homming into $\Two$

\begin{equation*}
\xymatrix@C=40pt@R=30pt{
\Pri 
\ar@/^/[r]^{{\twobb}^-}  
& {\DL}
\ar@/^/[l]^{{\twobb}^-}  
}
\end{equation*}
are not only a dual equivalence of categories, but also a dual equivalence of $\Pos$-categories, covariant on the order of the homsets. This means, for example, that a cocomma in $\DL$ can be computed as the dual of a comma in $\Pri$. 

\section{The Duality of Spans and Cospans}\label{sec:span-cospan-duality}
Since duality sends the span tabulating a relation to a cospan, we need to understand the relationship of spans and cospans. 
We will see that restricting to weakening-closed spans yields a satisfactory duality. This material owes much to Street \cite{street:fibrations-yoneda,street:fibrations-bicategories} and Guitart~\cite{guitart}.

\subsection{Spans and Cospans}

\label{sec:span-cospan-rel}

Given a $\Pos$-category $\ccal$ and objects $A,B$ we define the $\Pos$-categories
$$\Span(\ccal,A,B)\quad\quad\textrm{and}\quad\quad\Cospan(\ccal,A,B).$$ (We may drop the reference to $\ccal$ in the notation). Objects in  $\Span(\ccal,A,B)$ are spans $(p:W\to A,q:W\to B)$. Arrows $f:(p:W\to A,q:W\to B)\to(p':W'\to A,q':W'\to B)$ are arrows $f\in\ccal$ such that $p'\circ f=p$ and $q'\circ f=q$. $\Cospan$ is defined dually.

\begin{remark}
Every span $(p,q)$ and every cospan $(j,k)$ give rise to relations
$$\relf(p,q)=q_\ast\cdot p^\ast 
\quad\quad\textrm{and}\quad\quad
\relf(j,k)=k^\ast\cdot j_\ast$$
if $\ccal$ is $\Pos$ or just a concrete $\Pos$-category.
\end{remark}

\noindent
In general, if $\ccal$ has comma and cocomma objects, there are $\Pos$-functors
$$\cocommaf:\Span(A,B)\to\Cospan(A,B)$$
and
$$\commaf:\Cospan(A,B)\to\Span(A,B)$$
where $\commaf$ takes a cospan and maps it to its comma square and $\cocommaf$ takes a span and maps it to its co-comma square. Grandis and Pare \cite[Section 5.3]{GP04} describe this as a colax/lax adjunction between double categories of spans as cospans, but the following suffices for our purposes.

\begin{proposition}\label{prop:commacocomma}
$\cocommaf\dashv\commaf$ for all $\Pos$-categories $\ccal$ with comma and cocomma objects. The induced monad and comonad are idempotent. Restricting the functors $\commaf$ and $\cocommaf$ to a skeleton of $\Span(A,B)$ and $\Cospan(A,B)$, this means that $\commaf\circ\cocommaf$ is a closure operator and $\cocommaf\circ\commaf$ is an interior operator.  Moreover, there is a bijection between fixed points of $\commaf\circ\cocommaf$ and fixed points of $\cocommaf\circ\commaf$. Furthermore, if $\ccal=\Pos$, then these fixed points are in bijection with the weakening relations $A\looparrowright B$.
\end{proposition}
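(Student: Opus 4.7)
The plan is to unpack the adjunction from the universal properties of comma and cocomma objects, derive idempotency of the induced (co)monads from that, and then identify the fixed points explicitly in $\Pos$.

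\textbf{Adjunction.} First I would establish $\cocommaf \dashv \commaf$ by comparing hom-posets. Given a span $(p,q)\colon A \leftarrow W \to B$ and a cospan $(j,k)\colon A \to C \leftarrow B$, both $\Cospan(A,B)(\cocommaf(p,q),(j,k))$ and $\Span(A,B)((p,q),\commaf(j,k))$ are, by the universal properties of Definition~\ref{def:comma}, naturally identified with the same piece of data: a 1-dimensional witness that $j \circ p \le k \circ q$, together with the 2-dimensional monotonicity lift. This common description yields a $\Pos$-natural isomorphism of hom-posets in $(p,q)$ and $(j,k)$, hence the $\Pos$-enriched adjunction. The unit $\eta_{(p,q)}\colon (p,q) \to \commaf\cocommaf(p,q)$ and counit $\eps_{(j,k)}\colon \cocommaf\commaf(j,k)\to (j,k)$ are the canonical comparison maps.

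\textbf{Idempotency.} The key technical step is to show the monad $T = \commaf\cocommaf$ and comonad $S = \cocommaf\commaf$ are idempotent. I would argue that $\commaf\,\eps$ is invertible (equivalently, an equality on skeletons): for any cospan $(j,k)$, the span $\commaf(j,k)$ is the universal span satisfying $j p \le k q$; then $\cocommaf\commaf(j,k)$ is the universal cospan carrying that same inequality; and its comma $\commaf\cocommaf\commaf(j,k)$ is again universal for the same condition, hence isomorphic to $\commaf(j,k)$ via $\commaf\,\eps$. A symmetric argument shows $\cocommaf\,\eta$ is invertible, so $S$ is idempotent too.

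\textbf{Closure, interior, and bijection of fixed points.} Once we pass to skeletal representatives, so that isomorphisms become equalities, idempotency of $T$ together with the unit inequality $(p,q) \le T(p,q)$ makes $T$ a closure operator on (the skeleton of) $\Span(A,B)$; dually $S$ is an interior operator on (the skeleton of) $\Cospan(A,B)$. The bijection between fixed points is the usual consequence of an idempotent adjunction: $\cocommaf$ and $\commaf$ restrict to an equivalence (here, an isomorphism of posets) between the full subposets of fixed points on either side.

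\textbf{Fixed points for $\ccal = \Pos$.} Finally, I would identify the fixed-point posets when $\ccal = \Pos$. Using the explicit descriptions of comma and cocomma recalled after Definition~\ref{def:comma}, the cocomma of a span $(p,q)$ carries the order on $A \sqcup B$ generated by $\le_A$, $\le_B$ and the pairs $(p(w),q(w))$, and taking its comma returns the span whose carrier is $\{(a,b) \in A \times B \mid a \, R \, b\}$ with the two projections, where $R = \relf(p,q) = q_\ast \cdot p^\ast$. Hence the fixed points of $T$ are precisely these ``graph'' spans, in natural bijection with monotone relations $A \looparrowright B$; the same relation $R$ is recovered on the cospan side via $\relf(j,k) = k^\ast \cdot j_\ast$, confirming that the two bijections agree. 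The main obstacle I anticipate is handling the 2-dimensional compatibility in Definition~\ref{def:comma} cleanly when checking the hom-poset isomorphism in the first step, as well as verifying that the comparison maps witnessing idempotency are $\Pos$-natural; the remainder is careful unpacking of universal properties.
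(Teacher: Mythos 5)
Your argument is correct, and it fills in a proof that the paper itself omits (Proposition~\ref{prop:commacocomma} is stated there without proof, as a standard fact). Your route is the expected one: since the universal properties of comma and cocomma make both hom-sets $\Cospan(\cocommaf(p,q),(j,k))$ and $\Span((p,q),\commaf(j,k))$ subsingletons, inhabited exactly when $jp\le kq$, the adjunction and its $\Pos$-naturality are immediate; your idempotency step — checking that $\commaf(j,k)$ and $\commaf\cocommaf\commaf(j,k)$ are universal for the same class of cones, using the counit in one direction and factorization through $\commaf(j,k)$ in the other — is the right way to see that $\commaf\eps$ is invertible, from which all the remaining abstract claims (closure/interior operators on skeletons, bijection of fixed points) follow by the general theory of idempotent adjunctions; and the identification of the fixed points in $\Pos$ with graphs of monotone relations via the explicit collage description of the cocomma matches Example~\ref{exle:collage} and Definition~\ref{def:graph}. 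The only point worth tightening is the passage to ``closure operator'': $\Span(A,B)$ is not itself a preorder, so one should say explicitly that the closure/interior language refers to the preorder on the skeleton given by existence of a span (resp.\ cospan) morphism — but this is a presentational issue shared with the statement of the proposition, not a gap in your argument.
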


\begin{remark}
In case that $\ccal=\Pos$, there is a canonical choice of skeleton of $\Span(A,B)$ given by the weakening closed subsets of $A\times B$. The monad  $\commaf\circ\cocommaf$ then maps a span $(p,q)$ to the graph of the relation $q_\ast\cdot p^\ast$. We write $$\graphf(R)$$ for the graph of a relation $R$. $\commaf$ maps a cospan $(j,k)$ to the graph of the relation $k^\ast\cdot j_\ast$. 
\end{remark}

We can reformulate the definition of graph so that it generalizes to order-regular categories \cite{ordered-algebras}. Instead of fully-faithful we would then say representably fully-faithful (or P-mono) and instead of onto we would say P-regular epi. But in this paper we work concretely over $\Pos$ and we can say surjective and embedding instead.

\begin{definition}[Graph]\label{def:graph}
In the category $\Pos$, we say that a span $(p,q)$ is \textbf{embedding} if  arrow between spans $(p,q)\to\commaf(\cocommaf(p,q))$ is fully faithful; is \textbf{weakening-closed} if $(p,q)\to\commaf(\cocommaf(p,q))$ is onto; is a \textbf{graph} (of a weakening relation) if $(p,q)\to\commaf(\cocommaf(p,q))$ is iso.
A span $(p,q)$ represents a relation $R$ if $q_\ast\cdot p^\ast=R$. We say that $(p,q)$ tabulates $R$ if it is the graph of $R$.
\end{definition}

It follows from the proposition that every relation has not only a unique tabulation as a graph, but also a unique cotabulation, which is known as the collage of a relation and was introduced by Street \cite{street:fibrations-bicategories} to characterize relations in the case of bicategories.

\begin{definition}[Collage]\label{def:collage}
In the category $\Pos$, we say that 
a cospan $(j:A\to C,k:B\to C)$ is \textbf{bipartite} if $\cocommaf(\commaf(j,k))\to(j,k)$ is fully faithful; is \textbf{onto} if $\cocommaf(\commaf(j,k))\to(j,k)$ is onto; is a \textbf{collage} if $\cocommaf(\commaf(j,k))\to(j,k)$ is iso.
A cospan $(j,k)$ represents the relation $k^\ast\cdot j_\ast$ and cotabulates it if $(j,k)$ is bipartite and onto.
\end{definition}

\noindent The terminology is summarised in Table~\ref{table:cospan}.

\begin{example}\label{exle:collage}
In the category $\Pos$ the collage of a relation $R:A\looparrowright B$, or, equivalently, $\cocommaf(p,q)$ of a span tabulating $R$, is given by a poset $C$ such that $C(a,a')=A(a,a')$, $C(a,b)=R(a,b)$ and $C(b,b')=B(b,b')$.  We write 
$$\collagef(R)$$ for this particular cospan cotabulating $R$.
\end{example}

\noindent The next example shows that while the legs of a collage are order-reflecting in $\Pos$, this need not be the case in $\DL$. A similar example can be built in all non-trivial categories of algebras which have a constant. It follows that a general characterization of collages (or cocomma cospans) in algebraic categories needs special investigation.

\begin{example}\label{exle:dlcospan:nonembedding}
Let $(p:W\to A, q:W\to B)$ be the span where $p$ is the identity on the free $\DL$ on one generator $\{a\}$, let $B$ be the initial $\DL$ with elements $\{0 <1\}$, and let $q$ map $a$ to $0$. One verifies that $(q,\id_B)$ is the cocomma of $(p,q)$. And we have $a\not\le 0$ but $q(a)\le q(0)$, so that $q$ is not an embedding. The reason is that we have
$$a \le 0_B = 0_A$$
where the inequation comes from the span and the equation comes from the laws of $\DL$. 
\end{example}

\pskip 
\begin{table}
\begin{center}
\begin{tabular}{|c|c|}
\hline
Spans & Cospans \\
\hline\hline
weakening-closed & bipartite \\
\hline
embedding (full subobject of product) & onto (quotient of coproduct)\\
\hline
graph of  a relation & collage of a relation\\
\hline
\end{tabular}
\caption{Duality of spans and cospans}\label{table:cospan}
\end{center}
\end{table}

\subsection{Exact squares}\label{sec:exactsquares}

Given a diagram 
\begin{equation}\label{eq:square-ab}
\vcenter{
\xymatrix{
& W\ar[dl]_{p}\ar[dr]^{q} & \\
A\ar[dr]_j& \le & B\ar[dl]^{k} \\
& C &
}
}
\end{equation}
in $\Pos$, we always have that $jp\le kq$ implies $\relf(p,q)\le\relf(j,k)$ (``going over is smaller or equal to going under''). A square $((p,q),(j,k))$ with $jp\le kq$ is called exact if $\relf(p,q)=\relf(j,k)$. 
\footnote{Note that $\relf(p,q)=\relf(j,k)$ implies $jp\le kq$ (since $\relf(p,q)=\relf(j,k)$ is $k^\ast\cdot j_\ast = q_\ast\cdot p^\ast$ which implies $j_\ast\cdot p_\ast \ge k_\ast\cdot q_\ast$ which is equivalent to $j\cdot p \le k\cdot q$).}
Without referring to relations this can be expressed equivalently as in 

\begin{definition}
A square in $\Pos$ as in \eqref{eq:square-ab} satisfying $jp\le kq$ is \textbf{exact} if for all $a,b$ such that $ja\le kb$ there is $w$ such that $a\le pw$ and $qw\le b$.
\end{definition}

\noindent In our context, one of the reasons why exact squares are important, is that an exact square says that the span and the cospan represent the same relation. Informally, exact squares represent relations without preference being given to either spans or cospans.
 
 \pskip Let us repeat that \eqref{eq:square-ab} is exact iff 
 \begin{equation}\label{eq:bcc}
 q_\ast\cdot p^\ast=k^\ast\cdot j_\ast,
 \end{equation}
  which is sometimes called the Beck-Chevalley-Condition. It is also important to note that \eqref{eq:bcc} gives the square \eqref{eq:square-ab} a direction, which we denote by 
$$A\looparrowright B.$$
In other words, the span and the cospan in \eqref{eq:square-ab} represent the same relation if read from $A$ to $B$, but not necessarily the other way around (in fact, if the cospan is a collage in $\Pos$, the relation represented by reading the cospan backwards is empty).

\pskip
The following can be verified easily by direct computation.

\begin{proposition}\label{prop:cocommas-exact}
Comma squares and cocomma squares in $\Pos$ are exact.
\end{proposition}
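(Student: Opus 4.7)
The statement is a direct verification, so the plan is to compute each side in the two cases using the explicit descriptions of commas and cocommas in $\Pos$ given in Section~\ref{sec:weighted}. In both cases the inequality $jp \le kq$ holds by construction, so the work is to show the universal characterisation $\exists w.\, a \le pw \wedge qw \le b$ whenever $ja \le kb$.

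\textbf{Comma case.} Given a cospan $A \xrightarrow{j} C \xleftarrow{k} B$, recall that its comma in $\Pos$ is $W = \{(a,b) \in A \times B \mid ja \le kb\}$ with $p,q$ the projections. Given $a,b$ with $ja \le kb$, the pair $w := (a,b)$ itself lies in $W$, and then $p(w) = a$ and $q(w) = b$, so $a \le p(w)$ and $q(w) \le b$ hold trivially. This direction is immediate. (Conversely, if such $w = (a',b') \in W$ exists, then $ja \le ja' \le kb' \le kb$, giving the inequality $\relf(p,q) \le \relf(j,k)$ automatically from $jp \le kq$.)

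\textbf{Cocomma case.} Given a span $A \xleftarrow{p} W \xrightarrow{q} B$, the cocomma $C$ has carrier $A \sqcup B$ with order generated by $\le_A$, $\le_B$, and the pairs $\{(p(w), q(w)) \mid w \in W\}$. Since all generators going between the two sides point from $A$ to $B$, the reflexive–transitive closure of these generators is already antisymmetric, so no quotienting is needed and $j,k$ are order-embeddings. The key observation is that $ja \le_C kb$ holds iff there is a chain in the generating preorder from $a \in A$ to $b \in B$, which by transitivity unpacks to: there exist $a' \in A$, $b' \in B$, $w \in W$ with $a \le_A a'$, $a' = p(w)$, $q(w) = b'$, $b' \le_B b$, i.e.\ $a \le pw$ and $qw \le b$. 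This is exactly the witness required by exactness.

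\textbf{Main obstacle.} There isn't really one — both arguments are essentially unpacking definitions. The only subtle point is in the cocomma case, namely checking that the preorder generated on $A \sqcup B$ is already antisymmetric (so the concrete description of $\le_C$ given in Section~\ref{sec:weighted} is literally what it says, without a further collapse) and that the transitive closure of the three kinds of generators collapses to a single ``$\le$–$R$–$\le$'' chain. Both follow immediately from the fact that the generating pairs between $A$ and $B$ are strictly one-directional.
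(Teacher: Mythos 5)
Your proof is correct and is exactly the ``direct computation'' the paper alludes to without writing out: in the comma case the pair $(a,b)$ itself is the witness, and in the cocomma case one observes that every chain in the generated order on $A\sqcup B$ crosses from $A$ to $B$ at most once, which yields the witness $w$ and, as you note, also the antisymmetry of the generated preorder.
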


We will see in Section~\ref{sec:CORC}, that most of the results about spans and cospans in $\Pos$ generalise to concretely order-regular categories. The exactness of cocommas is one of the exceptions: It either may fail to hold or require more work. On the other hand, the next two propositions do generalise.

\begin{proposition}\label{prop:commas-in-exact-squares}
A comma in an exact square is the comma of the cospan.
\end{proposition}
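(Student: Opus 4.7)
The plan is to exploit the explicit description of the comma in $\Pos$ and reduce the statement to a direct set-theoretic equality. First I unpack what it means for $(p,q)$ to be a comma: by hypothesis $(p,q)\cong\commaf(j',k')$ for some cospan $(j',k')$, so in $\Pos$ the carrier $W$ may be taken to be $\{(a,b)\in A\times B\mid j'a\le k'b\}$ with $p,q$ the two projections. This makes $(p,q)$ an embedding into $A\times B$, and the comma is weakening closed (from $j'a\le k'b$, $a'\le a$ and $b\le b'$ one gets $j'a'\le k'b'$), so $(p,q)$ is a graph in the sense of Definition~\ref{def:graph}.

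Next I compute both represented relations directly. Because $(p,q)$ is a graph sitting inside $A\times B$, a short calculation gives $(q_\ast\cdot p^\ast)(a,b)=1$ iff $(a,b)\in W$: if $(a,b)\in W$ take $w=(a,b)$, while conversely any witness $(a_0,b_0)\in W$ with $a\le a_0$ and $b_0\le b$ forces $(a,b)\in W$ by weakening closure. On the cospan side, $(k^\ast\cdot j_\ast)(a,b)=1$ iff there exists $c$ with $ja\le c\le kb$, i.e.\ iff $ja\le kb$, the existential being witnessed by $c=ja$.

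Exactness of the square asserts $q_\ast\cdot p^\ast=k^\ast\cdot j_\ast$, so the two subsets coincide: $W=\{(a,b)\mid ja\le kb\}$ as subposets of $A\times B$, with matching projections. This is exactly $\commaf(j,k)$, so $(p,q)=\commaf(j,k)$ as required.

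The only subtlety is that the auxiliary cospan $(j',k')$ and the cospan $(j,k)$ of the square are a priori different; but $(j',k')$ is used only to conclude that $(p,q)$ is a graph, after which exactness alone pins down $W$ as the comma of $(j,k)$. If one prefers a more abstract presentation, the universal property of $\commaf(j,k)$ combined with $jp\le kq$ (from exactness) produces a canonical comparison $h\colon W\to \commaf(j,k)$, and Proposition~\ref{prop:cocommas-exact} together with Proposition~\ref{prop:commacocomma} forces $h$ to be an iso, since both $(p,q)$ and $\commaf(j,k)$ are then graphs tabulating the common relation $\relf(j,k)=\relf(p,q)$.
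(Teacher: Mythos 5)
Your proof is correct, but it takes a different route from the one in the paper. The paper argues purely via the universal property: given any cone $(p,q)$ over $(j,k)$, exactness of the square is used (once, in an element chase) to show that $(p,q)$ is also a cone over the auxiliary cospan $(j',k')$ whose comma the given span is, whence it factors uniquely through that span; since the span is itself a cone over $(j,k)$, it is the comma of $(j,k)$. You instead exploit the concrete description of commas in $\Pos$: you identify the carrier $W$ with a weakening-closed embedded subposet of $A\times B$, compute that $q_\ast\cdot p^\ast$ is exactly membership in $W$ while $k^\ast\cdot j_\ast(a,b)$ is exactly $ja\le kb$, and let exactness force the two subposets to coincide. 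Both arguments are sound. Your computation is shorter and more self-contained (it in effect re-proves the comma half of Proposition~\ref{prop:cocommas-exact} along the way), but it is tied to knowing what a comma in $\Pos$ looks like; the paper's universal-property argument never opens up the comma and therefore transfers essentially verbatim to the concretely order-regular categories of Section~\ref{sec:CORC}, where one only knows that commas and exactness are computed after applying the forgetful functor. One small presentational point: the conclusion $(p,q)=\commaf(j,k)$ should be read as an identification up to isomorphism of spans, since the comma of $(j',k')$ was only chosen in its standard form up to iso; your final remark about the canonical comparison map $h$ being an iso is the cleaner way to phrase this, though as written it is only a sketch and the main body of your argument is what carries the proof.
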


\begin{proof} Let $(p',q')$ be a comma and $(j,k)$ a cospan so that the square $(p',q',j,k)$ is exact. Let $(j',k')$ be the cospan of which $(p',q')$ is the comma. Now suppose that $(p,q)$ is a cone over $(j,k)$, that is, $jp\le kq$. We show first that $(p,q)$ is also a cone over $(j',k')$. We know $jp(w)\le kq(w)$ for all $w$. It follows from the exactness of $(p',q',j,k)$ that there is $w'$ such that $p(w)\le p'(w')$ and $q'(w')\le q(w)$, which implies 
$$j'p(w)\le j'p'(w')\le k'q'(w')\le k'q(w).$$
We have shown that $j'p\le k'q$, that is, that $(p,q)$ is a cone over $(j',k')$. Since $(p',q')$ is the comma of $(j',k')$, there is a unique arrow $(p,q)\to(p',q')$. It follows that $(p',q')$ is the comma of $(j,k)$.
\end{proof}

We also have the dual property for cocommas:

\begin{proposition}\label{prop:cocommas-in-exact-squares}
A cocomma in an exact square is the cocomma of the span.
\end{proposition}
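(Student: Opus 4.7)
The plan is to dualize the proof of Proposition~\ref{prop:commas-in-exact-squares} essentially verbatim. Let $(j',k')$ be a cocomma, presented as the cocomma of some span $(p',q')$, and let $(p,q)$ be a span such that the square $(p,q,j',k')$ is exact. My goal is to show $(j',k')$ is the cocomma of $(p,q)$, which I do by checking the universal property at an arbitrary candidate cocone.

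First I would fix an arbitrary cocone $(j,k)$ under $(p,q)$, that is, arrows $j\colon A\to D$, $k\colon B\to D$ with $jp\le kq$, and show $(j,k)$ is also a cocone under $(p',q')$. For each element $w'$ in the apex of $(p',q')$, the 2-cell $j'p'\le k'q'$ coming from $(j',k')$ being the cocomma of $(p',q')$ gives $j'p'(w')\le k'q'(w')$. Applying exactness of $(p,q,j',k')$ with $a=p'(w')$, $b=q'(w')$ then produces $w$ such that $p'(w')\le p(w)$ and $q(w)\le q'(w')$. Monotonicity of $j$ and $k$ and the assumption $jp\le kq$ chain together to yield
$$jp'(w')\;\le\;jp(w)\;\le\;kq(w)\;\le\;kq'(w'),$$
so $jp'\le kq'$ as required. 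The 1-dimensional universal property of $(j',k')$ as cocomma of $(p',q')$ now gives a unique morphism of cocones $(j',k')\to(j,k)$, which is exactly the data required for $(j',k')$ to be cocomma of $(p,q)$. The 2-dimensional aspect (uniqueness up to order between two such factorizations when the cocones are related by $j_1\le j_2$, $k_1\le k_2$) is inherited directly from the 2-dimensional universal property of $(j',k')$ with respect to $(p',q')$, since the induced factorizations coincide with those produced by that universal property.

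The argument is a clean dualization, so I do not anticipate any real obstacle; the only care needed is to apply exactness in the correct direction, namely at points of the apex of $(p',q')$ (where the compatibility $j'p'\le k'q'$ holds by construction of the cocomma), and to track that this produces witnesses in the apex of $(p,q)$ rather than vice versa. An alternative shorter route would be to invoke Proposition~\ref{prop:commas-in-exact-squares} in $\ccal^{\op}$, but since exactness is defined here concretely in $\Pos$ and is not manifestly self-dual at the level of the square, the direct argument above is more transparent.
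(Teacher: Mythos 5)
Your proof is correct and is exactly the argument the paper intends: the paper states this result as the dual of Proposition~\ref{prop:commas-in-exact-squares} without writing out a proof, and your dualization (applying exactness of $(p,q,j',k')$ at points of the apex of the presenting span $(p',q')$, where $j'p'\le k'q'$ holds) carries through without issue. No gaps.
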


\subsection{Identity and Composition of Spans and Cospans}
\label{sec:extension-Pos}

Since we have a correspondence between relations and (co)spans and we know how to compose relations, an obvious question is how to describe composition directly on (co)spans. But let us first quickly look at identities.

\pskip
The span $(\id,\id)$ represents the identity relation, but it is not weakening closed in general. The graph of the identity relation is given by the comma object of the cospan $(\id,\id)$. On the other hand, the collage of the identity relation is simply the cospan $(\id,\id)$.

\pskip
Composition of relations can be done directly on representing spans by taking comma objects (or any exact square, for that matter). If in the diagram 
\begin{equation}\label{eq:spancomp}
\vcenter{
\xymatrix@R=20pt{
&
&
\ar[1,-1]_{r}
\ar[1,1]^{s}
&
&
\\
&
\ar[1,-1]_{p}
\ar[1,1]^{q}
&
&
\ar[1,-1]_{p'}
\ar[1,1]^{q'}
&
\\
&
&
&
&
}
}
\end{equation}
$(r,s)$ is the comma span of $(q,p')$ then $(pr,q's)$ represents $\relf(p',q')\cdot\relf(p,q)$, which is immediate if we have exactness of comma squares.

\pskip It is important to note that this composition does not preserve graphs. For example, if $p=q':2\to 1$ and $q=p'=id_2$ then $\relf(pr,q's)$ is the identity on 1 but $(pr,q's)$ not an embedding span. 

\pskip
But composition of spans does preserve weakening closure:

\begin{proposition}
If in \eqref{eq:spancomp} we have that $(p,q)$ and $(p',q')$ are weakening-closed and 
$(r,s)$ is the comma span of $(q,p')$, then $(pr,q's)$ is weakening-closed.
\end{proposition}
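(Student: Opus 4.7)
The plan is to work in the concrete category $\Pos$ and unpack weakening-closedness of a span $(p,q):A\leftarrow W\to B$ as the statement that for every $(a,b)\in A\times B$ lying in the relation $\relf(p,q)=q_\ast\cdot p^\ast$, there exists some $w\in W$ with $p(w)=a$ \emph{on the nose} and $q(w)=b$ \emph{on the nose}. This is just the reading of Definition~\ref{def:graph} in $\Pos$: the canonical map $W\to\graphf(\relf(p,q))$, $w\mapsto(p(w),q(w))$, is surjective.

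With that reading, I would pick $(a,c)\in A\times C$ with $\relf(pr,q's)(a,c)$, and invoke exactness of the comma square $(r,s)/(q,p')$ (Proposition~\ref{prop:cocommas-exact}) to rewrite
\[
\relf(pr,q's) \;=\; \relf(p',q')\cdot\relf(p,q).
\]
So there exists some mediating $b\in B$ with $\relf(p,q)(a,b)$ and $\relf(p',q')(b,c)$. Now I would apply the weakening-closedness of $(p,q)$ to produce $w_0\in W$ with $p(w_0)=a$ and $q(w_0)=b$, and the weakening-closedness of $(p',q')$ to produce $w'_0\in W'$ with $p'(w'_0)=b$ and $q'(w'_0)=c$.

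The payoff is that $q(w_0)=b=p'(w'_0)$, hence in particular $q(w_0)\le p'(w'_0)$, so $(w_0,w'_0)$ defines a point $t_0$ of the comma object of $(q,p')$ with $r(t_0)=w_0$ and $s(t_0)=w'_0$; and then $pr(t_0)=a$ and $q's(t_0)=c$ exactly, witnessing the surjectivity of the composite span onto its graph. The only subtle step, and really the whole point of the argument, is that weakening-closedness gives strict equalities rather than mere inequalities in the middle coordinate; without that, the two witnesses $w_0$ and $w'_0$ would only give $q(w_0)\le b \le p'(w'_0)$, which is still fine for the comma, but it is the equalities at $a$ and $c$ that ensure $pr(t_0)=a$ and $q's(t_0)=c$. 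So the main (and only) obstacle is verifying that invoking weakening-closedness of the two factor spans produces on-the-nose equalities, after which the comma property assembles the witness for free.
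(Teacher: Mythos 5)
Your proof is correct and is exactly the argument the paper leaves implicit (the proposition is stated without proof): unpack weakening-closedness as surjectivity onto the graph, use exactness of the comma square to factor a pair $(a,c)$ in $\relf(pr,q's)$ through a mediating $b$, and observe that the on-the-nose witnesses $w_0, w'_0$ with $q(w_0)=b=p'(w'_0)$ assemble into a point of the comma object of $(q,p')$ mapping onto $(a,c)$. No gaps.
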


Composition of cospans is done by cocomma squares, dualising \eqref{eq:spancomp}, and relying on exactness of  cocomma squares.

\begin{equation}\label{eq:cospancomp}
\vcenter{
\xymatrix{
\ar[1,1]_{j}
&
&
\ar[1,-1]^{k}
\ar[1,1]_{j'}
&
&
\ar[1,-1]^{k'}
\\
&
\ar[1,1]_{i}
&
&
\ar[1,-1]^{l}
&
\\
&
&
&
&
}}
\end{equation}

\noindent Composition by cospans does not preserve collages. Indeed, similarly to the previous example, if we take $(j,k)$ and $(j',k')$ to be the collages of $\{(0,0),(0,1)\}$ and $\{(0,0),(1,0)\}$ respectively, then $(ij,lk')$ is not a collage (because it is not onto, ie, there are elements neither in the image of $ij$ nor in the image of $lk'$.

\pskip
But composition by cospans does preserve being bipartite: 

\begin{proposition}
If in \eqref{eq:cospancomp} we have that $(j,k)$ and $(j',k')$ are bipartite and $(i,l)$ is the cocomma span of $(k,j')$, then $(ij,lk')$ is bipartite.\end{proposition}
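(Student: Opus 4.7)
The plan is to unpack bipartiteness in $\Pos$ combinatorially and then verify it directly for the composite cospan. By Definition~\ref{def:collage}, bipartite means that the comparison $\cocommaf(\commaf(j,k))\to (j,k)$ is fully faithful; using the explicit description of cocommas in $\Pos$ given after Definition~\ref{def:comma} together with the description of the collage in Example~\ref{exle:collage}, this amounts to (i) both $j$ and $k$ being embeddings and (ii) no inequality $k(b)\le_C j(a)$ holding in $C$ for $a\in A, b\in B$, since such an inequality has no witness in the collage of $R:=\relf(j,k)$ (whose order has no $B\to A$ generators). The same applies to $(j',k')$, so in particular $j,k,j',k'$ are all embeddings.

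Next I would describe the cocomma $E$ of $(k,j')$ explicitly: its carrier is $C\sqcup C'$ and its order is the least partial order containing $\le_C$, $\le_{C'}$, and the bridges $k(b)\le_E j'(b)$ for $b\in B$. Since cross-generators run only $C\to C'$, any chain realizing $i(c)\le_E l(c')$ must use exactly one bridge, so
\[i(c)\le_E l(c')\ \iff\ \exists\, b\in B:\ c\le_C k(b)\ \text{and}\ j'(b)\le_{C'} c',\]
while $l(c')\le_E i(c)$ never holds. Consequently $i$ and $l$ are embeddings, hence so are the compositions $ij:A\to E$ and $lk':D\to E$. By exactness of the cocomma square (Proposition~\ref{prop:cocommas-exact}), $l^*\cdot i_*=j'_*\cdot k^*$, which gives
\[\relf(ij,lk')=k'^*\cdot(l^*\cdot i_*)\cdot j_* = k'^*\cdot j'_*\cdot k^*\cdot j_* = \relf(j',k')\cdot\relf(j,k)=:T.\]

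It then remains to check that the canonical map from the collage of $T$ (with carrier $A\sqcup D$) into $E$ is fully faithful. On the $A$- and $D$-parts, $ij$ and $lk'$ are embeddings by the previous paragraph. For $a\in A$ and $d\in D$, the forward inequality $ij(a)\le_E lk'(d)$ is, by the single-bridge criterion combined with bipartiteness of $(j,k)$ and $(j',k')$, equivalent to the existence of $b\in B$ with $a\mathrel{R}b$ and $b\mathrel{R'}d$, i.e.\ to $a\mathrel{T}d$; and the backward inequality $lk'(d)\le_E ij(a)$ never holds in $E$, while it also never holds in the collage of $T$. Hence $(ij,lk')$ is bipartite. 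The main obstacle is the single-bridge analysis for $E$: it is what rules out hidden zigzags that could otherwise inflate the represented relation, and in $\Pos$ it follows directly from the explicit cocomma description but is the step most dependent on being concretely in $\Pos$.
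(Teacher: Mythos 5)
Your proof is correct: the paper states this proposition without proof, and your direct verification --- unpacking bipartiteness as ``$j,k$ are embeddings and no $k(b)\le_C j(a)$ holds'', describing the cocomma $E$ explicitly as the collage of $\relf(k,j')$ on $C\sqcup C'$ with one-directional bridges, and then checking the three nontrivial cases of order-reflection for $\collagef(\relf(ij,lk'))\to E$ --- is exactly the intended argument in $\Pos$. The only remark worth making is that the exactness computation $\relf(ij,lk')=\relf(j',k')\cdot\relf(j,k)$ is not actually needed for bipartiteness (the forward case $ij(a)\le_E lk'(d)\Rightarrow aTd$ is automatic because $T$ is \emph{defined} as $\relf(ij,lk')$), though it is a correct and useful consistency check.
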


\section{Dual Relations in Posets}\label{sec:Poset}

The purpose of this section is to extend to relations the well-known dualising functor 
$$ f:X\to Y \ \ \mapsto \ \ \Two^f:\Two^Y\to\Two^X$$
taking a monotone function to its inverse image.
As suggested by the previous section, this can be done by applying the functor $\Two^-$ to either the legs of a tabulating span or to the legs of a co-tabulating cospan. We show that these two procedures agree and that $\Two^-$ extends to a functor on $\overline\Pos=\Rel(\Pos)$.

\pskip
The contravariance of $\Two^-$ means that the extension is contravariant on the order of the homsets (2-cells): If $r\subseteq r'$ are two relations, then tabulating them as $(p,q)\to(p',q')$ and applying a contravariant functor $F$ gives cospans 
\begin{equation}\label{eq:contravariant-2-cells}
(Fp',Fq')\to (Fp,Fq).
\end{equation}
As explained in the next remark, it follows that the extension must be covariant on relations (1-cells). (This fits well with relations as ``objects of the arrows-object'' of a  double category \cite{GP04,shulman}, a point of view that will play a role in Section~\ref{sec:extending}.) 

\begin{remark}(Covariance on relations.)
Let $F,G$ be $\Pos$-functors that are contravariant on 1-cells and covariant on 2-cells. Assume that we have a construction $\ccal\mapsto \Rel(\ccal)$ with functors $(-)_\ast:\ccal\to\Rel(\ccal)^\co$ and $(-)^\ast:\ccal\to\Rel(\ccal)^\op$ (or, equivalently, $(-)^\ast:\ccal^\op\to\Rel(\ccal)$). Further assume that there are functors $\overline F$ and $\overline  G$ that are contravariant on 2-cells. Then to complete the diagram
$$
\xymatrix@C=60pt@R=40pt{
\begin{gray}\Rel(\xcal)\end{gray}
\ar@/^/[r]^{\overline F}  
& 
\begin{gray}\Rel(\acal)\end{gray}
\ar@/^/[l]^{\overline G}  
 \\
\xcal \ar@{..>}[u]^{\begin{gray}(-)\end{gray}} 
\ar@/^/[r]^{F} & 
\acal^\op\ar@{..>}[u]_{\begin{gray}(-)\end{gray}}
\ar@/^/[l]^{G}
}
$$
we are forced to set things up in such a way that the extensions $\overline F,\overline G$ are covariant on relations.  Indeed, after fixing one of the embeddings, say we use $(-)_\ast$ on the $\xcal$-side, we need to use the other one $(-)^\ast$ on the $\acal$-side, since this is the only way to accommodate that $\overline F$ and $\overline G$ are contravariant on 2-cells. This in turn forces the extensions  $\overline F,\overline G$ to be covariant on 1-cells.
\end{remark}

The reason for later choosing  $(-)_\ast$ on the space-side and $(-)^\ast$ on the algebra-side is explained at the beginning of Section~\ref{sec:dualitypriestley}.
Here we only need the functors, defined in  Section~\ref{sec:monotone-relations} 
,  $(-)_\ast:\Pos\to\overline\Pos^\co$ and $(-)^\ast:\Pos^\op\to\overline\Pos$, where we continue to abbreviate $\overline\Pos=\Rel(\Pos)$.

\subsection{Extending to Relations via Spans}
We derive condition \eqref{eq:Twor-span}, which allows us to calculate the dual of a relation in specific examples. The formula arises from applying $\Two^-$ to a graph and then converting the resulting cospan to a relation.
Recall that a cospan $(j,k)$ represents the relation $k^\ast\cdot j_\ast$ given by $(x,y)\in k^\ast\cdot j_\ast\ \Leftrightarrow \ j(x)\le k(y)$.

\begin{proposition}[$\overline{\twobb^-}$ via spans]\label{prop:extendviaspan}
Given a weakening relation $r:X\looparrowright Y$ in $\Pos$, define $\overline\Two(r)$ via first converting $r$ into its graph
$X\stackrel{p}{\longleftarrow} R \stackrel{q}{\longrightarrow} Y$
and then applying $\twobb^-$ to the legs of the span, yielding a cospan 
$$\twobb^X\stackrel{\twobb^{p}}{\longrightarrow} 
\twobb^R \stackrel{\twobb^{q}}{\longleftarrow} 
\twobb^Y$$ 
which in turn gives rise to a relation 
\begin{equation}\label{eq:Twor-span-def}
\overline\Two(r) = (\twobb^{q})^\ast\cdot(\twobb^{p})_\ast: \twobb^X\looparrowright \twobb^Y.
\end{equation}
Then 
\begin{align}
(A,B)\in \overline\Two(r) 
\label{eq:Twor-span}
& \ \Leftrightarrow \ R[A]\subseteq B
\end{align}
where $R[A]=\{b \mid \exists a\in A \,.\, aRb\}$.
\end{proposition}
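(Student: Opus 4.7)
The plan is a direct unfolding of definitions. Starting from $\overline\twobb(r) = (\twobb^{q})^\ast\cdot (\twobb^{p})_\ast$, I would apply the characterization of the relation represented by a cospan recalled just above the proposition: $(A,B)\in k^\ast\cdot j_\ast$ iff $j(A)\le k(B)$. Here this reads $(A,B)\in\overline\twobb(r)$ iff $\twobb^{p}(A)\le\twobb^{q}(B)$ in the poset $\twobb^R$.

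Next, I would use that $\twobb^-$ is precomposition, so $\twobb^{p}(A)=A\circ p$ and $\twobb^{q}(B)=B\circ q$, and that the order on $\twobb^R$ is pointwise. The inequality therefore translates to the condition that $A(p(w))\le B(q(w))$ for every $w\in R$. Identifying the monotone maps $A\in\twobb^X$ and $B\in\twobb^Y$ with the up-sets $\{x : A(x)=1\}$ and $\{y : B(y)=1\}$, this says exactly that $p(w)\in A$ implies $q(w)\in B$ for every $w\in R$.

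Finally, I would substitute the explicit form of the graph of $r$ in $\Pos$, namely the canonical skeleton of weakening-closed subsets: $R=\{(x,y)\in X\times Y : x\,r\,y\}$ with $p,q$ the projections. The displayed condition then becomes: for every $x\in A$ and every $y$ with $x\,r\,y$, we have $y\in B$, which is exactly $R[A]\subseteq B$.

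I do not expect any serious obstacle; the proposition is in essence a calculation making explicit what $\overline\twobb$ does to a relation. The only point to keep straight is the contravariance of $\twobb^-$: applied to the legs of the tabulating span $X\leftarrow R\rightarrow Y$ it produces a cospan $\twobb^X\to\twobb^R\leftarrow\twobb^Y$ oriented in the opposite direction, so that the cospan-to-relation formula $k^\ast\cdot j_\ast$ then applies with $j=\twobb^{p}$ and $k=\twobb^{q}$ in the order used above.
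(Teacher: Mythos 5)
Your proposal is correct and is essentially the paper's own proof: both unfold $(\twobb^{q})^\ast\cdot(\twobb^{p})_\ast$ via the cospan-to-relation formula to get $\twobb^{p}(A)\le\twobb^{q}(B)$, read this pointwise on the graph $R$ as $A(p(w))\le B(q(w))$ for all $w\in R$, and identify the result with $R[A]\subseteq B$. Your remark about keeping the contravariant orientation of the cospan straight is exactly the one subtlety worth noting, and you handle it correctly.
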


\begin{proof}
We have 
\begin{align*}
(A,B)\in\overline\Two(r) \ & \Leftrightarrow \ \twobb^{p}(A)\subseteq \twobb^{q}(B)
\\ 
& \Leftrightarrow \ \forall x\in X.\, \forall y\in Y.\, x\in A \ \&\  xRy \ \Rightarrow \ y\in B 
\end{align*}
or, in one picture,
\begin{equation}\label{eq:Twor-span-pic}
\vcenter{
\xymatrix{
& R\ar[dl]\ar[dr] & \\
X\ar[dr]_{A} & \le & Y\ar[dl]^{B}\\
& \twobb &
}}
\end{equation}
iff $(A,B)\in\overline\Two(r)$.
\end{proof}

\begin{remark}\label{rmk:many-valued-valuations}
For logics with many-valued valuations in a poset $D$ we have  
$$
(A,B)\in\overline D(r) 
\ \Longleftrightarrow \ \forall x\in X.\, \forall y\in Y.\, xRy \ \Rightarrow \ A(x)\le B(y).
$$
\end{remark}
\subsection{Extending to Relations via Cospans}

In this section, we see that extending $\Two^-$ via cospans gives the same dual relations as the extension via spans from the previous section, see \eqref{eq:Twor-span} and \eqref{eq:Twor-cospan}. The formula \eqref{eq:Twor-cospan} arises from applying $\Two^-$ to a cotabulating cospan and then turning the resulting span into a relation. 
Recall that a span $(p,q)$ represents the relation $q_\ast\cdot p^\ast$ given by $(x,y)\in q_\ast\cdot p^\ast \ \Leftrightarrow \ \exists w\,.\, x\le p(w)\ \& \ q(w)\le y$.

\begin{proposition}[$\overline{\twobb^-}$ via cospans]\label{prop:extendviacospan}
Given a weakening relation $r:X\looparrowright Y$, first convert $r$ into a cospan 
$X\stackrel{j}{\longrightarrow} R \stackrel{k}{\longleftarrow} Y$
and then apply $\twobb^-$, yielding a span 
$$\twobb^X\stackrel{\twobb^{j}}{\longleftarrow} \twobb^R\stackrel{\twobb^{k}}{\longrightarrow} \twobb^Y,$$ 
and hence a 
relation 
\begin{equation}\label{eq:Twor-cospan-def}
\overline\Two(r) = (\twobb^{k})_\ast\cdot(\twobb^{j})^\ast: \twobb^X\looparrowright \twobb^Y.
\end{equation} 
Then we have  $(A,B)\in\overline\Two(r)$ if and only if
\begin{equation}\label{eq:Twor-cospan}
\forall b\in Y.\,\forall a\in X.\, a\in A \ \&\  aRb \ \Rightarrow \ b\in B.\
\end{equation}
\end{proposition}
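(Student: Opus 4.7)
The plan is to verify \eqref{eq:Twor-cospan} by unwinding the definition of the span composition $(\Two^k)_{\ast}\cdot(\Two^j)^{\ast}$ and then using the concrete description of the collage in $\Pos$. First I would rewrite the condition $(A,B)\in\overline\Two(r)$: by the description of relations represented by spans (from Section~\ref{sec:monotone-relations}), it says there exists $C\in\Two^R$ with $A\leq C\circ j$ in $\Two^X$ and $C\circ k\leq B$ in $\Two^Y$. So the task reduces to proving that such a $C$ exists iff $aRb$ and $a\in A$ together imply $b\in B$.

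Since $(j,k)$ cotabulates $r$ (Definition~\ref{def:collage}), I may take it to be $\collagef(r)$, which by Example~\ref{exle:collage} has carrier $X\sqcup Y$ with $j,k$ the coproduct injections, and whose order is generated by those of $X$ and $Y$ together with the bridging inequalities $x\leq_R y$ whenever $xRy$. Under this identification, a monotone map $C\colon R\to\Two$ is precisely a pair of upsets $(C_X,C_Y)$ in $\Two^X\times\Two^Y$ satisfying the bridging condition $xRy\Rightarrow(x\in C_X\Rightarrow y\in C_Y)$, and the two inequalities $A\leq C\circ j$, $C\circ k\leq B$ translate to $A\subseteq C_X$ and $C_Y\subseteq B$.

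For the forward direction, given such $(C_X,C_Y)$: if $a\in A$ and $aRb$, then $a\in C_X$, so by the bridging condition $b\in C_Y$, and hence $b\in B$. For the converse, assuming the right-hand side of \eqref{eq:Twor-cospan} holds, I would set $C_X:=A$ and $C_Y:=R[A]$. Weakening-closure of $R$ makes $R[A]$ an upset of $Y$ (so that $C_Y\in\Two^Y$), the bridging condition holds tautologically, and the hypothesis yields $C_Y\subseteq B$; the pair $(A,R[A])$ is then the required witness.

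The main (mild) obstacle is picking the right witness $C$ in the backward direction: it needs to be small enough to lie below $B$ yet large enough to satisfy bridging over $R$, and the weakening-closure of $R$ is precisely what makes the minimal choice $R[A]$ an upset, so that $(A,R[A])$ really defines a monotone map on the collage. One could alternatively note that the graph span and the collage cospan fit into an exact square, and then match the cospan-based and span-based computations via Proposition~\ref{prop:cocommas-exact} and Proposition~\ref{prop:extendviaspan}; but the direct verification above is shorter and makes the role of weakening-closure transparent.
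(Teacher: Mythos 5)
Your proof is correct and follows essentially the same route as the paper: unwind the span composition to the existence of a witness $C\in\Two^R$ with $A\subseteq\Two^j(C)$ and $\Two^k(C)\subseteq B$, use monotonicity of $C$ across the bridging inequality $j(x)\le k(y)$ for the forward direction, and exhibit a witness for the converse. Your witness $(A,R[A])$ is exactly the paper's choice (the upper closure of $j[A]$ in the collage), just described via the explicit $X\sqcup Y$ presentation.
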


\begin{proof}
We have by definition of $\overline\Two$ that
\begin{equation}\label{eq:Twor}
(A,B)\in\overline\Two(r) \ \Leftrightarrow \ \exists C\in\twobb^R\,.\,A\subseteq \twobb^{j}(C) \ \& \ \twobb^{k}(C)\subseteq B.
\end{equation}
For ``only if'', assume $x\in A$ and $xRy$. From $A\subseteq \twobb^{j}(C)$ we know $jx\in C$ and from $xRy$ that $jx\le ky$. Since $C:R\to\Two$ is monotone we have $ky\in C$ and it follows from $\twobb^{k}(C)\subseteq B$ that $y\in B$.
For ``if'' define  $C$ to be the upper closure of $\{j(x) \mid x\in A\}$.
\end{proof}

\begin{remark}
Recalling the definition of a collage from Example~\ref{exle:collage}, it is clear that  for the equivalence \eqref{eq:Twor-cospan}, it is crucial that $\Two^R$ consists of upward closed sets. This is also highlighted by  the diagram 
\begin{equation}\label{eq:Tworcospan}
\vcenter{
\xymatrix{
& Graph(r)\ar[dl]_{}\ar[dr]^{} & \\
X\ar[dr]_j \ar@{}[rr]|-{\le} \ar@/_2pc/[ddr]_A\ar@{{}{ }{}}@/_.6pc/[ddr]|-{\le}
& & Y\ar[dl]^{k}  \ar@/^2pc/[ddl]^B\ar@{{}{ }{}}@/^.6pc/[ddl]|-{\le}\\
& R \ar@{..>}[d]^{C}& \\
& \twobb & 
}}
\end{equation}
which can be used to express Proposition~\ref{prop:extendviacospan} more categorically by saying that 
$(A,B)\in\overline\Two(r)$ iff $(A,B)$ is a cocone for the span $\graphf(r)$. As an aside, since $(j,k)$ is a cocomma, we can always find a $C$ for which the ``$\le$'' in the two triangles can be replaced by ``$=$''. This shows that the span $\twobb^X\stackrel{}{\leftarrow} \twobb^R\stackrel{}{\rightarrow} \twobb^Y$ is weakening closed and that the ``$\subseteq$'' in \eqref{eq:Twor} can be replaced by ``$=$''. Finally, comparing \eqref{eq:Twor-span-pic} and \eqref{eq:Tworcospan} explains why we obtain the same result whether dualising the relation $r$ with $R$ being the graph in \eqref{eq:Twor-span-pic} or with $R$ being the collage in  \eqref{eq:Tworcospan}.
\end{remark}

\subsection{Functoriality and Universality of the Extension}

So far in this section, we have seen how to extend the contravariant functor $\Two^-:\Pos\to\Pos$ to relations. In order to know that this extension is functorial and does not depend on a choice of span (or choice of cospan), we need to know that the functor preserves factorizations and exact squares.

\pskip
In more detail, we will employ the general results about extending functors to weakening relations known from \cite[Theorem 4.1]{relations-spans} for the extension via spans and \cite[Theorem 5.10]{relations-cospans} for the extension via cospans. These results have also been presented in the survey \cite{relations-survey} as Theorems~3.8 and 3.10, which may be the most convenient reference for our purposes. We will later need a generalization of \cite[Theorem 4.1]{relations-spans}  from posets to concrete categories over posets. The reader may therefore also refer to Theorem~\ref{thm:universalproperty} (and the dual Theorem~\ref{thm:universalproperty2}) of this paper and instantiate the categories $\xcal$ and $\acal$ with $\Pos$.

\begin{remark}\label{rmk:extendtorelpos}
To conclude that the extension $\overline\Two$ is functorial, we will use the extension-via-spans theorem (see \cite[Thm.4.1]{relations-spans} or \cite[Thm.3.8]{relations-survey} or Theorem~\ref{thm:universalproperty}) which guarantees that the extension $\overline F$ of $F$ in 
$$
\xymatrix@C=4pc{
\overline\Pos^\co \ar[r]^{\ \overline F}& \kcal^\co
\\
\Pos
\ar[ur]_{F}
\ar[u]^{(-)_\ast}
& 
}
$$
is universal and functorial, if $F$ satisfies the following properties.
\begin{enumerate}\label{enumerate}
\item $F$ preserves maps, that is, every $Ff$ has a right adjoint $(Ff)^r$ in $\kcal$ (which is a left-adjoint in $\kcal^\co$).
\item $F$ preserves exact squares, that is,  $Fq\cdot (Fp)^r=(Fk)^r\cdot Fj$ for every exact square \eqref{eq:square-ab}.
\item $Fe\cdot (Fe)^r=\Id$ for all surjections $e$ in $\Pos$.
\end{enumerate}
For the extension via cospans, we have the same theorem with Property 3 being replaced by 
\begin{enumerate}
\setcounter{enumi}{2}
\item $(Fj)^r\cdot (Fj)=\Id$ for all embeddings $j$ in $\Pos$.
\end{enumerate}
To show that $\overline\Two$ is a functor we verify that $(-)^\ast\circ\Two$ satisfies properties 1-3 above. Since the extension is universal and therefore unique, it also follows that the span and the cospan extension agree, giving a different argument for what we have seen by direct calculation in Propositions~\ref{prop:extendviaspan} and \ref{prop:extendviacospan}.
\end{remark}

We first recall the well-known fact that $\Two$ preserves Onto-Embedding factorizations.

\begin{lemma}\label{lem:presfact}
The contravariant functor $\twobb^-:\Pos\to\Pos$ maps surjections to embeddings and embeddings to surjections.
\end{lemma}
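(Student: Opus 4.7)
The plan is to work with the concrete description of $\twobb^X$ as the poset of monotone maps $X\to\twobb$ under pointwise order, or equivalently the poset of upward closed subsets of $X$ ordered by inclusion. Under this identification, for $f:X\to Y$ the map $\twobb^f:\twobb^Y\to\twobb^X$ sends an upset $B\subseteq Y$ to its preimage $f^{-1}(B)\subseteq X$, which is always an upset because $f$ is monotone. So the two claims reduce to statements about preimage along surjections versus embeddings.

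For the first claim, suppose $e:X\to Y$ is a surjection. I would check the two conditions defining an embedding of $\twobb^Y$ into $\twobb^X$ directly. Injectivity: if $e^{-1}(B)=e^{-1}(B')$ then for any $y\in Y$ pick $x$ with $e(x)=y$; then $y\in B \iff x\in e^{-1}(B) \iff x\in e^{-1}(B') \iff y\in B'$. Order-reflection: if $e^{-1}(B)\subseteq e^{-1}(B')$ and $y\in B$, again lift $y$ to some $x\in X$ with $e(x)=y$ and conclude $x\in e^{-1}(B')$, so $y\in B'$.

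For the second claim, suppose $m:X\to Y$ is an embedding, so $m$ is injective and order-reflecting. Given an arbitrary upset $A\subseteq X$, I would exhibit a preimage by setting
\[
B \;=\; {\uparrow}m(A) \;=\; \{\,y\in Y \mid y \ge m(a) \text{ for some } a\in A\,\},
\]
which is an upset by construction. The inclusion $A\subseteq m^{-1}(B)$ is immediate by taking $y=m(a)$. The reverse inclusion $m^{-1}(B)\subseteq A$ is the step where the embedding hypothesis does actual work: if $m(x)\ge m(a)$ for some $a\in A$, then order-reflection of $m$ gives $x\ge a$, and since $A$ is an upset, $x\in A$.

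There is no real obstacle here; the one subtlety worth flagging is that the argument for surjectivity genuinely uses both parts of the definition of embedding in $\Pos$ (injectivity gives that $m(A)$ behaves well, but it is order-reflection that forces $m^{-1}({\uparrow}m(A))=A$). This is also why the proposition is naturally stated with $\Emb$ rather than just injections.
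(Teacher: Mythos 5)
Your proof is correct and follows essentially the same route as the paper: the surjection case is the same cancellation argument ($p\circ f\le q\circ f\Rightarrow p\le q$, with order-reflection already implying injectivity in a poset), and your upset ${\uparrow}m(A)$ is exactly the (left) Kan extension of $A:X\to\twobb$ along $m$ that the paper invokes for the embedding case. The only difference is that you make that extension explicit and verify $m^{-1}({\uparrow}m(A))=A$ by hand rather than citing the Kan extension.
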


\begin{proof}
Let $f:X\to Y$, hence $\twobb^f: [Y,\twobb] \to [X,\twobb]$
. If $f$ is onto and $\twobb^f(p)\le\twobb^f(q)$, then $p\circ f \le q\circ f$ and $p\le q$, proving that $\twobb^f$ is an embedding. If $f$ is an embedding and $p:X\to\twobb$, then there is $q:Y\to\twobb$ such that $q\circ f=p$ (eg, one can take $q$ as the left or right Kan-Extension of $p$ along $f$).
\end{proof}

Of central importance is that $\twobb^-$ preserves exact squares:

\begin{lemma}\label{lem:presexact}
Let 
\begin{equation}\label{eq:square-xy}
\vcenter{
\xymatrix{
& W\ar[dl]_{p}\ar[dr]^{q} & \\
X\ar[dr]_f \ar@{}[rr]|-{\le}& & Y\ar[dl]^{g} \\
& Z & 
}
}
\end{equation}
be an exact square, that is, 
$
f\circ p \,\le\, g\circ q
$
and
$
\forall x,y\,.\, (fx\le gy\ \Rightarrow \ \exists w\,.\,x\le pw \ \& \ qw\le y)
$
or, equivalently,
$$q_\ast\cdot p^\ast = g^\ast\cdot f_\ast.$$
Then
\begin{equation}\label{eq:squareTwo}
\vcenter{
\xymatrix{
& \twobb^W\ar@{<-}[dl]_{\twobb^p}\ar@{<-}[dr]^{\twobb^q} & \\
\twobb^X\ar@{<-}[dr]_{\twobb^f} \ar@{<-}@{}[rr]|-{\le}& & \twobb^Y\ar@{<-}[dl]^{\twobb^g} \\
& \twobb^Z & 
}
}
\end{equation}
is exact, that is,
$$(\twobb^q)^\ast\cdot(\twobb^p)_\ast=(\twobb^g)_\ast\cdot(\twobb^f)^\ast \ .$$
\end{lemma}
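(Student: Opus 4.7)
The plan is to unfold the claimed identity $(\twobb^q)^\ast\cdot(\twobb^p)_\ast=(\twobb^g)_\ast\cdot(\twobb^f)^\ast$ pointwise. Using the descriptions $h_\ast=\{(u,v)\mid hu\le v\}$ and $h^\ast=\{(v,u)\mid v\le hu\}$ from Section~\ref{sec:monotone-relations}, the cospan side evaluated at $(A,B)\in\twobb^X\times\twobb^Y$ reduces to the single condition $A\circ p\le B\circ q$ (any mediating $D:W\to\twobb$ may as well be taken to be $A\circ p$ itself), while the span side asks for the existence of some witness $C:Z\to\twobb$ with $A\le C\circ f$ and $C\circ g\le B$.

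One inclusion is essentially free: given such a $C$, monotonicity of $C$ together with the inequality $f\circ p\le g\circ q$ supplied by the exactness hypothesis yields $A\circ p\le C\circ f\circ p\le C\circ g\circ q\le B\circ q$, which is the cospan side.

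For the converse I would construct the witness $C$ explicitly as the upward closure in $Z$ of the image $f[A]$, viewed as an upset of $Z$, equivalently as a monotone map $Z\to\twobb$. The inequality $A\le C\circ f$ is then immediate from $f[A]\subseteq C$. To verify $C\circ g\le B$, I would fix $y\in Y$ with $C(g(y))=1$, extract from the definition of $C$ some $x\in X$ with $A(x)=1$ and $f(x)\le g(y)$, and invoke the lifting form of exactness to produce $w\in W$ with $x\le p(w)$ and $q(w)\le y$. Since $A$ is an upper set, $A(p(w))=1$; the hypothesis $A\circ p\le B\circ q$ then forces $B(q(w))=1$; and upward-closedness of $B$ finally yields $B(y)=1$, as desired.

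The main obstacle is this converse direction. The inequality $f\circ p\le g\circ q$ alone supplies only the easy half, while the second half uses the full lifting property of exactness in an essential way, and depends on guessing the right witness $C$. The canonical choice --- the $Z$-upward closure of $f[A]$ --- is natural in hindsight because it is the minimal upset of $Z$ making $A\le C\circ f$ true, so that the only content of the other inequality $C\circ g\le B$ is precisely what the exactness condition was designed to provide.
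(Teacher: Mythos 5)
Your proof is correct and follows essentially the same route as the paper: the witness $C$ you construct (the upward closure of $f[A]$ in $Z$) is exactly the paper's choice $c=\{z\in Z\mid \exists x\in a\,.\, f(x)\le z\}$, and the verification of $C\circ g\le B$ via the lifting form of exactness is identical. The only difference is that you also spell out the easy inclusion and the pointwise unfolding of both composites, which the paper leaves implicit.
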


\begin{proof}
Assume $ap \le bq$. We have to show that there is $c$ such that $a\le c f$ and $cg \le b$. 
\begin{equation}\label{eq:squareTwo2}
\vcenter{
\xymatrix@R=20pt{
& W\ar[dl]_{p}\ar[dr]^{q} & \\
X\ar[dr]_f \ar@{}[rr]|-{\le} \ar@/_2pc/[ddr]_a\ar@{{}{ }{}}@/_.6pc/[ddr]|-{\le}
& & Y\ar[dl]^{g}  \ar@/^2pc/[ddl]^b\ar@{{}{ }{}}@/^.6pc/[ddl]|-{\le}\\
& Z \ar@{..>}[d]^{c}& \\
& \twobb & 
}
}
\end{equation}
Let $c=\{z\in Z\mid \exists x\in a\,.\, f(x)\le z\}$. Then $a\le cf$. It remains to show $cg\le b$, which follows from $ap\le bq$. Indeed, if $gy\in c$ then there is $x\in a$ such that $fx\le gy$. From exactness, we get a $w$ such that $x\le pw$ and $qw\le y$, which, together with our assumption, implies $a(x)\le b(y)$, that is, due to $x\in a$, the required $y\in b$.
\end{proof}

\begin{remark}\label{rmk:exact-all-spans}
The proof does not depend on the span $(p,q)$ being weakening closed. This can be used to simplify the  computation of dual relations by choosing smaller generating spans.
\end{remark}

Relying on terminology from Section~\ref{sec:monotone-relations} and \ref{sec:extension-Pos}, we are now ready to prove

\begin{theorem}\label{thm:Twor-functor}
The extensions $\overline\Two:\Rel(\Pos)^\co\to\Rel(\Pos)$ of $\Two$ defined by applying $\Two$ to a tabulating span as in Proposition~\ref{prop:extendviaspan} or to a co-tabulating cospan as in Proposition~\ref{prop:extendviacospan} agree and are functorial. They are also universal wrt the properties 1-3 on page \pageref{enumerate}. 
Moreover, $\overline\Two_{X,Y}:\Rel(\Pos)(X,Y)^\op\to\Rel(\Pos)(\Two^X,\Two^Y)$, $R\mapsto \{(a,b)\mid R[a]\subseteq b\}$ is a complete meet-semilattice homomorphism.
\end{theorem}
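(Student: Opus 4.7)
The plan is to apply the universal extension theorem reviewed in Remark~\ref{rmk:extendtorelpos} (together with its cospan-dual), taking as the base functor $F = \Two^{(-)}$. All three required properties have essentially been prepared in the preceding lemmas; once they are in place, functoriality and universality fall out of the cited theorem, and the agreement of the span- and cospan-definitions follows from the uniqueness clause of the universal property (giving an abstract rederivation of what was shown concretely in Propositions~\ref{prop:extendviaspan} and~\ref{prop:extendviacospan}).

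First I would verify property 1: for every $f:X\to Y$ in $\Pos$, the monotone function $\Two^f:\Two^Y\to\Two^X$ has a right adjoint, given by (the upward closure of the) direct image along $f$; this witnesses that $\Two^{(-)}$ lands in the subcategory of left adjoints of $\kcal^\co = \Pos^\co$. Property 2 is precisely Lemma~\ref{lem:presexact}. For property 3 in its span form, given a surjection $e$, Lemma~\ref{lem:presfact} tells us $\Two^e$ is an embedding, and then Proposition~\ref{prop:surjections-embeddings} yields $(\Two^e)^\ast\cdot(\Two^e)_\ast = \Id$, which is exactly the required $Fe\cdot(Fe)^r = \Id$ after transporting along $(-)_\ast$. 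The cospan form is dual: embeddings go to surjections, so $(\Two^j)^r\cdot\Two^j = \Id$.

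With properties 1--3 in both formulations, the universal extension theorem produces a unique, functorial, universal $\overline\Two:\Rel(\Pos)^\co\to\Rel(\Pos)$, necessarily agreeing with both \eqref{eq:Twor-span-def} and \eqref{eq:Twor-cospan-def}. This settles functoriality, agreement of the two constructions, and the universal property in one stroke.

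For the \emph{moreover} clause I would argue directly from the concrete formula \eqref{eq:Twor-span}: $\overline\Two(R) = \{(A,B)\mid R[A]\subseteq B\}$. Joins in $\Rel(\Pos)(X,Y)$ are pointwise unions of weakening-closed subsets, and $\bigl(\bigcup_i R_i\bigr)[A] = \bigcup_i R_i[A]$, so $\bigl(\bigcup_i R_i\bigr)[A]\subseteq B$ iff $R_i[A]\subseteq B$ for all $i$. Therefore $\overline\Two\bigl(\bigcup_i R_i\bigr) = \bigcap_i \overline\Two(R_i)$; reading the domain as $\Rel(\Pos)(X,Y)^\op$ turns these joins into meets, and the empty meet is also preserved since the empty relation maps to the total relation. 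The only delicate point throughout is bookkeeping of variance on 1-cells versus 2-cells, which is really the only ``obstacle'' — the mathematical substance has already been done in Lemmas~\ref{lem:presfact} and~\ref{lem:presexact} and Proposition~\ref{prop:surjections-embeddings}.
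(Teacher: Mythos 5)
Your proposal follows the paper's own route: instantiate the extension theorem of Remark~\ref{rmk:extendtorelpos} with the functor $f\mapsto(\Two^f)^\ast$, discharge properties 1--3 via Lemma~\ref{lem:presexact}, Lemma~\ref{lem:presfact} and Proposition~\ref{prop:surjections-embeddings}, and obtain functoriality, universality and agreement from the cited theorem; your direct computation for the \emph{moreover} clause is also correct (and is a welcome addition, since that part is not covered by the extension machinery). The one place you should tighten is Property 1. What the theorem asks for is that the \emph{relation} $Ff=(\Two^f)^\ast$ have a right adjoint in $\kcal=\overline\Pos^{\,\co}$, and this holds for free because $g_\ast\dashv g^\ast$ in $\overline\Pos$ for \emph{every} monotone map $g$; no property of $\Two^f$ beyond monotonicity is used, and note that $\kcal^\co$ here is $\overline\Pos$, not $\Pos^\co$. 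Your verification instead checks that $\Two^f$ has an adjoint as a monotone function, which is the wrong level, and the adjoint you name is on the wrong side: the upward closure of the direct image is the \emph{left} adjoint of $f^{-1}$ on up-sets, since $\up f[A]\subseteq B\iff A\subseteq f^{-1}(B)$. Neither slip is fatal, since the property you need is true and immediate, but as written the justification establishes a different statement. Finally, invoking uniqueness to get agreement of the span and cospan extensions still requires the small check that each concrete construction restricts correctly along $(-)_\ast$ (the paper's verification that $\overline\Two\circ(-)_\ast=Ff$); alternatively, agreement is already given concretely by Propositions~\ref{prop:extendviaspan} and~\ref{prop:extendviacospan}, as you note.
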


\begin{proof}
Let $F=(-)^\ast\cdot\Two$ in Remark~\ref{rmk:extendtorelpos}.
To verify Property 1 of Remark~\ref{rmk:extendtorelpos}, we note that 
$F$ takes a map $f:X\to Y$ and sends it to the relation $Ff=(\Two^f)^\ast$, see Section~\ref{sec:monotone-relations}. This verifies that $(\Two^f)^r=(\Two^f)_\ast$ is the left-adjoint of $Ff$ in $\Pos=\kcal^\co$ and the right-adjoint of $Ff$ in $\Pos^\co=\kcal$.

For Property 2, we use that $\Two$ preserves exact squares by Lemma~\ref{lem:presexact}. That $(-)^\ast$, and $(-)_\ast$, preserves exact squares is immediate from writing out the definitions.

Both the span and the cospan version of Property 3 follow from Proposition~\ref{prop:surjections-embeddings} and Lemma~\ref{lem:presfact}.

Finally, we need to verify that $\overline F$ agrees with $\overline\Two$ as defined in Propositions~\ref{prop:extendviaspan} or \ref{prop:extendviacospan}. 
To this end, thanks to the universality and uniqueness of $\overline F$, it suffices to show that ${\overline\Two}\circ(-)_\ast=Ff$.
In case of the extension by spans, on the left-hand side, a map $f:X\to Y$ is sent by $(-)_\ast$ to the span $(p,q)$ of the cospan $(f,\id)$, which in turn is then dualised by 
$\overline \Two$ 
to $(\Two^q)^\ast\cdot(\Two^p)_\ast$, see \eqref{eq:Twor-span-def}. Since $(p,q)$ and $(\id,f)$ both form exact squares with the cospan $(f,\id)$ and since $\Two^-$ preserves exact squares, we have $(\Two^q)^\ast\cdot(\Two^p)_\ast= (\Two^f)^\ast\cdot\id_\ast = Ff$.

In case of the extension by cospans, on the left-hand side, a map $f:X\to Y$ is sent by $(-)_\ast$ to the cospan $(f,\id)$, which in turn is then dualised by 
$\overline \Two$ 
to $\id_\ast\cdot(\Two^f)^\ast$, see \eqref{eq:Twor-cospan-def}, which equals $Ff$.
\end{proof}

\begin{remark}
To extend functors to relations via spans, it is in fact sufficient to require that the functor preserves exact squares with weakening-closed spans, since spans are composed by commas (see Section~\ref{sec:exactsquares}) and commas are weakening closed. 
\end{remark}
\begin{remark}[Independence of choice of span]\label{rmk:allspans}
Under the conditions of the extension theorem, it is the case that the relation lifting on a relation $R$ can be computed by applying the functor to any representing \emph{weakening-closed} span.
But if, as it is the case in our situation, a category has cocomma objects and cocomma objects are exact, then the relation lifting can be computed on any span, including those that are not weakening closed. This follows from the facts that (i) two spans represent the same relation iff they have isomorphic cocommas (Proposition~\ref{prop:commacocomma}), that (ii) cocommas are exact (Proposition~\ref{prop:cocommas-exact}) and that (iii) $\Two^-$ preserves exact squares (Lemma~\ref{lem:presexact} and Remark~\ref{rmk:exact-all-spans}). A dual argument shows that if two cospans represent the same relation, then applying $\Two^-$ to both cospans gives the same relation.\end{remark}

Theorem~\ref{thm:adjunctionrelations} will show that we can extend not only the functor $\Two^-$, but also to the adjunction $\Two^-\dashv(\Two^-)^\op:\Pos^\op\to\Pos$.

\subsection{Examples}\label{sec:exles-pos}

We illustrate different interpretations of the dual $\overline\Two(R)$ of a relation $R$
$$ 
(A,B)\in\overline\Two(R) \ \Leftrightarrow \ R[A]\subseteq B.
$$
through four different applications to Hoare logic, duality theory, domain theory and coalgebraic logic.

\paragraph{Hoare Logic.} First, an example from program verificiation and the relational theory of computation.

\begin{example} If $R\subseteq X\times X$ is the relation representing a non-deterministic computation, then $(A,B)\in\overline\Two(R)$ iff inputs satisfy $A$ then outputs satisify $B$. In other words, $(A,B)\in\overline\Two(R)$ iff $(A,B)$ are a pair of pre- and post-conditions of the computation $R$, or, $$\{A\}R\{B\}$$ in a notation common in program verification and Hoare logic. Note that this is indeed a weakening relation as we have as one of the rules of Hoare logic
$${\frac{A' \le  A \quad\{A \}\  R\ \{B\} \quad B\le B'}{\{A' \}\  R \  \{B'\}}}.$$
Moreover, the meet preserving function $\overline\Two_{X,Y}$ maps a relation to its theory of precondition and postcondition pairs while its left-adjoint
$$
\xymatrix@C=8pc{
\Rel(\Pos)(X,Y)^\op 
\ar@{}[r]|-\bot
\ar@/_1pc/[r]_{\textrm{Theory}}
\ar@{<-}@/^1pc/[r]^{\textrm{Implementation}}
& 
\Rel(\Pos)(\Two^X,\Two^Y)
}
$$
takes a `specification' $S\in\Rel(\Pos)(\Two^X,\Two^Y)$ to its largest relation `implementation' $\bigcup\{R\mid S\subseteq \overline\Two(R)\}$.\qed
\end{example}

\paragraph{Duality Theory.} We describe quotienting by an equivalence relation or preorder in terms of the dual relation. We emphasise that even to describe the dual of equivalence relations on a discrete set one is led to consider weakening relations with respect to a non-discrete order, namely the inclusion order between subsets. From a technical point of view, this stems, on the one hand, from the fact that we work with a dualising object $\Two$ that is equipped with an order and, on the other hand, from the fact that the relationship between spans and cospans is not mediated via pullback/pushout but via comma/cocomma, see Definition~\ref{def:comma} and Section~\ref{sec:span-cospan-duality}. Working with a discrete $\Two$ and with discrete spans/cospans, we would not obtain a dual equivalence between, say, relations on finite sets and relations on finite Boolean algebras.

\begin{example}\label{exle:equivalence-relation}
Let $R$ be a relation on a set $X$. 
\begin{enumerate}
\item If $R$ is reflexive then $\overline\Two(R)\subseteq \Id_{\Two^X}$.
\item If $R$ is reflexive and transitive then $\overline\Two(R)\subseteq \Id_{\Two^X}$ and $\overline\Two(R);\overline\Two(R)\supseteq\overline\Two(R)$. Such a relation $\overline\Two(R)$ is called \emph{interpolative}.
\item If $R$ is an equivalence relation, then $\Two^{X/R}$ is bijective to the set $\{(A,A)\mid R[A]\subseteq A\}$ of reflexive elements of $\overline\Two(R)$.
\end{enumerate}
Reflexive and transitive relations are idempotent relations above identity and interpolative relations are idempotent relations below identity. So item 2 becomes the obvious statement that duality maps idempotent relations above identity to idempotent relations below identity. Since reflexive and transitive relations are monads, we can also view item 2 as the duality of monads and comonads.\qed
\end{example}

The example generalises to posets $X$. The first two items transfer verbatim, noting that $\Id_A$ now refers to the order of $A$.
\begin{example}\label{exle:preorder-quotient}
Let $R$ be a weakening relation on a poset $X$. 
If $R$ is a preorder, then the set $\Two^{X/R}$ of upper sets of $X/R$ is bijective to  $\{(A,A)\mid R[A]\subseteq A\}$. Here $X/R$ is the partial order quotient of $X$ wrt $R$.
\end{example}
\noindent
These observations will lead to new duality results for categories where objects are endo-relations, see Section~\ref{sec:exles-ordreg}.

\paragraph{Bitopological spaces.}
We present an example from the theory of bitopological spaces. A bitopological space $(X,\tau_-,\tau_+)$ is a set $X$ with two topologies. While certain complete lattices, known as frames can be considered as algebraic duals of topological spaces (see \cite{StoneSpaces} for details), pairs of lattices $(L_-,L_+)$ dualise bitopological spaces. This setting is of interest  because adding two weakening relations
$$\con:L_-\looparrowright L_+^\partial \quad\quad\quad \tot:L_+^\partial\looparrowright L_-$$
to the pair of lattices one can characterise a large class of well-known topological spaces by a finitary structure \cite{bitopological}. The functor from bitopological spaces to so-called d-frames is easily explained. It takes a space $(X,\tau_-,\tau_+)$ to the frames $L_-=\tau_-$ and $L_+=\tau_+$ with $\con$ defined as the set of pairs $(a_-,a_+)\in\tau_-\times\tau_+$ such that $a_-\cap a_+=\emptyset$ and $\tot$ as the set of pairs such that $a_-\cup a_+=X$. (The names $\con$ and $\tot$ should remind us of `consistent' and `total'.) The functor from d-frames to bitopological spaces takes a structure $(L_-,L_+,\con,\tot)$ to the bitopological space $(X,\tau_-,\tau_+)$ where $X$ is the set of pairs $(p_-,p_+)$ of frame morphisms $p_-:L_-\to\Two$ and $p_+:L_+\to\Two$ such that 
\begin{align}
\label{eq:con}
\forall (a_-, a_+)\in\con\,.\, p_-(a_-)=0 \ \textrm{or}\  \ p_+(a_+)=0\\
\label{eq:tot}
\forall  (a_+,a_-)\in\tot \,.\,   p_-(a_-)=1 \ \textrm{or}\  \ p_+(a_+)=1
\end{align}
and the topologies $\tau_-$ and $\tau_+$ are generated by basic opens $\{p \mid p(a) = 1\}$ where $a$ ranges over $L_-$ and $L_+$, respectively.
Using formula \eqref{eq:Twor-span} to compute the dual of a relation, one can verify

\begin{example}[Duals of d-frames]
The carrier of the dual of a d-frame $(L_-,L_+,\con,\tot)$ is the intersection $\overline\Two(\con)\cap\overline\Two(\tot)$ of the dual of $\con$ and the dual of $\tot$. For the proof, one verifies that $(p_-,p_+)$ satisfies \eqref{eq:con} iff $(p_-,p_+)\in\overline\Two(\con)$ and that 
$(p_-,p_+)$ satisfies \eqref{eq:tot} iff $(p_-,p_+)\in\overline\Two(\tot)$. 
\end{example}

\paragraph{Modal Logic.}
In modal and coalgebraic logic the notion of $R$-coherent pairs arises from the study of bisimulations for so-called neighbourhood frames \cite{HKP}.

\begin{example}
A quick look at Definition 2.1 in \cite{HKP} of $R$-coherent pairs shows that, given a relation $R\subseteq X_1\times X_2$, a pair $(U_1,U_2)$ with $U_i\subseteq X_i$ is $R$ coherent if $(U_1,U_2)\in\overline\Two(R)$ and $(U_2,U_1)\in\overline\Two(R^{-1})$. 
\end{example}

\noindent
Due to the presence of the converse relation $R^{-1}$ in the definition above, given $X_1\leftarrow R\rightarrow X_2$, the relation of $R$-coherence is the pullback of $\Two^{X_1}\rightarrow \Two^R\leftarrow\Two^{X_2}$. This observation opened the way to coalgebraic generalisations \cite{BH17,GHK}. It would be interesting to pursue these in the ordered setting.

\section{Dual Relations in Priestley Spaces}\label{sec:Priestley}

We will use the results from the previous section on weakening relations to show that the dual equivalence of Priestley spaces and distributive lattices extends from maps to relations.

\subsection{Priestley Spaces and Distributive Lattices}

We start out by defining the category $\Rel(\DL)$ of distributive lattice relations and the category  $\Rel(\Pri)$ of Priestley relations.
We defined $\DL$-relations in Definition~\ref{def:DLrelation}.

\begin{definition}[$\Rel(\DL)$]\label{def:reldl}
The category $\Rel(\DL)$, abbreviated to $\overline\DL$, has the same objects as $\DL$ and $\DL$-relations as arrows. Homsets are ordered by inclusion. 
\end{definition}

$\Pri$-relations can be defined in the same way. Recall that a Priestley space $(X,\le,\tau)$ is a compact Hausdorff space $(X,\tau)$ with an order relation satisfying the Priestley separation axiom, that is, $x\not\le y$ only if there is a clopen downset $U_-$ and a clopen upset $U_+$ such that $U_-\cap U_+=\emptyset$ and $x\in U_+$ and $y\in U_-$.

\begin{definition}[$\Rel(\Pri)$]\label{def:relpri}
A $\Pri$-relation $A\looparrowright B$ is a topologically-closed and upward-closed subspace of $A^\op\times B$.  The category $\Rel(\Pri)$, or $\overline\Pri$ for short, has the same objects as $\Pri$ and $\Pri$-relations as arrows. Homsets are ordered by inclusion.
\end{definition}

For future reference we prove some properties that will be needed later. In particular, the properties below establish that $\DL$ and $\Pri$ are examples of concretely order-regular categories as defined in Section~\ref{sec:poscat}. (Note that if a functor $\acal\to\bcal$ creates limits or lifts limits and $\bcal$ is complete, then the functor preserves limits.)

\begin{proposition}\label{prop:DLPos} \ 
$U:\DL\to\Pos$ creates $\Pos$-limits and $(\Onto,\Emb)$ factorisations. $\DL$ is order-regular. Comma squares in $\DL$ are exact. Identities and composition in $\Rel(\DL)$ are inherited from $\Rel(\Pos)$. 
\end{proposition}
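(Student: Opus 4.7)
The plan is to verify the four assertions in turn, using that $\DL$ is a $P$-variety together with general facts about algebraic forgetful functors.

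First, for the creation of $\Pos$-limits, I would observe that for any weighted diagram in $\DL$, the underlying poset carries a unique $\DL$-structure defined componentwise that makes all cone projections $\DL$-morphisms and the induced mediating maps $\DL$-morphisms; this is the standard fact that the forgetful functor of a $\Pos$-enriched variety creates weighted limits. Concretely, products, equalizers, powers $X\pitchfork B$ (monotone maps $X\to UB$ with pointwise operations), and commas (the sub-$\DL$ $\{(a,b)\mid j(a)\le k(b)\}$ of $A\times B$) all lift uniquely. For the $(\Onto,\Emb)$ factorisation, given $f:A\to B$ in $\DL$ I factor $Uf$ in $\Pos$ as $Uf=m\circ e$ and observe that the set-theoretic image of $f$ is closed under the finitary operations of $\DL$ (since $f$ preserves them), hence inherits a unique $\DL$-structure making $e$ a surjection and $m$ an embedding in $\DL$.

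Second, for order regularity of $\DL$: the excerpt already notes that every $P$-variety is the category of algebras for a strongly finitary regular monad on $\Pos$, hence is order-regular in the sense of \cite{ordered-algebras}. I would simply quote this and check that the concrete order-regularity conditions (stability of surjections under pullback, existence of $(\Onto,\Emb)$ factorisations in the $\Pos$-enriched sense) are immediate from the first part.

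Third, for exactness of comma squares in $\DL$: since $U$ creates commas, the $\DL$-comma of a cospan $(j,k)$ has underlying $\Pos$-square the $\Pos$-comma of $(Uj,Uk)$. By Proposition~\ref{prop:cocommas-exact} the latter is exact in $\Pos$. Exactness is the equation $q_\ast\cdot p^\ast=k^\ast\cdot j_\ast$ of weakening relations on the underlying posets, and this equation is preserved by $U$; therefore the $\DL$-comma square is exact in the sense that its span and cospan represent the same weakening relation on the underlying posets.

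Fourth, for the inheritance of identities and composition in $\Rel(\DL)$ from $\Rel(\Pos)$: the identity relation on $A$ in $\Rel(\Pos)$, namely the order of $UA$, is tabulated by the $\DL$-comma of $(\id_A,\id_A)$, which exists by the first part, so it is a $\DL$-relation. For composition, given $\DL$-relations $R:A\looparrowright B$ and $S:B\looparrowright C$ tabulated by $\DL$-spans $(p,q)$ and $(p',q')$, I form the $\DL$-comma of $(q,p')$ to get a span $(r,s)$, producing a composite span $(pr,q's)$ in $\DL$. By the third part this comma is exact, so $(pr,q's)$ represents $\relf(p',q')\cdot\relf(p,q)=S\cdot R$ at the $\Pos$-level; hence the $\Pos$-composite $S\cdot R$ is tabulated by a $\DL$-span and is therefore a $\DL$-relation, matching the composition inherited from $\Rel(\Pos)$.

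The main obstacle is really the third step: one must be careful that exactness, which is formulated using $(-)_\ast$ and $(-)^\ast$ in $\Rel(\Pos)$, transfers correctly from a created comma, and one must confirm that the underlying $\Pos$-comma of a $\DL$-comma is computed on the underlying posets (which is exactly what ``$U$ creates $\Pos$-limits'' means). Once this is in hand, the remaining assertions are either routine or direct applications of Section~\ref{sec:span-cospan-duality}.
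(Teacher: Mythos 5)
Your proposal is correct and follows essentially the same route as the paper, which simply cites the general facts about P-varieties for the first two sentences (creation of $\Pos$-limits and $(\Onto,\Emb)$ factorisations, order regularity) and notes that exactness of commas and the inheritance of identities and composition follow from these. Your write-up usefully spells out the details the paper leaves implicit, in particular that exactness is by definition a statement about the underlying $\Pos$-square, so a created comma inherits exactness from Proposition~\ref{prop:cocommas-exact}.
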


\begin{proof}
These properties of the first two sentences are true for all P-varieties (and P-quasi-varieties) \cite{ordered-algebras}. The others follow from this.
\end{proof}

\begin{proposition} \ \label{prop:PriPos}
The forgetful functor $V:\Pri\to\Pos$ lifts $\Pos$-limits and factorisations uniquely. Comma squares in $\Pri$ are exact. Identities and composition in $\Rel(\DL)$ are inherited from $\Rel(\Pos)$. 

\end{proposition}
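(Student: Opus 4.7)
The plan is to handle the three assertions in sequence, exploiting the fact that Priestley spaces are precisely the compact Hausdorff ordered spaces in which $\le$ is closed in the product and the topology is totally order-disconnected. Uniqueness of the lift along $V$ will be automatic throughout, since once the required Priestley structure on a $\Pos$-limit is identified as a closed subspace of a product of Priestley spaces, both the topology and the order are forced.

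First, for lifting limits, I would handle conical products and equalizers by equipping the $\Pos$-limit with the obvious subspace-or-product topology, which is Priestley. For a power $X \pitchfork B$ with $X \in \Pos$ and $B \in \Pri$, I would realize it as the subspace of $B^X$ consisting of monotone maps; this subspace is defined by the closed conditions $f(x) \le f(y)$ for $x \le_X y$ (closed since $\le_B$ is closed in $B \times B$), hence is Priestley with the subspace topology. For the $(\Onto,\Emb)$-factorization of $f: A \to B$ in $\Pri$, the image $f(A)$ is closed in $B$ because $f$ is a continuous map from a compact space into a Hausdorff space; the subspace topology and induced order make $f(A)$ Priestley, giving $A \twoheadrightarrow f(A) \hookrightarrow B$ as the unique lift of the $\Pos$-factorization.

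Second, for exactness of commas, I would compute the $\Pri$-comma of $A \stackrel{j}{\to} C \stackrel{k}{\leftarrow} B$ as $W = \{(a,b) \in A \times B : j(a) \le k(b)\} = (j \times k)^{-1}(\le_C)$, which is closed in $A \times B$ because $\le_C$ is closed in $C \times C$ and $j \times k$ is continuous, hence Priestley with the subspace topology. Since this $W$ has the same underlying poset as the $\Pos$-comma and exactness $q_\ast \cdot p^\ast = k^\ast \cdot j_\ast$ is a condition on underlying relations, exactness follows directly from Proposition~\ref{prop:cocommas-exact} applied to $V W$.

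Third, for identities and composition in $\Rel(\Pri)$, I would observe that the identity at $A$ is $\le_A$, a closed upper set of $A^\op \times A$, hence a $\Pri$-relation. Given $R: A \looparrowright B$ and $S: B \looparrowright C$ in $\Rel(\Pri)$, I would express $S \cdot R$ computed as in $\Rel(\Pos)$ as the projection $\pi_{AC}$-image of $(R \times C) \cap (A \times S) \subseteq A \times B \times C$. The hard part here is closedness of this image: it follows because $A \times B \times C$ is compact Hausdorff, $(R \times C) \cap (A \times S)$ is closed and therefore compact, and its continuous image lies in the Hausdorff space $A \times C$; this step is exactly where compactness of the middle space $B$ is indispensable. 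Upward-closure is preserved by set-theoretic composition using weakening of $R$ and $S$. Once closedness and upward-closure are established, associativity and unit laws are inherited verbatim from $\Rel(\Pos)$, completing the proof.
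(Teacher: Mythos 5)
Your proposal is correct and follows essentially the same route as the paper: lift limits and powers as closed subspaces (forced, hence unique), obtain the $(\Onto,\Emb)$-factorisation from closedness of continuous images of compact spaces in Hausdorff spaces, and then derive exactness of commas and the relational identities/composition from the fact that $V$ preserves these constructions. You merely spell out more explicitly than the paper does the compactness argument showing that the projection defining relational composition has closed image, which is indeed the point where the third assertion genuinely relies on the factorisation system established in the second.
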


\begin{proof}
1) Ordinary  limits in $\Pri$ are equalisers of products equipped with the subspace topology. Cotensors $I\pitchfork X$, with $I$ a  poset and $X$ a Priestley space, are given by $\{(x_i)_{i\in I} \mid i\le_I j \ \Rightarrow \  x_i\le_X x_j\}$, which is a closed subspace of the $|I|$-fold power of $X$ and hence a Priestley space. It follows from \cite[Theorem 3.73]{kelly} that all weighted limits exist. $V$ lifts these limits uniquely, since the property of being a limit prescribes that limits must be equipped with the subspace topology. 
2) $\Pri$ has a factorisation system consisting of embeddings with the subspace topology and surjections. 
3) The statements about comma squares, identities and composition follow from the above.
\end{proof}

The next lemma contains the crucial technical observation.

\begin{lemma}\label{lem:DLPri-pres-exact-squares}
The contravariant functors  $\twobb^-:\DL\to\Pri$ and $\twobb^-:\Pri\to\DL$ preserve exact squares. 
\end{lemma}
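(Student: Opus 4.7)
The plan is to unpack the statement and reduce it in both directions to a classical separation principle, using the concrete description of exactness from Section~\ref{sec:exactsquares}. Since composition in $\Rel(\DL)$ and in $\Rel(\Pri)$ is inherited from $\Rel(\Pos)$ (Propositions~\ref{prop:DLPos} and \ref{prop:PriPos}), a square in $\DL$ or $\Pri$ is exact iff its underlying $\Pos$-square is, i.e.\ satisfies $jp\le kq$ together with the witness condition $\forall a,b\,.\, ja\le kb \Rightarrow \exists w\,.\, a\le pw \text{ and } qw\le b$. The contravariant $\Pos$-functor $\Two^{-}$ swaps spans and cospans, so in each case exactness of the resulting square amounts to: for every $\alpha\in\Two^A$ and $\beta\in\Two^B$ with $\Two^p(\alpha)\le\Two^q(\beta)$ there exists $\gamma\in\Two^C$ satisfying $\alpha\le\Two^j(\gamma)$ and $\Two^k(\gamma)\le\beta$. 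I will verify this condition directly in each case.

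For $\Two^{-}\colon\DL\to\Pri$, I identify $\alpha$ with the prime filter $F_\alpha=\alpha^{-1}(1)\subseteq A$ and $\beta$ with the prime ideal $I_\beta=\beta^{-1}(0)\subseteq B$. The hypothesis $\alpha\circ p\le\beta\circ q$ reads: whenever $pw\in F_\alpha$, one has $qw\notin I_\beta$. The desired $\gamma$ is a prime filter $F_\gamma\subseteq C$ with $j(F_\alpha)\subseteq F_\gamma$ and $F_\gamma\cap k(I_\beta)=\emptyset$, which by the prime filter theorem exists iff the filter of $C$ generated by $j(F_\alpha)$ is disjoint from the ideal of $C$ generated by $k(I_\beta)$. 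If these were to meet, the $\DL$-homomorphy of $j$ and $k$ would yield $a\in F_\alpha$ and $b\in I_\beta$ with $ja\le kb$; the exactness witness $w$ for the original square would then satisfy $pw\in F_\alpha$ (since $F_\alpha$ is upper-closed) and $qw\in I_\beta$ (since $I_\beta$ is lower-closed), contradicting the hypothesis.

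For $\Two^{-}\colon\Pri\to\DL$, the data $\alpha,\beta$ are clopen upper sets and the hypothesis is $p^{-1}(\alpha)\subseteq q^{-1}(\beta)$. The desired $\gamma$ is a clopen upper set in $C$ that contains $\uparrow\!j(\alpha)$ and is disjoint from $\downarrow\!k(B\setminus\beta)$. Both saturations are closed by standard properties of Priestley spaces: $j(\alpha)$ and $k(B\setminus\beta)$ are continuous images of clopens hence closed, and in a Priestley space the upper (resp.~lower) saturation of a closed set is closed. Priestley's separation theorem will then produce the required clopen upper $\gamma$ as soon as the two saturations are disjoint. Disjointness follows by the same pattern: any common point furnishes $a\in\alpha$ and $b\in B\setminus\beta$ with $ja\le kb$, and the exactness witness $w$ gives $pw\in\alpha$ and $qw\notin\beta$, contradicting $p^{-1}(\alpha)\subseteq q^{-1}(\beta)$.

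The hard part is conceptual rather than computational: one has to recognise that the Beck--Chevalley witness $w$ on the $\Pos$-underlying square is exactly what rules out the only possible obstruction to each separation. Once this is in place, the classical prime filter theorem in the first direction and Priestley's separation theorem in the second complete the argument in an essentially symmetric fashion.
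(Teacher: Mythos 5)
Your proof is correct and follows essentially the same route as the paper's: in both directions you reduce exactness of the dual square to separating the filter generated by $j(F_\alpha)$ from the ideal generated by $k(I_\beta)$ (resp.\ the closed up-set $\uparrow\! j(\alpha)$ from the closed down-set $\downarrow\! k(B\setminus\beta)$), establish disjointness from the exactness witness $w$ of the original square, and conclude by the prime filter theorem (resp.\ Priestley separation). The only cosmetic difference is that the paper proves the ``disjoint closed up-set / closed down-set are separated by a clopen up-set'' step inline via a compactness covering argument, where you cite it as a standard strengthening of the Priestley separation axiom.
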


\begin{proof}
For the proof, we use the notation of Lemma~\ref{lem:presexact}.  
For $\twobb^-:\DL\to\Pri$, suppose we have the exact square \eqref{eq:square-xy} in $\DL$ and its image under $\Two$ in $\Pri$ as in \eqref{eq:squareTwo}. We need to show that we can find an appropriate $\DL$-morphism $c$ in \eqref{eq:squareTwo2}. The forward image of $a$ via $f$ is a filter basis, that is, 
\begin{equation}\label{eq:fa+}
f[a_+] = \{z\in Z \mid \exists  x\,.\, a(x)=1 \ \& \  f(x)\leq z\}
\end{equation} 
is a filter. Likewise, 
\begin{equation}\label{eq:gb-}
g[b_-] = \{z\in Z\mid \exists y\,.\, b(y)=0\ \& \ z\leq g(y)\}
\end{equation} 
is an ideal. 

Assume  $a\circ p \leq b\circ q$, that is, $\Two^p(a)\subseteq \Two^q(b)$. Then $f[a_+]$ is disjoint from $g[b_-]$. For suppose not. Then for some $x$ and $y$, $a(x)=1$, $b(y)=0$, and $f(x)\leq g(y)$. By exactness, there is a $w$ so that $x\leq p(w)$ and $q(w)\leq y$. But then our assumption tells us that $a(x)\leq b(y)$, contradicting  $a(x)=1$ and $b(y)=0$. Hence $f[a_+]$ is disjoint from $g[b_-]$.

Therefore, by the prime ideal theorem, $f[a_+]$ and $g[b_-]$ extend to some $c\in \Two^Z$, so that $z\in f[a_+]$ implies $c(z)=1$ and $z\in g[b_-]$ implies $c(z)=0$. That is, $a\leq \Two^f(c)$ and $\Two^g(c)\leq b$, as required.

\medskip
For $\twobb^-:\Pri\to\DL$,  suppose \eqref{eq:squareTwo2} is in $\Pri$.
Define $f[a_+]$ and $g[b_-]$ as above. Evidently,
$$f[a_+] = \uparrow f(a^{-1}(\{1\}))\quad\quad\text{and}\quad\quad 
g[b_-] = \downarrow g(b^{-1}(\{0\})).$$
Adapting the argument for $\DL$ above, suppose $\Two^p(a)\subseteq \Two^q(b)$. Then $f[b_+]\cap g[b_-]=\emptyset$. Because $a^{-1}(\{1\})$ is closed, it is compact. So $f(a^{-1}(\{1\}))$ is compact, hence closed. The upper set determined by any closed set is  closed. So $f[a_+]$ is an upper compact set. Likewise, $g[b_-]$ is a lower compact set.

Fix $z\in f[a_+]$. For each $z'\in g[b_-]$, $z\nleq z'$. So there is a  clopen downset $U_{z'}$ containing $z'$ and excluding $z$.
These cover $g[b_-]$. So finitely many, say $U_{z'_0}$, ..., $U_{z'_{n-1}}$, suffice to cover. Thus the intersection of their complements is an upper clopen containing $z$ and disjoint from $g[b_-]$. Call it $V_z$. 
The upper clopens $V_z$ cover $f[a_+]$. And again finitely many, say $V_{z_0}$, ..., $V_{z_{m-1}}$, suffice to cover $f[a_+]$. The union of these is an upper clopen that covers $f[a_+]$ and is disjoint from $g[b_-]$.

Let $c$ be the corresponding element of $\Two^Z$. Then $a\leq \Two^f(c)$ and $\Two^g(c)\leq b$.
\end{proof}

\begin{corollary}\label{cor:cocommas-exact-PriDL}
In $\DL$ and $\Pri$  cocomma squares are exact. 
\end{corollary}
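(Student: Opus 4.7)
The plan is to reduce the corollary to the already-known exactness of commas in $\Pos$ (Proposition~\ref{prop:cocommas-exact}) by transporting squares across the $\Pos$-enriched Priestley duality $\twobb^-$.

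For the $\DL$ case, I start from an arbitrary span $(A\stackrel{p}{\leftarrow} R\stackrel{q}{\rightarrow} B)$ in $\DL$ together with its cocomma cospan $(A\stackrel{j}{\rightarrow} C\stackrel{k}{\leftarrow} B)$, and apply $\twobb^-:\DL\to\Pri$. Because this is a $\Pos$-enriched dual equivalence, contravariant on $1$-cells and covariant on $2$-cells (Section~\ref{sec:stone-duality}), it carries weighted colimits to weighted limits, and in particular turns this cocomma square into a comma square in $\Pri$ with cospan $(\twobb^p,\twobb^q)$ and comma span $(\twobb^j,\twobb^k)$.

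The next step uses Proposition~\ref{prop:PriPos}: the forgetful $V:\Pri\to\Pos$ lifts weighted limits, so the underlying $\Pos$-square of this $\Pri$-comma is itself a comma in $\Pos$, and therefore exact by Proposition~\ref{prop:cocommas-exact}. I then push this exact $\Pri$-square back to $\DL$ via $\twobb^-:\Pri\to\DL$, which preserves exact squares by Lemma~\ref{lem:DLPri-pres-exact-squares}. Since $\twobb^-\twobb^-\cong\Id$ as $\Pos$-enriched functors, the resulting double-dual square is isomorphic to the square I started with, and exactness (a property of the underlying $\Pos$-diagram) is invariant under such an isomorphism. Hence the cocomma square in $\DL$ is exact. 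The argument for $\Pri$ is strictly symmetric: a cocomma in $\Pri$ is sent by $\twobb^-$ to a comma in $\DL$, whose underlying $\Pos$-square is a comma in $\Pos$ by Proposition~\ref{prop:DLPos} and hence exact, and one closes the loop by applying $\twobb^-:\DL\to\Pri$.

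The only delicate step is the variance bookkeeping in the very first move: verifying that the $\Pos$-enriched duality, contravariant on $1$-cells but covariant on $2$-cells, really does take a cocomma with cone $jp\le kq$ in $\DL$ to a comma with cone $\twobb^p\,\twobb^j\le \twobb^q\,\twobb^k$ in $\Pri$. This is a general fact about such $\Pos$-enriched equivalences — they interchange weighted colimits and weighted limits — but spelling it out once is the main obstacle to a fully formal write-up.
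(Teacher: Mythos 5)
Your argument is correct and is essentially the paper's own proof: the $\Pos$-enriched duality turns a cocomma in $\DL$ into a comma in $\Pri$ (which is exact by Proposition~\ref{prop:PriPos}), and exactness is transported back because $\twobb^-$ preserves exact squares (Lemma~\ref{lem:DLPri-pres-exact-squares}). The only difference is that you inline the justification that commas in $\Pri$ (resp.\ $\DL$) are exact by reducing to $\Pos$ via the forgetful functor, which the paper has already packaged into Propositions~\ref{prop:DLPos} and~\ref{prop:PriPos}.
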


\begin{proof}
The homming-into-$\Two$ functors mediating the dual equivalence between  $\DL$ and $\Pri$ are locally monotone and hence $\Pos$-enriched. Therefore cocommas in $\DL$ (or $\Pri$) are commas in $\Pri$ (or $\DL$), which are exact. And exactness is preserved by $\Two^-$.
\end{proof}

\begin{remark}
Duality is  helpful here. Recall from Example~\ref{exle:collage} that in $\Pos$, the exactness of cocommas was immediately obvious from their explicit characterization of cocommas as collages. But we do not have such a characterization for $\DL$s, see also   Example~\ref{exle:total-relation-DL}. 
\end{remark}

We will see in Remark~\ref{rmk:exact-cocommas} that the relationship of Corollary~\ref{cor:cocommas-exact-PriDL} between exactness of cocommas  and preservation of exact squares extends to other concretely order-regular categories. 

Finally, we will need the following result, which is well-known and follows from the fact that the duality respects the factorisation systems of Priestley spaces and distributive lattices. We sketch a direct proof.

\begin{lemma}\label{lem:ontoemb}
The contravariant functors $\twobb^-:\Pri\to\DL$ and $\twobb^-:\DL\to\Pri$ map surjections to embeddings and embeddings to surjections.
\end{lemma}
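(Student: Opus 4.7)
The plan is to treat the four cases separately: the two ``surjection to embedding'' cases are formal, while the two ``embedding to surjection'' cases rest on the substantive content of Priestley duality, namely the prime ideal theorem and its topological counterpart.

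First I would dispatch both ``surjection to embedding'' cases with a single uniform argument. If $e:A\to B$ is a surjection in either $\DL$ or $\Pri$ and $p,q:B\to\Two$ satisfy $p\circ e\le q\circ e$, then for any $b=e(a)\in B$ we read off $p(b)=p(e(a))\le q(e(a))=q(b)$, so $p\le q$. In particular $\Two^e$ is injective, and since $\Two^-$ is locally monotone on both sides (Section~\ref{sec:stone-duality}), this order-reflection makes $\Two^e$ an embedding in the target category.

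Next I would handle ``embedding to surjection'' on the algebraic side $\Two^-:\DL\to\Pri$. Given an injective $\DL$-morphism $m:A\to B$ and a point of $\Two^A$, that is, a prime filter $F$ of $A$ with complementary prime ideal $I=A\setminus F$, the task is to produce a prime filter $G$ of $B$ with $m^{-1}(G)=F$. Injectivity of $m$ guarantees that the filter $\uparrow m[F]$ in $B$ is disjoint from the ideal $\downarrow m[I]$; the prime ideal theorem then produces the required $G$ separating them.

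Dually, for $\Two^-:\Pri\to\DL$ on an embedding $m:X\hookrightarrow Y$ (a closed subspace with the inherited order), I must extend every clopen upper set $U$ of $X$ to a clopen upper set $V$ of $Y$ with $V\cap m[X]=m[U]$. For each $x\in U$ and $x'\in X\setminus U$ we have $m(x)\not\le m(x')$ in $Y$, so Priestley separation furnishes a clopen upper $V_{x,x'}\subseteq Y$ containing $m(x)$ but not $m(x')$. Two rounds of compactness, first on $m[X\setminus U]$ and then on $m[U]$ (both closed, hence compact in $Y$), collapse this family into a single clopen upper set $V$ of $Y$ restricting to $U$.

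The main obstacle is conceptual rather than technical: one genuinely needs the prime ideal theorem (or equivalently the compactness argument built on Priestley separation) for the two ``embedding $\to$ surjection'' directions, so no purely formal shortcut from the formal cases is available.
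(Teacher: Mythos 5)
Your proposal is correct and follows essentially the same route as the paper: the surjection-to-embedding cases by the pointwise order-reflection argument inherited from $\Pos$, the embedding-to-surjection case for $\DL\to\Pri$ by separating the filter $\uparrow m[F]$ from the ideal $\downarrow m[I]$ via the prime ideal theorem, and the case for $\Pri\to\DL$ by Priestley separation. The only difference is cosmetic: you spell out the double-compactness argument that upgrades Priestley separation to separating a compact upper set from a disjoint compact lower set, which the paper invokes implicitly here (and carries out explicitly in the proof of Lemma~\ref{lem:DLPri-pres-exact-squares}).
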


\begin{proof}
For $\twobb^-:\Pri\to\DL$, we let $f:X\to Y$ so that $\twobb^f: [Y,\twobb] \to [X,\twobb]$. If $f$ is onto, then $\twobb^f$ is an embedding, for the same reason as in $\Pos$. If $f$ is an embedding and $p:X\to\twobb$ is a clopen upset, then by the Priestley separation axiom there is a  clopen upset $q:Y\to\twobb$ containing $\{f(x)\mid x\in p\}$ and disjoint from $\{f(x)\mid x \notin p\}$. Therefore $q\circ f=p$, ie, $\twobb^f(q)=p$, showing that $\twobb^f$ is onto.

For $\twobb^-:\DL\to\Pri$, we let $f:A\to B$ so that $\twobb^f: [B,\twobb] \to [A,\twobb]$. If $f$ is onto, then $\twobb^f$ is an embedding, for the same reason as in $\Pos$. If $f$ is an embedding and $p:A\to\twobb$ is a prime filter, then by the prime filter theorem there is a prime filter $q:B\to\twobb$ containing $\{f(x)\mid x\in p\}$ and disjoint from $\{f(x)\mid x\notin p\}$. Therefore $q\circ f=p$, ie, $\twobb^f(q)=p$, showing that $\twobb^f$ is onto.
\end{proof}

\subsection{Duality of Relations}\label{sec:dualitypriestley}

Before we can state and prove Theorem~\ref{thm:dlpriduality} about the equivalence of $\DL$ and $\Pri$ relations, we need to describe the set-up summarised in \eqref{eq:pridl}.

Given a function, or deterministic program, $f:X\to Y$ there are two natural ways of associating a relation to $f$. The weakening closed relation given by the `hypergraph' $f_\ast=\{(fx,y) \mid fx\le y\}$ and the co-weakening closed relation given by the `hypograph' $f^\ast=\{(y,fx)\mid y\le fx\}$. 

\pskip If $f$ is Scott-continuous then the hypergraph is closed whereas the hypograph does not have a similar good property. This is one reason we choose to work with the hypergraph on the side of spaces. Technically, this means that the relation associated to $f$ will be $f_\ast=\lambda x,y\,.\,Y(fx,y)$.

\pskip Dually, $f$ will be mapped to $\twobb^f:\twobb^Y\to\twobb^X$. We turn this into a relation by stipulating
$$ a\subseteq f^{-1}(b)$$
or, equivalently, $f[a]\subseteq b$, which agrees with \eqref{eq:Twor-span}.
This means that the relation associated to a $g:B\to A$ in $\DL$ is given by $g^*$ which is
$$g^\ast(a,b) =  A(a,gb)$$

\pskip Recalling that extensions of a contravariant functor are contravariant on 2-cells, see \eqref{eq:contravariant-2-cells}, we obtain
\begin{equation}\label{eq:pridl}
\vcenter{
\xymatrix@C=30pt@R=30pt{
\overline\Pri^\co 
\ar@/^/[r]^{\overline{\twobb}}  
& \overline\DL
\ar@/^/[l]^{\overline{\twobb}}  
 \\
\Pri \ar[u]^{(-)_\ast} 
\ar@/^/[r]^{\twobb^-} & 
\DL^\op\ar[u]_{(-)^\ast}
\ar@/^/[l]^{\twobb^-}
}}
\end{equation}
which is in accordance with the left-hand diagram before Remark~\ref{rmk:extendtorelpos}.

\pskip The functor $\overline\Pri^\co\to\overline\DL$ tabulates a relation $r$ as a  span
$$X\,\stackrel{p}{\longleftarrow}\, R \,\stackrel{q}{\longrightarrow}\, Y$$ 
and maps it to the cospan
$$\twobb^X\,\stackrel{\twobb^{p}}{\longrightarrow}\, 
\twobb^R \,\stackrel{\twobb^{q}}{\longleftarrow}\, 
\twobb^Y$$ 
which in turn gives rise to a relation 
\begin{equation}\label{eq:Twor-def-Pri}
\overline\Two(r) = (\twobb^{q})^\ast\cdot(\twobb^{p})_\ast: \twobb^X\looparrowright \twobb^Y.
\end{equation}
This agrees with the definition of $\overline\Two(r)$ as a functor on $\Pos$ in \eqref{eq:Twor-span-def}, but we need to be aware that here $\Two^X$ refers to the set of Priestley-maps from $X$ to the Priestley space $\Two$.

\pskip The functor $\overline\DL\to\overline\Pri^\co$ is defined in the same way on relations. In detail, it tabulates a relation $r$ as a  span
$$A\,\stackrel{p}{\longleftarrow}\, R \,\stackrel{q}{\longrightarrow}\, B$$ 
and maps it to the cospan
$$\twobb^A\,\stackrel{\twobb^{p}}{\longrightarrow}\, 
\twobb^R \,\stackrel{\twobb^{q}}{\longleftarrow}\, 
\twobb^B$$ 
which in turn gives rise to a relation 
\begin{equation}\label{eq:Twor-def-DL}
\overline\Two(r) = (\twobb^{q})^\ast\cdot(\twobb^{p})_\ast: \twobb^A\looparrowright \twobb^B.
\end{equation}
This again agrees with the definition of $\overline\Two(r)$ as a functor on $\Pos$ in \eqref{eq:Twor-span-def}, but now $\Two^A$ refers to the set of distributive lattice morphisms from $A$ to the distributive lattice $\Two$.

\begin{theorem}\label{thm:dlpriduality}
The equivalence 
$$
\xymatrix@C=30pt@R=30pt{
\Pri \ar@/^/[r]^{\twobb^-} & 
\DL^\op
\ar@/^/[l]^{\twobb^-}
}
$$
extends to an equivalence of categories of relations
\begin{equation}\label{eq:Twor-PriDL}
\xymatrix@C=30pt@R=30pt{
\overline\Pri^\co 
\ar@/^/[r]^{\overline{\twobb}}  
& \overline\DL
\ar@/^/[l]^{\overline{\twobb}}  
}
\end{equation}
where $\Two:\overline\Pri^\co\to\overline\DL$ is defined by \eqref{eq:Twor-def-Pri} and 
$\Two:\overline\DL\to\overline\Pri^\co$ is defined by \eqref{eq:Twor-def-DL}.
\end{theorem}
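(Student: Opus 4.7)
My plan is to reduce the theorem to an application of the universal extension theorem of Remark~\ref{rmk:extendtorelpos} (the version proved for posets in~\cite{relations-spans}, later generalised to the ambient order-regular setting as Theorem~\ref{thm:universalproperty}), applied once in each direction, and then to deduce the equivalence from the uniqueness of extensions.

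First I would verify that each contravariant extension $\overline{\Two}$ given by \eqref{eq:Twor-def-Pri} and \eqref{eq:Twor-def-DL} is in fact a well-defined functor. Mirroring the proof of Theorem~\ref{thm:Twor-functor}, I take $F=(-)^{\ast}\circ\Two^{-}$ in each direction and check the three hypotheses listed on page~\pageref{enumerate}: (1) $F$ preserves maps, since $(\Two^{f})^{\ast}$ has $(\Two^{f})_{\ast}$ as a left adjoint in the relation category; (2) $F$ preserves exact squares, which is precisely Lemma~\ref{lem:DLPri-pres-exact-squares} combined with the routine fact that $(-)_{\ast}$ and $(-)^{\ast}$ send exact squares to exact squares; and (3) $Fe\cdot(Fe)^{r}=\Id$ for $e$ a surjection, respectively $(Fj)^{r}\cdot Fj=\Id$ for $j$ an embedding, both of which follow by combining Lemma~\ref{lem:ontoemb} (surjections dualise to embeddings and vice versa) with Proposition~\ref{prop:surjections-embeddings}. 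Propositions~\ref{prop:DLPos} and~\ref{prop:PriPos} together with Corollary~\ref{cor:cocommas-exact-PriDL} supply the ambient structure needed for the extension theorem to apply, and guarantee that the concrete formula on a relation is independent of the choice of tabulating span.

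Second, I would derive the equivalence itself from the uniqueness part of the universal property. The original dual equivalence furnishes natural isomorphisms $\Two^{-}\circ\Two^{-}\cong\Id_{\Pri}$ and $\Two^{-}\circ\Two^{-}\cong\Id_{\DL}$. Composing the two extensions gives a functor $\overline{\Two}\circ\overline{\Two}:\overline{\Pri}\to\overline{\Pri}$ which, pre-composed with $(-)_{\ast}:\Pri\to\overline{\Pri}$, is naturally isomorphic to $(-)_{\ast}\circ(\Two^{-}\circ\Two^{-})\cong(-)_{\ast}$. But the identity on $\overline{\Pri}$ is an extension of $(-)_{\ast}$ along $(-)_{\ast}$ with the same universal property, so uniqueness of the extension forces $\overline{\Two}\circ\overline{\Two}\cong\Id_{\overline{\Pri}}$. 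The symmetric argument on the other side yields $\overline{\Two}\circ\overline{\Two}\cong\Id_{\overline{\DL}}$, completing the equivalence.

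The main obstacle has in fact already been overcome in Lemma~\ref{lem:DLPri-pres-exact-squares}: preservation of exact squares by $\Two^{-}$ is where the actual mathematical content resides, relying on the prime ideal theorem on the $\DL$ side and a compactness plus Priestley-separation argument on the $\Pri$ side. Once this lemma is available, the present theorem is essentially bookkeeping — verifying the three standard hypotheses of the extension-via-spans theorem and then invoking its uniqueness clause.
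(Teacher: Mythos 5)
The first half of your proposal (well-definedness of the two extensions via the three hypotheses of the extension theorem, with Lemma~\ref{lem:DLPri-pres-exact-squares} and Lemma~\ref{lem:ontoemb} doing the real work) is exactly the paper's argument. The divergence, and the problem, is in how you deduce that the composite is the identity.

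The uniqueness clause of Theorem~\ref{thm:universalproperty} (equivalently of the extension theorem in Remark~\ref{rmk:extendtorelpos}) identifies two extensions only when they restrict along $(-)_\ast$ to the \emph{same} functor on the nose: the bijection is between functors $H$ and functors $F$, and nothing is proved about natural transformations. You only have $\overline\Two\circ\overline\Two\circ(-)_\ast \cong (-)_\ast$ via the unit $\eta_X\colon X\to\Two^{\Two^X}$ of Priestley duality, so $\overline\Two\circ\overline\Two$ and $\Id$ are extensions of two \emph{isomorphic but distinct} functors, and "uniqueness" does not by itself force them to be isomorphic. What is missing is precisely the claim that the family $\bigl((\eta_X)_\ast\bigr)_X$ is natural with respect to arbitrary relations $r\colon X\looparrowright Y$, not just with respect to maps — i.e.\ that $(\eta_Y)_\ast\cdot r$ and $\overline\Two\,\overline\Two(r)\cdot(\eta_X)_\ast$ agree. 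This is a genuine verification: it holds here only because the $\eta_X$ are isomorphisms (one tabulates $r$ and pushes the isos through the legs), and it is exactly the step the paper carries out directly, by taking the graph $(p,q)$ of $r$, the comma $(p',q')$ of $(\Two^p,\Two^q)$, and showing $\relf(p,q)=\relf(\Two^{p'},\Two^{q'})$ using $p=\Two^{\Two^p}$, $q=\Two^{\Two^q}$ and preservation of exact squares. The same point is made abstractly in the remark after Theorem~\ref{thm:dualityrelations}: for a mere adjunction the unit is \emph{not} natural with respect to relations, which is why Section~\ref{sec:adjunction} has to pass to framed bicategories. So the implication you want is true, but it cannot be obtained from the stated uniqueness clause; you must either perform the exact-square computation on tabulations (as the paper does), or first prove a two-dimensional strengthening of the extension theorem saying that invertible natural transformations between inputs $F_1\cong F_2$ lift to the extensions $H_{F_1}\cong H_{F_2}$.
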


\begin{proof}
To prove that \eqref{eq:Twor-PriDL} is well-defined, use again Remark~\ref{rmk:extendtorelpos} and proceed as in the proof of Theorem~\ref{thm:Twor-functor}. Property 1 (preservation of maps) follows from the fact that if two $\Pos$ (or $\DL$) relations are adjoint in $\Pos$ then they are adjoint in  $\Pri$ (or $\DL$). Property 2 (preservation of exact squares) is Lemma~\ref{lem:DLPri-pres-exact-squares}. Property 3 (mapping surjections to embeddings) is Lemma~\ref{lem:ontoemb}.

It remains to show that \eqref{eq:Twor-PriDL} is an equivalence of categories. Let $r$ be a $\Pri$-relation and $(p,q)=\graphf(r)$. Let $(p',q')$ be the comma of the cospan $(\Two^p,\Two^q)$. We have to show that $\relf(p,q)=\relf(\Two^{p'},\Two^{q'})$. But this follows from $p={\Two^{\Two^p}}$ and $q={\Two^{\Two^q}}$ (due to Priestley duality) and the square
\begin{equation*}
\vcenter{
\xymatrix{
& \ar[dl]_{\Two^{\Two^p}}\ar[dr]^{{\Two^{\Two^q}}} & \\
\ar[dr]_{\Two^{p'}}& & \ar[dl]^{\Two^{q'}} \\
&  &
}
}
\end{equation*}
 being exact. The latter, in turn, is a consequence of $\Two$ preserving exact squares and the comma-square of $(p',q')$ being exact. The other direction, starting with a $\DL$-relation $r$, is proved in the same way.
\end{proof}

The next proposition allows us to compute the dual of a relation by dualising the legs of a representing span even if it is not weakening closed.

\begin{proposition}\label{prop:allspans}
If two (not-necessarily weakening closed) spans in $\DL$ or $\Pri$ represent the same weakening relation, then their dual cospans do so as well.
\end{proposition}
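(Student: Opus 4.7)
My plan is to reduce the statement to the canonical tabulating span of $R$ via the comma--cocomma adjunction and then to transfer the resulting equality through the exact-square machinery.  The hinge is the observation that every $\DL$-span $(p,q)$ automatically represents a $\DL$-relation: $R = q_\ast \cdot p^\ast$ contains $(0_A,0_B)$ and $(1_A,1_B)$ (witnessed by $w=0_W$ and $w=1_W$), is closed under $\wedge$ and $\vee$ because $p$ and $q$ are $\DL$-morphisms (take $w_1 \wedge w_2$ and $w_1 \vee w_2$ as witnesses), and is weakening closed by the general formula.  Hence $R$ is the underlying set of a $\DL$-subobject of $A \times B$; this subobject with its two projections is a $\DL$-tabulation of $R$, unique up to isomorphism since a subobject of $A \times B$ is determined by its underlying set.

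\paragraph{Main argument in $\DL$.} Given a representing span $(p,q)$, I form the cocomma $(j,k) = \cocommaf(p,q)$ in $\DL$ and then the comma $(p',q') = \commaf(j,k)$.  By exactness of cocommas (Corollary~\ref{cor:cocommas-exact-PriDL}) and of commas (Proposition~\ref{prop:DLPos}), both squares $(p,q,j,k)$ and $(p',q',j,k)$ are exact, and in particular $(p',q')$ represents $R$.  As a comma, $(p',q')$ is weakening closed and an embedding into $A \times B$, so by the preceding paragraph it coincides with the canonical $\DL$-tabulation of $R$ and therefore depends only on $R$.  Applying $\Two^-$ and invoking Lemma~\ref{lem:DLPri-pres-exact-squares} on each square, the dualised squares remain exact in $\Pri$, and so
\begin{equation*}
(\Two^q)^\ast \cdot (\Two^p)_\ast \;=\; (\Two^k)_\ast \cdot (\Two^j)^\ast \;=\; (\Two^{q'})^\ast \cdot (\Two^{p'})_\ast,
\end{equation*}
whose right-hand side depends only on $R$, proving the $\DL$ half of the claim.

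\paragraph{The $\Pri$ case and the obstacle.} For $\Pri$-spans the argument runs in mirror image: $R = q_\ast \cdot p^\ast$ is the upward closure in $A^\op \times B$ of the compact image of $(p,q)$, hence a closed upset by a standard Priestley-space argument (the order relation is closed and projections out of compact sets are closed), so $R$ is a $\Pri$-relation with an essentially unique tabulating closed sub-upset of $A \times B$; the same chain of exact squares, now using Lemma~\ref{lem:DLPri-pres-exact-squares} in the opposite direction, concludes.  The step most likely to require care is the identification of $\commaf \circ \cocommaf(p,q)$ with the canonical tabulation, since it rests both on exactness of commas and cocommas and on uniqueness of subobjects by underlying set.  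As a sanity check, one can unfold $(\Two^q)^\ast \cdot (\Two^p)_\ast$ directly into the formula $\{(\phi,\psi) \mid R[\phi^{-1}(1)] \subseteq \psi^{-1}(1)\}$ along the lines of Proposition~\ref{prop:extendviaspan}, in which only $R$ appears; the equivalence of the ``for all $w$, $p(w) \in \phi^{-1}(1) \Rightarrow q(w) \in \psi^{-1}(1)$'' condition with this formula uses only that $\phi^{-1}(1)$ and $\psi^{-1}(1)$ are upsets, which holds whether the homomorphisms $\phi,\psi$ live in $\DL$ or in $\Pri$.
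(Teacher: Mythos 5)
Your proof is correct and follows essentially the same route as the paper's: exactness of comma and cocomma squares in $\DL$ and $\Pri$ together with preservation of exact squares by $\Two^-$ (Corollary~\ref{cor:cocommas-exact-PriDL} and Lemma~\ref{lem:DLPri-pres-exact-squares}). The only difference is that you pass through the comma of the cocomma to reach a canonical tabulation, whereas the paper stops at the cocomma itself (two spans represent the same relation iff they have isomorphic cocommas, Proposition~\ref{prop:commacocomma}), so your extra step and the verification that every span represents a $\ccal$-relation are harmless but not needed.
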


\begin{proof} 
The proof is the same as for Remark~\ref{rmk:allspans} and uses that commas and cocommas are exact (see Corollary~\ref{cor:cocommas-exact-PriDL}) and that duality preserves exactness (see Lemma~\ref{lem:DLPri-pres-exact-squares}).
\end{proof}

\subsection{Examples}\label{sec:exles-pridl}

Recall that in the category $\Pos$, we characterised cocommas as collages. In particular, in the cocomma $(j,k)$ of a span $(A\leftarrow R\rightarrow B)$, the maps $j$ and $k$ are embeddings. Intuitively, this means that the quotient of $A+B$ by $R$ cannot add inequations to $A$ or to $B$. The next example shows that we cannot say the same about cocommas of bounded distributive lattices.

\begin{example}\label{exle:total-relation-DL}
Let $(A\leftarrow R\rightarrow B)$ be a span in $\DL$ with $R=A\times B$ the total relation. Intuitively, the cocomma of the span should be the trivial $\DL$ since $R$ forces the top of $A$ to be below the bottom of $B$. That this is indeed the case is most easily seen using 2-dimensional duality (Section~\ref{sec:stone-duality}) to compute the cocomma of $R$ as the dual of the graph of the dual relation (=the dual of the comma of the dual of the span of $R$). Indeed, the dual of $(A\leftarrow R\rightarrow B)$ is a cospan injecting into the disjoint union of the dual of $A$ and the dual of $B$. It follows from the disjointness that the comma of this cospan is the empty relation.  Its dual is the cospan in $\DL$ that has the one-element $\DL$ as its apex. 
\end{example}

\noindent
The example above depends crucially on working with \emph{bounded} distributive lattices. It will be of interest to look into the duality of not-necessarily-bounded distributive lattices in the future. 

\medskip
We continue with some examples around the Cantor space, which is a Priestley space with a discrete ordering. The Cantor space is homeomorphic to $2^\mathbb N$ with the product topology, homeomorphic to the Stone dual of the free Boolean algebra over the set $\mathbb N$, and homeomorphic to the ``middle-third'' subspace of the unit interval.

\begin{example}[The ordered Cantor space]\label{exle:ordered-Cantor}
Let $X$ be the middle-third Cantor space and $X\leftarrow{\le}\rightarrow X$ the order inherited form the real numbers. According to \eqref{eq:Twor-span}, the dual ${\sqsubseteq}=\overline\Two(\le)$ is given by $a\sqsubseteq b$ iff $a\subseteq b$.
\end{example}

\noindent
The following proposition shows that we can recover the distributive lattice dual to the Priestley space $(X,\le)$ in a natural way from the dual of $\le$. For the definition of an inserter see Remark~\ref{rmk:inserter}. 

\begin{proposition}
Let $(X,\le)$ be Priestley space. Consider $\le$ as a weakening relation between order-discrete Priestley spaces (i.e., Stone Spaces). Then the inserter of the dual of $\le$ is the distributive lattice dual to $(X,\le)$.
\end{proposition}

\begin{proof}
Recall that the distributive lattice dual to $(X,\le)$ is given by the upper clopens of $X$, hence is a sublattice of the Boolean algebra of clopens $\Two^X$ dualising $X$. 
We only need to show that this sublattice arises as the inserter of $j,k:\Two^X\rightrightarrows\Two^{\le}$, where $(j,k)$ is the cospan dual to the span $X\leftarrow{\le}\rightarrow X$.
We use that inserters in distributive lattices are computed as inserters in $\Pos$.
The inserter of $(j,k)$ is the set of clopens $a\in\Two^X$ such that $j(a)\subseteq k(a)$, that is, such that $\{(x,y)\mid x\in a \ \& \ x\le y\}\subseteq  \{(x,y)\mid y\in a \ \& \ x\le y\}$, which is the set of upwards closed clopens. 
\end{proof}
\noindent
The proposition can also be proved more categorically. Since $(X,\le)$ is the quotient (= coinserter) of $X$ by $\le$, the dual of $(X,\le)$ must be the inserter of the dual of $X$ by the dual of $\le$. 

\medskip
We can summarise the previous example and proposition as follows. The reflexive elements (see also Examples~\ref{exle:equivalence-relation} and ~\ref{exle:preorder-quotient}) of the dual of $X$, that is those clopens $a$ for which ${\le}[a]\subseteq a$, form the dual of the Priestley space $(X,\le)$. We next consider what happens if we start from an ordered Stone space that is not a Priestley space, an example due to Stralka.

\begin{example}[The ersatzkette \cite{stralka-erzatzkette}]\label{exle:ersatzkette}
Let $X$ be the middle-third Cantor space and $x\le y$ be the relation that holds whenever $x$ is the left-hand and $y$ the right-hand endpoint of a middle-third gap. The dual ${\sqsubseteq}=\overline\Two(\le)$ is given by $a\sqsubseteq b$ iff $a\subseteq b$ and $b$ strictly extends $a$ on the right. 
\end{example}

The next example is at the heart of a forthcoming paper on extending Stone type dualities from the zero-dimensional to the compact Hausdorff setting.

\begin{example}[The unit interval]\label{exle:unit-interval}
Let $X$ be the ``middle-third'' Cantor space and $R$ the equivalence relation that identifies the endpoints at both sides of a gap. $X$ is a Stone space. The dual of $X$
is the Boolean algebra $A$ of clopens of $X$. The dual ${\prec}=\overline\Two(R)$ satisfies $\overline\Two(R)(a,b)$ if and only if the closure of $a$ is contained in $b$ or, equivalently, if $a$ is way-below $b$. We will develop the general theory at which this example is hinting at in a sequel paper. In a nutshell, the quotient of $X$ by $R$ is homeomorphic to the unit interval, and, at the same time, dual to the `proximity lattice' $(A,\prec)$. This observation can be extended to a duality for compact Hausdorff spaces and proximity lattices \cite{moshier04}.
\end{example}

\noindent
While the unit interval is the coinserter (or, because of discreteness, the coequalizer) of $X$ wrt $R$, the inserter of the dual of $R$ is not dual to the unit interval. The explanation for this mismatch is that in this case the coinserter and the inserter are not computed in dual categories. In forthcoming work we will present a category of algebras in which the inserter of the dual of $R$ is indeed the dual of the unit interval (obviously, the forgetful functor from this category of algebras to $\Pos$  cannot preserve inserters and, hence, cannot preserve all $\Pos$-limits).

\section{Concretely Order-Regular Categories}\label{sec:poscat}\label{sec:CORC}
In Theorem~\ref{thm:dlpriduality} we extended the duality between distributive lattices  and Priestley spaces from maps to relations.

\pskip
This construction from a duality of maps to a duality of relations is purely category theoretic and does not depend on the particularities of distributive lattices and Priestley spaces. All we need are comma objects and a factorisation system in order to compose relations and a duality of maps that respects this structure in a suitable sense. To work out the precise conditions is the purpose of this section. In the next section we can then prove Theorem~\ref{thm:dualityrelations} as a category theoretic generalization of Theorem~\ref{thm:dlpriduality}.

\pskip
The main results of this section are Definition~\ref{convention} and Theorems~\ref{thm:universalproperty} and \ref{thm:universalproperty2} which generalise the approach described in Remark~\ref{rmk:extendtorelpos} to categories over $\Pos$.

\medskip\noindent
The general setting are two forgetful functors to $\Pos$ and two contravariant functors $P,S$ which are adjoint on the right.
In this section we concentrate on axiomatising the properties of $V$ and $U$ and will return to the adjunction in Section~\ref{sec:extending}.
$$
\xymatrix{
\xcal\ar[d]_V\ar@/^/[r]^P& \acal\ar[d]^U\ar@/^/[l]^S\\
\Pos & \Pos
}
$$
So far we took relations as basic, and spans and cospans as devices to represent relations. This can be transferred to concrete $\Pos$-categories, that is, $\Pos$-categories $\ccal$ with a forgetful functor 
$$U:\ccal\to \Pos.$$
In particular,  a relation $A\looparrowright B$  in $\ccal$ will be a relation $UA\looparrowright UB$. In order to make sure that a relation also respects the structure of $\ccal$, we add the requirement that $UA\looparrowright UB$ can be represented by a span in $\ccal$. Equivalently, we can say that a $\ccal$-relation $A\looparrowright B$ is a subobject of $A\times B$ that is upward closed in $A^\op\times B$.

\pskip Since the forgetful functors will be P-faithful, there is at most one relation in $\ccal$ over any $UA\looparrowright UB$ and the order between the relation is 
inherited from $\Pos$.

\pskip
To make sure that relations in $\ccal$ compose as they do in $\Pos$, we ask $\ccal$ to have comma objects and factorisations preserved by $U$. 

\subsection{Concretely Order-Regular Categories}
\label{sec:CORC2}

The following definition details the assumptions sketched above. For notions such as P-faithful, finite limits, $(\Onto, \Emb)$, etc see Section~\ref{sec:weighted}. 
For weakening-closed embedding spans, exact squares, etc see Section~\ref{sec:span-cospan-duality}.
\footnote{
We could follow a more abstract approach in which one defines a calculus of relations via a given set of squares declared to be exact. Then one only needs to require existence of enough exact squares as well as functors preserving them. But this would require the development of a theory that would distract from the duality theory we are interested in here. In all of our examples, relations are weakening relations in $\Pos$ with, possibly, additional properties. And this is what our notion of concretely order-regular captures.
}

\begin{definition}\label{convention}
A $\Pos$-functor $U:\ccal\to\Pos$, or just the category $\ccal$ if $U$ is understood, satisfying the following properties will be called a \textbf{concretely order-regular category}.
\begin{itemize}
\item $U$ is P-faithful, that is, order-preserving and order-reflecting on homsets.
\item $\ccal$ has and $U$ preserves finite limits in the $\Pos$-enriched sense.
\item $\ccal$ has  a factorisation system $(\cal E, \cal M)$ such that $U\cal E=\Onto$ and $U\cal M=\Emb$ and for all $(\Onto,\Emb)$-factorisations $Uf=e\circ m$ there are unique $e'\in\cal E$ and $m'\in\cal M$ such that $Ue'=e$ and $Um'=m$.  
\end{itemize}
\end{definition}

\begin{remark}
The third item can be replaced by the stronger requirement that $\ccal$ has a P-regular/P-mono factorisation system given by $(U^{-1}\Onto,U^{-1}\Emb)$. This would make sure that concretely order-regular categories are order-regular and still include Priestley spaces since the image of a continuous map between Priestley spaces is closed and, therefore, a Priestley space. 
\end{remark}

\begin{remark}
Definition~\ref{convention} allows us to lift terminology from $\Pos$ to $\ccal$. For example,
\begin{itemize}
\item A surjection/embedding in $\ccal$ is an arrow $f$ such that $Uf$ is a surjection/embedding in $\Pos$.
\item 
A span $(p:W\to A,q:W\to B)$ in $\ccal$ is \textbf{weakening-closed} if $(Up,Uq)$ is weakening-closed in $\Pos$. The span $(p,q)$ is an \textbf{embedding-span} if the image of  $\langle p,q\rangle: W\to A\times B$ under $U$ is an embedding in $\Pos$. 
\item A square is \textbf{exact} in $\ccal$ if its image under $U$ is exact in $\Pos$. It follows that $U$ (by definition) preserves exact squares.
\end{itemize}
\end{remark}

\begin{example}
\begin{itemize}
\item All order-regular categories in the sense of \cite[Def.3.18]{ordered-algebras} are concretely order-regular categories under mild conditions, see \cite[Thm.5.13]{ordered-algebras}. 
This includes all quasi-varieties of ordered algebras as well as ordered compact Hausdorff spaces such as Priestley spaces. 
\item
All regular categories are order-regular categories with discrete homsets. This includes the categories of compact Hausdorff spaces or Stone spaces and the category of Boolean algebras. 
\end{itemize}
\end{example}

\begin{definition}[$\ccal$-relation]\label{def:Crelation}
Let $U:\ccal\to\Pos$ be a concretely order-regular category and $A,B\in\ccal$. A $U$-relation, or simply, a $\ccal$-relation, $A\looparrowright B$ is an isomorphism class of weakening closed embedding spans $A\leftarrow \bullet\rightarrow B$,  or equivalently, an upward closed P-mono subobject of $A^\op\times B$. 
\end{definition}

\begin{definition}
Given a concretely order-regular category $U:\ccal\to\Pos$, the extension 
$$\Rel(U): \Rel(\ccal)\to\Rel(\Pos) \textrm{\quad\quad or shorter \quad\quad }
\overline U: \overline\ccal\to\overline\Pos,$$
 is defined as follows.
$\overline\Pos$ is the category $\Rel(\Pos)$ defined in Section~\ref{sec:monotone-relations}.
$\overline\ccal$ has the same objects as $\ccal$ and $\mathcal C$-relations as arrows. The order on relations is inherited from $\Pos$.
\end{definition}
\begin{remark}
Composition in $\overline\ccal$  is associative (and  $\overline\ccal$ is a category)  since composition of weakening-closed embedding spans can be computed in the base category where it is relational composition. 
$\overline U: \overline \ccal \to\overline\Pos$ is a P-faithful functor since the order on arrows in $\overline \ccal$ is inherited from $\overline\Pos$.
\end{remark}

The next definition generalises the corresponding notions from $\Pos$, see Section~\ref{sec:monotone-relations},  to a concretely order-regular category $\ccal$.

\begin{definition}
The functor $$({-})_\ast:\ccal\to\overline\ccal^\co$$ takes a map $f:A\to B$ and maps it to the comma object of the cospan $(f,\id)$. The functor $$({-})^\ast:\ccal^\op\to\overline\ccal$$ takes $f:A\to B$ and maps it to the comma object of $(\id,f)$.
\end{definition}

Given our assumptions on $U$, we have that $f_\ast(a,b)=B(fa,b)$ for $f:A\to B$ and $f^\ast(a,b)=B(b,fa)$. It is worth emphasising that this means that if $f:A\to B$ is a $\ccal$-morphism, then the $\Pos$-relations $f_\ast$ and $f^\ast$ are also $\ccal$-relations.

\subsection{Extending Functors}
The following extension theorems generalise \cite[Thm.4.1]{relations-spans}. We follow the notation of the survey \cite[Thm.3.8]{relations-survey} which is summarised in Remark~\ref{rmk:extendtorelpos}.  It states, informally speaking,   that a functor extends from maps to relations if it preserves exact squares and maps epis to split epi relations. 

\begin{theorem}
\label{thm:universalproperty}
Let $U:\xcal\to\Pos$ be a concretely order-regular category as in   Definition~\ref{convention}.
The locally monotone functor $({-})_\ast:\xcal\to\overline\xcal^\co$
has the following three properties:
\begin{enumerate}
\item \ \ 
$(-)_\ast$ preserves maps, that is, every $f_\ast$ has  a right-adjoint in $\overline\xcal$.
\item \ \ 
$q_\ast\cdot p^\ast =  g^\ast\cdot f_\ast$
\ for all exact squares in $\xcal$
\begin{equation}\label{eq:square-m1}
\vcenter{
\xymatrix{
& UW\ar[dl]_{Up}\ar[dr]^{Uq} & \\
UX\ar[dr]_{Uf} & \leq & UY\ar[dl]^{Ug} \\
& UZ & 
}
}
\end{equation}
\item \ \ 
$e_\ast\cdot e^\ast=\Id$ \ for all surjections $e$ in $\xcal$. 
\end{enumerate}
Moreover, the functor $({-})_\ast$
is universal w.r.t.\ these three properties
in the following sense: if $\kcal$ is any concretely order-regular category to give a locally monotone functor $H:\overline\xcal^\co\to\kcal^\co$ is the same as to give a locally monotone functor $F:\xcal\to\kcal^\co$ with the following three properties:
\begin{enumerate}
\item \ \ 
Every $Ff$ has a right adjoint in $\kcal$, denoted by $(Ff)^r$.
\item \ \ 
$Fq\cdot (Fp)^r = (Fg)^r\cdot Ff$ \ 
for all exact squares as in~\eqref{eq:square-m1}.
\item \ \ 
$Fe\cdot (Fe)^r=\Id$ \ for all epis $e$.
\end{enumerate}
\end{theorem}

\begin{proof}
Since composition of $\xcal$-relations in $\overline\xcal$ is the same as the composition of the underlying relations in $\overline\Pos$, the properties 1-3 of $(-)_\ast$ follow from the corresponding facts on $\Pos$. For the universal property, given $F$, we define $H(f_\ast)=Ff$ and on a general relation $R$ we let 
$$H(R)=H(cR_\ast\cdot dR^\ast)=F(cR)\cdot F(dR)^r.$$ In the case that $R$ is the tabulation of $f_\ast$, we have $H(R)=F(cR)\cdot F(dR)^r=\id^r\cdot Ff=f_\ast$, because the square defining $R$ as the comma-object of the cospan $(f_\ast,\id)$ is exact and because $F$ satisfies property 2. A similar argument shows that $H$ preserves identities. To show that $H$ preserves composition, note that if $R,S$ are relations in $\overline\xcal$, then applying $F$ to the diagram (which abbreviates $R\cdot S$ to $RS$)
\begin{equation}\label{eq:comp}
\vcenter{
\xymatrix@R=20pt{
&
&
\ar `l[llddd] [llddd]_{d{RS}}
\ar `r[rrddd] [rrddd]^{c{RS}}
&
&
\\
&
&
\ar[1,-1]_{dP}
\ar[1,1]^{cP}
\ar@{->>}[-1,0]_{e}
&
&
\\
&
\ar[1,-1]_{dS}
\ar[1,1]^{cS}
&
&
\ar[1,-1]_{dR}
\ar[1,1]^{cR}
&
\\
&
&
&
&
}
}
\end{equation}
we obtain $H(R\cdot S)$ as the relation represented by the outside span and $HR\cdot HS$ as the relation obtained from composing the bottom zig-zag. These two are the same because $F$ satisfies properties 2 and 3. %
To show that $H$ is locally monotone, let $R\subseteq S$ in $\overline\xcal$, that is, there is $f$ in $\xcal$ such that $dR=dS\circ f$ and $cR=cS\circ f$. Then we calculate in $\kcal$
\begin{align*}
H(R) & = F(cR)\cdot F(dR)^r\\
& = F(cS\circ f)\cdot F(dS\circ f)^r\\
& = F(cS)\cdot Ff \cdot Ff^r\cdot F(dS)^r\\
& \le F(cS)\cdot F(dS)^r\\
&= H(S)
\end{align*}
We have shown that $\overline\xcal\to\kcal$ is locally monotone. Hence $\overline\xcal^\co\to\kcal^\co$ is as well.
\end{proof}

There is a dual version of the theorem. Since we need it later, we write it out in detail for reference.

\begin{theorem}
\label{thm:universalproperty2}
Let $U:\acal\to\Pos$ be an concretely order-regular category as in   Definition~\ref{convention}.
The locally monotone functor $({-})^\ast:\acal^\op\to\overline\acal$
has the following three properties:
\begin{enumerate}
\item \ \ 
Every $f^\ast$ has  a left-adjoint $f_\ast$ in $\overline\acal$.
\item \ \ 
$q_\ast\cdot p^\ast =  g^\ast\cdot f_\ast$
\ for all exact squares in $\acal$
\begin{equation}\label{eq:square-m2}
\vcenter{
\xymatrix{
& UW\ar[dl]_{Up}\ar[dr]^{Uq} & \\
UA\ar[dr]_{Uf} & \leq & UB\ar[dl]^{Ug} \\
& UC & 
}
}
\end{equation}
\item \ \ 
$e_\ast\cdot e^\ast=\Id$ \ for all surjections $e$ in $\acal$.
\end{enumerate}
Moreover, the functor $({-})^\ast$ is universal w.r.t.\ these three properties in the following sense: if $\kcal$ is any $\Pos$-category to give a locally monotone functor $H:\overline\acal\to\kcal$ is the same  as to give a locally monotone functor $F:\acal^\op\to\kcal$ with the following three properties:
\begin{enumerate}
\item \ \ 
Every $Ff$ has a left adjoint in $\kcal$, denoted by $(Ff)_l$.
\item \ \ 
$(Fq)_l\cdot Fp = Fg\cdot (Ff)_l$ \ 
for all exact squares as in~\eqref{eq:square-m2}.
\item \ \ 
$(Fe)_l\cdot Fe=\Id$ \ for all epis $e$.
\end{enumerate}
\end{theorem}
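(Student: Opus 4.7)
The plan is to dualize the proof of Theorem~\ref{thm:universalproperty}, systematically swapping the roles of $(-)_\ast$ and $(-)^\ast$ and of left and right adjoints.

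First I would verify the three properties of $(-)^\ast$. Property~1 follows because in $\overline\Pos$ we always have $f_\ast \dashv f^\ast$, and both $f_\ast$ and $f^\ast$ live in $\overline\acal$ by construction. Property~2 is the Beck--Chevalley equation and holds in $\overline\acal$ because composition is inherited from $\overline\Pos$ via the P-faithful $U$, so exactness transfers directly. Property~3 is Proposition~\ref{prop:surjections-embeddings}, again transferred across $U$. The passage from $H$ back to $F := H\circ(-)^\ast$ is then immediate: each listed property of $F$ follows by applying the locally monotone functor $H$ to the corresponding equality for $(-)^\ast$, using that $H$ preserves the adjunctions $f_\ast \dashv f^\ast$.

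For the harder direction, given $F : \acal^\op \to \kcal$ satisfying (1)--(3), I define
\[ H(R) \;=\; (FcR)_l \cdot FdR \]
on a relation $R : A \looparrowright B$ tabulated by the weakening-closed embedding span $A \stackrel{dR}{\longleftarrow} R \stackrel{cR}{\longrightarrow} B$. To check $H \circ (-)^\ast = F$, note that $h^\ast$ is tabulated by the exact comma of the cospan $(\id_B, h)$, and Property~2 collapses $(FcR)_l \cdot FdR$ to $Fh \cdot \id = Fh$. Preservation of identities is automatic. Local monotonicity is a one-line calculation: if $R \subseteq S$ via $f : R \to S$ over $A\times B$, then contravariance of $F$ and the counit $(Ff)_l \cdot Ff \leq \id$ give $H(R) = (FcS)_l (Ff)_l (Ff) FdS \leq (FcS)_l \cdot FdS = H(S)$.

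The main obstacle, as in the original theorem, is preservation of composition. Two relations $S : A \looparrowright B$ and $R : B \looparrowright C$ compose via the comma $P$ of $(cS, dR)$ in $\acal$, producing an outer span $P \to A$, $P \to C$ that factors through a surjection $e : P \twoheadrightarrow R{\cdot}S$ onto the tabulating embedding span $(dRS, cRS)$, as in diagram~\eqref{eq:comp}. Applying $F$ (contravariant) and invoking Property~2 on the exact comma square gives the key identity $FdR \cdot (FcS)_l = (FcP)_l \cdot FdP$, whence
\[ H(R)\cdot H(S) \;=\; (FcR)_l(FcP)_l FdP\, FdS \;=\; (F(cR{\cdot}cP))_l \cdot F(dS{\cdot}dP). \]
Substituting $cR{\cdot}cP = cRS{\cdot}e$ and $dS{\cdot}dP = dRS{\cdot}e$ and applying Property~3 to $e$ (so that $(Fe)_l \cdot Fe = \id$) collapses this to $(FcRS)_l \cdot FdRS = H(R{\cdot}S)$. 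Uniqueness of $H$ is then automatic: every relation factors as $cR_\ast \cdot dR^\ast$, and any locally monotone extension of $F$ must send $f_\ast$ to the left adjoint of $Ff$.
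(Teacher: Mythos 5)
Your proposal is correct and follows essentially the same route as the paper: the paper's proof of Theorem~\ref{thm:universalproperty2} simply dualizes Theorem~\ref{thm:universalproperty}, defining $H(R)=F(cR)_l\cdot F(dR)$ on a tabulating span and verifying the tabulation-of-a-map case, local monotonicity, and preservation of composition via diagram~\eqref{eq:comp} together with properties 2 and 3. Your write-up in fact spells out the composition step (the Beck--Chevalley identity $FdR\cdot(FcS)_l=(FcP)_l\cdot FdP$ and the collapse of the surjection $e$ via $(Fe)_l\cdot Fe=\Id$) in more detail than the paper, which only gestures at it.
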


\begin{proof} To aid future calculations, we emphasise some of the places where notation changes wrt to the proof of Theorem~\ref{thm:universalproperty}.
Given $F$ and $f:B\to A$, we define $H(f^\ast)=Ff:FA\to FB$ and for a general relation $R$ we let 
$$H(R)=H(cR_\ast\cdot dR^\ast)=F(cR)_l\cdot F(dR).$$
 In the case that $R$ is the tabulation of $f^\ast$, we have $H(R)=F(cR)_l\cdot F(dR)=Ff\cdot \id_l=Ff$. 
The computation showing that $H$ is locally monotone runs as follows. Let $R\subseteq S$ in $\overline\acal$, that is, there is $f$ in $\acal$ such that $dR=dS\circ f$ and $cR=cS\circ f$. Then we calculate in $\kcal$
\begin{align*}
H(R) & = F(cR)_l\cdot F(dR)\\
& = F(cS\circ f)_l\cdot F(dS\circ f)\\
& = F(cS)_l\cdot Ff_l \cdot Ff\cdot F(dS)\\
& \le F(cS)_l\cdot F(dS)\\
&= H(S)
\end{align*}
showing that $H$ is locally monotone.
 \end{proof}
 
\begin{remark}
From the point of view of relations, the two theorems are the same. In both cases, we extend a functor to a relation $R$ by tabulating the relation as $R=cR_\ast\cdot dR^\ast$ and applying the functor to the legs. We spelled them out both for reference in the next section.
\end{remark}

\begin{remark}
In the previous two theorems, if the category on which the functor $F$ is defined has exact cocommas, or enough exact squares, then we can drop the condition 3. Indeed let $(p,q)$ and $(r,s)$ be two composable spans.
Let $(u,v)$ be the comma of $(q,r)$. 
Let $(x,y)$ be the graph of the composition $(p,q);(r,s)$.
To show that the extension to relations of $F$ preserves composition, we need to show that 
$(Fp,Fq);(Fr,Fs)$ and $(Fx,Fy)$
represent the same relation.
Let $(j,k)$ be a cospan completing $(x,y)$ and  $(pu,qv)$ to exact squares. Since 
$F$ preserves exact squares, all of  $(Fp,Fq);(Fr,Fs)$ and $(F(pu),F(qv))$ and $(Fx,Fy)$ represent the same relation.
\end{remark}

\subsection{Examples}\label{sec:exles-ordreg}

In this section we illustrate Definition~\ref{def:Crelation} of $\ccal$-relations by a range of examples. In particular, we will build some new dualities of categories where objects are equipped with additional structure in the form of $\ccal$-relations for various categories $\ccal$. We will instantiate these general constructions with the duality of $\Pri$ and $\DL$-relations of Theorem~\ref{thm:dlpriduality}. We start with an observation about completely distributive lattice relations.

\begin{example}
The functor $\overline\Two:\Rel(\Pos)^\co\to\Rel(\Pos)$ from Theorem~\ref{thm:Twor-functor} defined by $R\mapsto \{(a,b)\mid R[a]\subseteq b\}$ induces order-isomorphisms $\overline\Two_{X,Y}:\Rel(\Pos)(X,Y)^\op\to\Rel(\CDL)(\Two^X,\Two^Y)$ where $\CDL$ is the category of completely distributive lattices.
\end{example}

\medskip
For the remainder of this section, we let $U:\xcal\to\Pos$ and $V:\acal\to\Pos$ be concrete order-regular categories.  
We start by generalising  Example~\ref{exle:equivalence-relation}, noting that a reflexive and transitive relation is a monad in the category of relations and that an interpolative relation below the identity is a comonad. 

\begin{example}\label{exle:monadinrel}
Let $H:\xcal\to\acal$ be a contravariant functor preserving relations\footnote{A functor preserves relations if it preserves exact squares and factorisations. If the categories in question have enough exact squares, preservation of exact squares is enough.} that $H$ extends to $\overline H:\Rel(\xcal)\to\Rel(\acal)^\co$. Since  $\overline H$ is locally monotone, it maps monads (comonads) in $\Rel(\xcal)$ to comonads (monads) in $\Rel(\acal)$. 
\end{example}

\begin{definition}[$\ccal\dashRel$, $\ccal\dashPre$, $\ccal\dashIpl$]
If $\ccal$ is a concretely order-regular category, then we denote by $
\ccal\dashRel$ the category that has pairs $(C,R)$ as objects where $C\in\ccal$  and $R\subseteq C\times C$ is a $\ccal$-relation and arrows $f:
C,R)\to(C',R')$ are functions $f:C\to C'$ such that $xRy\Rightarrow f(x)R'f(y)$ for all $x,y$ in the underlying poset of $C$. $\ccal\dashPre$ and $\ccal\dashIpl$ are the full subcategories of $\ccal\dashRel$ of preorders and interpolative relations, respectively, see Example~\ref{exle:monadinrel}.
\end{definition}

\begin{example}\footnote{$\Stone\dashPos$ also deserves attention.}
$\Stone\dashRel$ and $\BA\dashRel$ as well as $\Stone\dashPre$ and $\BA\dashIpl$ are dually equivalent.
\end{example}

\medskip
To keep the exposition easy, we now specialise to the example above. But Theorems~\ref{thm:StoneRel} and \ref{thm:orderedStone} below transfer to dual equivalences $(F,G)$ that rely on dualising objects other than $\Two$, see Remark~\ref{rmk:many-valued-valuations}.

\begin{theorem}\label{thm:StoneRel}
Under the standing assumptions of this subsection, the category $\xcal\dashRel$ is dually equivalent to the category $\acal\dashRel$.
\end{theorem}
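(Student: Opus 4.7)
The approach is to leverage the extension of the dualising functor from maps to relations (as in Theorem~\ref{thm:dlpriduality} for the Priestley/$\DL$ case, or more generally via Theorems~\ref{thm:universalproperty} and~\ref{thm:universalproperty2}), obtaining functors $\overline H:\overline\xcal^\co\to\overline\acal$ and $\overline K:\overline\acal^\co\to\overline\xcal$ that constitute a dual equivalence of the relational categories extending the assumed dual equivalence $(H,K)$ of maps. An object $(C,R)$ of $\xcal\dashRel$ is precisely an endomorphism $R:C\looparrowright C$ in $\overline\xcal$; the morphism condition $xRy\Rightarrow f(x)R'f(y)$ on a map $f:C\to C'$ is, by direct unfolding (using weakening-closure of $R'$), equivalent to the inequality $f_\ast\cdot R\le R'\cdot f_\ast$ in $\overline\xcal$. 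Thus $\xcal\dashRel$ sits inside $\overline\xcal$ as the ``lax coslice'' of endo-relations, and similarly for $\acal\dashRel$.

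Define $\Phi:\xcal\dashRel\to(\acal\dashRel)^\op$ on objects by $(C,R)\mapsto(HC,\overline H(R))$ and on arrows by $f\mapsto Hf$. The crucial step is verifying the morphism condition in $\acal\dashRel$. Apply $\overline H$ to $f_\ast\cdot R\le R'\cdot f_\ast$; since $\overline H$ is contravariant on 2-cells, the inequality reverses to $\overline H(R'\cdot f_\ast)\le\overline H(f_\ast\cdot R)$. Functoriality on 1-cells together with the identity $\overline H(f_\ast)=(Hf)^\ast$ (dualisation swaps the adjoint pair, exactly as in the verification for Theorem~\ref{thm:Twor-functor}) yields
\begin{equation*}
\overline H(R')\cdot(Hf)^\ast\ \le\ (Hf)^\ast\cdot\overline H(R).
\end{equation*}
Post-composing with $(Hf)_\ast$ on the left and pre-composing with $(Hf)_\ast$ on the right, and invoking the adjunction $(Hf)_\ast\dashv(Hf)^\ast$ (with unit $\id\le(Hf)^\ast\cdot(Hf)_\ast$ and counit $(Hf)_\ast\cdot(Hf)^\ast\le\id$), this rearranges to $(Hf)_\ast\cdot\overline H(R')\le\overline H(R)\cdot(Hf)_\ast$, which is precisely the morphism condition for $Hf$ in $\acal\dashRel$. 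Functoriality of $\Phi$ (preservation of identities and composition) follows from functoriality of $H$.

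The inverse functor $\Psi:\acal\dashRel\to(\xcal\dashRel)^\op$ is defined analogously from $K$ and $\overline K$. To obtain natural isomorphisms $\Psi\circ\Phi\cong\id_{\xcal\dashRel}$ and $\Phi\circ\Psi\cong\id_{\acal\dashRel}$, one uses the base dual equivalence $KH\cong\id_\xcal$ and $HK\cong\id_\acal$ on the underlying objects, and the uniqueness clause of the universal property in Theorem~\ref{thm:universalproperty} on the endo-relation coordinate: both $\overline K\circ\overline H$ and $\id_{\overline\xcal}$ are extensions of $KH\cong\id_\xcal$ satisfying properties (1)--(3) of that theorem, hence they agree up to unique isomorphism, giving $\overline K(\overline H(R))\cong R$ naturally in $R$. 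Naturality of these isomorphisms with respect to morphisms of $\xcal\dashRel$ is automatic from the morphism part of the base dual equivalence.

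The main obstacle is the bookkeeping in the second paragraph: one must keep straight three interacting pieces---contravariance of $\overline H$ on 2-cells, the swap $\overline H(f_\ast)=(Hf)^\ast$ and $\overline H(f^\ast)=(Hf)_\ast$ induced by dualisation, and the adjunction $(Hf)_\ast\dashv(Hf)^\ast$ in the target---and apply them in the right order to turn the inequality on the $(-)^\ast$ side into the inequality on the $(-)_\ast$ side. Once the morphism condition is rephrased in the relational calculus and the action of $\overline H$ on maps is pinned down, the rest of the argument is formal manipulation in the calculus of weakening relations.
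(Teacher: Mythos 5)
Your proof is correct and follows the same overall architecture as the paper's: invoke the already-established dual equivalence of the relational categories (Theorem~\ref{thm:dualityrelations}, instantiated as Theorem~\ref{thm:dlpriduality} in the concrete case) and then check that the dualising functors send relation-preserving maps to relation-preserving maps, the natural isomorphisms transferring to the $\dashRel$ categories. The one place where you genuinely diverge is the verification of homomorphism preservation: the paper tabulates the relations as spans and unwinds $(a',b')\in\overline\Two(R')\Rightarrow(\Two^f(a'),\Two^f(b'))\in\overline\Two(R)$ element-wise, declaring the resulting implication straightforward, whereas you encode the condition $xRy\Rightarrow f(x)R'f(y)$ as the point-free inequality $f_\ast\cdot R\le R'\cdot f_\ast$ and push it through $\overline H$ using contravariance on 2-cells, the identity $\overline H(f_\ast)=(Hf)^\ast$, and the unit and counit of $(Hf)_\ast\dashv(Hf)^\ast$. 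I checked the direction of every inequality in that manipulation and it is sound; your version has the advantage of working verbatim at the stated level of generality of concretely order-regular categories without appealing to elements, while the paper's is shorter but written only for the $\Stone$/$\BA$ instance. Your appeal to the uniqueness clause of Theorem~\ref{thm:universalproperty} for $\overline K\,\overline H\cong\id$ is a legitimate alternative to the paper's direct naturality argument via diagram~\eqref{eq:naturality-isos}.
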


\begin{proof}
Exploiting Theorem~\ref{thm:dlpriduality}, we only have to show that the dualising functor $\Two^-:\Stone\dashRel\to\BA\dashRel$, and its converse $\Two^-:\BA\dashRel\to\Stone\dashRel$,  preserve homomorphisms. To this end, going back to  \eqref{eq:Twor-span-pic}, we consider 
\begin{equation}\label{eq:Tworspan2}
\vcenter{
\xymatrix{
& R\ar[dl]_{p}\ar[dr]^{q} & 
&& 
& R'\ar[dl]_{p'}\ar[dr]^{q'} &\\
X\ar[dr]_{} \ar@{..>}@/_1pc/[rrrr]_{\quad f} & \le & X\ar[dl]^{}\ar@{..>}@/_1pc/[rrrr]_{f\quad \ } 
&& 
X'\ar[dr]_{a'} & \le & X'\ar[dl]^{b'}\\
& \twobb & 
&& 
& \twobb &
}}
\end{equation}
Assuming $xRy \Rightarrow f(x)R'f(y)$, we have to show that $(a',b')\in\overline\Two(R')\Rightarrow (\Two^f(a'),\Two^f(b'))\in\overline\Two(R)$. In other words, we have to show that 
if $\forall w\in R\,.\, \exists w'\in R'\,.\,fp(w))\le p'(w') \ \& \ q'(w')\le fq(w)))$
and if $a'p'\le b'q'$ 
then $a'fp(w)\le b'fq(w)$ for all $w\in R$.  This is straightforward.
\end{proof}

Using Example~\ref{exle:equivalence-relation}, or Example~\ref{exle:monadinrel},  we can specialise this to preorders.

\begin{theorem}\label{thm:orderedStone}
The category of preordered Stone spaces is dually equivalent to the full subcategory of $\BA\dashRel$ in which objects $(A,R)$ where $R$ is interpolative (= idempotent below the identity).
\end{theorem}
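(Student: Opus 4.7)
The plan is to derive the statement as a straightforward restriction of Theorem~\ref{thm:StoneRel}. I view the category of preordered Stone spaces as the full subcategory of $\Stone\dashRel$ whose objects $(X,R)$ have $R$ a (closed) preorder; morphisms of preordered Stone spaces, namely continuous order-preserving maps, coincide with the $\Stone\dashRel$-morphisms between such objects. The task then reduces to showing that under the dual equivalence $\overline{\Two}$ of Theorem~\ref{thm:StoneRel}, preorders on the Stone side correspond exactly to interpolative relations on the $\BA$ side.

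The key algebraic observation is that, for a relation $R$ on a single object, being a preorder is the same as $R$ being a \emph{monad} in the local hom-poset (i.e.\ $\Id\subseteq R$ and $R;R\subseteq R$), while being interpolative is the same as being a \emph{comonad} ($S\subseteq\Id$ and $S\subseteq S;S$), as noted in Example~\ref{exle:equivalence-relation}. The dualising functors of Theorem~\ref{thm:StoneRel} are locally monotone but contravariant on 2-cells, and, being covariant on 1-cells, preserve relational composition and the identity relation. Hence, exactly as in Example~\ref{exle:monadinrel}, $\overline{\Two}$ interchanges monads and comonads on a single object: a preorder on a Stone space is sent to an interpolative relation on the dual Boolean algebra, and an interpolative relation on a Boolean algebra is sent back to a preorder on the dual Stone space.

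Since Theorem~\ref{thm:StoneRel} is already a dual equivalence of categories and the two conditions (``preorder'' on one side, ``interpolative'' on the other) correspond under $\overline{\Two}$ in both directions, restricting to the full subcategories in question yields a dual equivalence. The only thing that requires care—that the relational extension of $\overline{\Two}$ actually preserves composition and identities—is already packaged in Theorem~\ref{thm:universalproperty} (and reflected in the explicit formula \eqref{eq:Twor-span}), so no substantial obstacle arises beyond matching the definitions of ``preorder'' and ``interpolative'' against the monad/comonad reading.
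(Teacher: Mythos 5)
Your proposal is correct and follows essentially the same route as the paper: the paper obtains the theorem by specialising Theorem~\ref{thm:StoneRel}, invoking Example~\ref{exle:equivalence-relation} (equivalently Example~\ref{exle:monadinrel}) to see that the locally monotone, 2-cell-reversing dual equivalence interchanges monads (preorders) and comonads (interpolative relations) on a single object. Your additional remarks --- identifying preordered Stone spaces with the full subcategory of $\Stone\dashRel$ on (closed) preorder objects and checking that the two conditions correspond in both directions --- only make explicit what the paper leaves implicit.
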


Theorems~\ref{thm:StoneRel} and \ref{thm:orderedStone}, transfer, mutatis mutandis, to other dual equivalences than $\Stone$ and $\BA$ including those that rely on other dualising objects.

\section{Extending Equivalences and Adjunctions}\label{sec:extending}

\pskip
We are interested in extending contravariant adjunctions and equivalences of $\Pos$-categories from maps to relations. In the case of adjunctions, for Theorem~\ref{thm:adjunctionrelations}, we need to appeal to the framed bicategories of Shulman \cite{shulman}. We therefore treat the easier case of equivalences first. Theorem~\ref{thm:dualityrelations} is a direct generalization of Theorem~\ref{thm:dlpriduality} and we recommend to read Section~\ref{sec:dualitypriestley} before reading this one.

\subsection{Extending Equivalences to Categories of Relations}\label{sec:lifting-spans}
Let $U:\acal\to\Pos$ and $V:\xcal\to\Pos$ be two concretely order-regular categories, see Definition~\ref{convention}. 

\pskip 
Given a dual equivalence  $F:\xcal\to\acal$ and $G:\acal\to\xcal$, we will extend it to $\overline\xcal=\Rel(\xcal)$ and $\overline\acal=\Rel(\acal)$ in Theorem~\ref{thm:dualityrelations}.
The plan is to apply Theorems~\ref{thm:universalproperty} and~\ref{thm:universalproperty2} to the situation
$$
\xymatrix@C=30pt@R=30pt{
\overline\xcal^\co 
\ar@/^/[r]^{\overline{F}}  
& \overline\acal
\ar@/^/[l]^{\overline{G}}  
 \\
\xcal \ar[u]^{(-)_\ast} 
\ar@/^/[r]^{F} & 
\acal^\op\ar[u]_{(-)^\ast}
\ar@/^/[l]^{G}
}
$$
To obtain $\overline F$ from Theorem~\ref{thm:universalproperty}, we define the functor $$\xcal\to\overline{\acal}$$
as mapping arrows $(f:X\to Y)$ to relations $Ff^\ast: FX\looparrowright FY$. That is, we have  $(a,b)\in Ff^\ast$ iff $a\le Ff(b)$.

\pskip
Note that $Ff^\ast$ has a left-adjoint in $\overline\acal$ and hence a right adjoint $(Ff)^r=Ff_\ast$ in $\overline\acal^{\co}$ as required by Theorem~\ref{thm:universalproperty}.

\pskip
For the condition that $(F-)^\ast$ preserves exact squares, given an exact square in $\xcal$ 
\begin{equation}\label{eq:square-m3}
\vcenter{
\xymatrix{
& W\ar[dl]_{p}\ar[dr]^{q} & \\
X\ar[dr]_{f} & \leq & Y\ar[dl]^{g} \\
& C & 
}
}
\end{equation}
we need $Fq\cdot (Fp)^r = (Fg)^r\cdot Ff$ in $\overline\acal^{\co}$, 
which is in $\overline\acal$
\begin{equation}\label{eq:ext-contravariant-square1}
Fq^\ast\cdot Fp_\ast = Fg_\ast \cdot Ff^\ast
\end{equation}
as in Lemma~\ref{lem:DLPri-pres-exact-squares} for the case of Priestley spaces and distributive lattices.

\medskip\noindent
We also need that for all epis $e$ in $\xcal$ we have $Fe\cdot (Fe)^r=\Id$ in $\overline\acal^{\co}$,
which is in $\overline\acal$
\begin{equation}\label{eq:ext-contravariant-epi-F}
Fe^\ast\cdot Fe_\ast =\Id,
\end{equation}
which holds iff $F$ maps surjections to embeddings, as in  Lemma~\ref{lem:ontoemb} for the case of distributive lattices and Priestley spaces.

\pskip

\pskip
Following exactly the same line of reasoning as for $\overline F$ above, to obtain $\overline G:\overline\acal\to\overline\xcal^\co$ from Theorem~\ref{thm:universalproperty2}, we let the functor $$\acal^\op\to\overline{\xcal}^\co$$
be given by mapping arrows $g:A\to B$ in $\acal$ to relations $Gg_\ast: GB\looparrowright GA$. That is, we have  $(y,x)\in Gg_\ast$ iff $Gg(y)\le x$.
Note that $Gg_\ast$ has a left adjoint $$(Gg)_l=Gg^\ast$$ in $\overline\xcal^\co$, as required by Theorem~\ref{thm:universalproperty2}.
In order to verify that $G$ satisfies the assumptions of Theorem~\ref{thm:universalproperty2}, given an exact square 
\begin{equation}\label{eq:square-m4}
\vcenter{
\xymatrix{
& W\ar[dl]_{p}\ar[dr]^{q} & \\
A\ar[dr]_{f} & \leq & B\ar[dl]^{g} \quad , \\
& C & 
}
}
\end{equation}
we need to check that $(Gq)_l\cdot Gp = Gg\cdot (Gf)_l$ in $\overline\xcal^{\co}$,
which is in $\overline\xcal$
\begin{equation}\label{eq:ext-contravariant-square2}
Gq^\ast\cdot Gp_\ast = Gg_\ast \cdot Gf^\ast\ .
\end{equation}
We also need to check that for all epis $e$ in $\acal$ we have in $\overline\xcal^{\co}$
$$(Ge)_l\cdot Ge=\Id\ ,$$ 
which is in $\overline\xcal$
\begin{equation}\label{eq:ext-contravariant-epi-G}
Ge^\ast\cdot Ge_\ast =\Id \ ,
\end{equation}
which holds iff $G$ maps surjections to embeddings.

\medskip\noindent
To summarize, we have the following corollaries of Theorems~\ref{thm:universalproperty} and \ref{thm:universalproperty2} about the situation depicted in 
$$
\xymatrix@C=30pt@R=30pt{
\overline\xcal^\co 
\ar@/^/[r]^{\overline{F}}  
& \overline\acal
\ar@/^/[l]^{\overline{G}}  
 \\
\xcal \ar[u]^{(-)_\ast} 
\ar@/^/[r]^{F} & 
\acal^\op\ar[u]_{(-)^\ast}
\ar@/^/[l]^{G}
}
$$

\begin{proposition}\label{prop:contravariant1}
Let $\xcal$ and $\acal$ be concretely order-regular categories (Definition~\ref{convention}).
If a contravariant functor $F:\xcal\to\acal$ preserves exact squares in the sense that  
$Fq^\ast\cdot Fp_\ast = Fg_\ast \cdot Ff^\ast$
for all exact squares as in \eqref{eq:square-m3} and if $F$ takes surjections to embeddings,
then $F$ extends uniquely to a (covariant) functor $\overline\xcal^\co\to\overline\acal$. A relation $r:X\looparrowright Y$ is mapped to $Fr:FX\looparrowright FY$ given by
$$(a,b)\in \overline Fr \ \Leftrightarrow \ Fp(a)\le_{FW} Fq(b)$$
where $(p:W\to X,q:W\to Y)$ is a tabulation of $r$. $Fr$ is tabulated by the comma object of the cospan $(Fp,Fq)$. In case that the relation is a map, that is, in case that $r=f_\ast$ for some $f:X\to Y$ this simplifies to 
$$(a,b)\in \overline F(f_\ast) \ \Leftrightarrow \ a\le_{FX} Ff(b).$$
\end{proposition}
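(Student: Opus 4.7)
The plan is to apply Theorem~\ref{thm:universalproperty} with $\kcal := \overline\acal^\co$, so $\kcal^\co = \overline\acal$, to the composite functor
$$F' \;:=\; (-)^\ast \circ F \;:\; \xcal \longrightarrow \overline\acal, \qquad F'(f:X\to Y) \;=\; Ff^\ast : FX \looparrowright FY.$$
Local monotonicity of $F'$ follows from the local monotonicity of $F : \xcal \to \acal^\op$ and of $(-)^\ast : \acal^\op \to \overline\acal$.

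Next I would verify the three hypotheses of Theorem~\ref{thm:universalproperty} for $F'$. Condition~1 (every $F'f$ has a right adjoint in $\kcal$) holds because a right adjoint of $F'f = Ff^\ast$ in $\kcal = \overline\acal^\co$ is precisely a left adjoint in $\overline\acal$, and the standard adjunction $Ff_\ast \dashv Ff^\ast$ in $\overline\acal$ supplies such a left adjoint, namely $Ff_\ast$. Condition~2, for an exact square as in~\eqref{eq:square-m3}, unfolds to the equation
$$Fq^\ast \cdot Fp_\ast \;=\; Fg_\ast \cdot Ff^\ast$$
in $\overline\acal$, which is exactly the preservation-of-exact-squares hypothesis on $F$. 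Condition~3, for a surjection $e$ in $\xcal$, unfolds to $Fe^\ast \cdot Fe_\ast = \Id$ in $\overline\acal$, which by Proposition~\ref{prop:surjections-embeddings} is equivalent to $Fe$ being an embedding, i.e., to the surjections-to-embeddings hypothesis. Theorem~\ref{thm:universalproperty} then produces the unique locally monotone $\overline F : \overline\xcal^\co \to \overline\acal$ with $\overline F \circ (-)_\ast = F'$.

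For the explicit formula, given a tabulation $(p : W \to X, q : W \to Y)$ of $r$ so that $r = q_\ast \cdot p^\ast$, the construction in the proof of Theorem~\ref{thm:universalproperty} yields
$$\overline F(r) \;=\; F'q \cdot (F'p)^r \;=\; Fq^\ast \cdot Fp_\ast.$$
A pair $(a,b) \in FX \times FY$ lies in this composite iff there exists $w \in FW$ with $Fp(a) \le w \le Fq(b)$, equivalently iff $Fp(a) \le_{FW} Fq(b)$, so $(Fp, Fq)$ is a representing cospan for $\overline F(r)$ whose comma object is the tabulating span. In the special case $r = f_\ast$, the canonical tabulation of $f_\ast$ is the comma of the cospan $(f, \id_Y)$; its $F$-image is an exact square in $\acal$ whose opposing cospan is $(Ff, \id)$, and exactness gives $\overline F(f_\ast) = Ff^\ast$, so $(a,b) \in \overline F(f_\ast)$ iff $a \le Ff(b)$.

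The main obstacle is the variance bookkeeping. Condition~1 of Theorem~\ref{thm:universalproperty} is phrased as ``right adjoint in $\kcal$'', which for $\kcal = \overline\acal^\co$ means ``left adjoint in $\overline\acal$'', and composition in $\kcal$ versus $\kcal^\co$ runs in the same direction (only 2-cells are reversed). Once this dictionary is in place, each of the three conditions corresponds transparently to a hypothesis on $F$, with condition~1 supplied for free by the canonical adjunction $f_\ast \dashv f^\ast$ in $\overline\acal$.
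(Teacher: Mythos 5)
Your proposal is correct and follows essentially the same route as the paper: both apply Theorem~\ref{thm:universalproperty} with $\kcal=\overline\acal^\co$ to the functor $f\mapsto Ff^\ast$, verify condition~1 via $Ff_\ast\dashv Ff^\ast$, identify conditions~2 and~3 with the exact-square and surjection-to-embedding hypotheses (the latter via Proposition~\ref{prop:surjections-embeddings}), and read off $\overline F(r)=Fq^\ast\cdot Fp_\ast$ together with the simplification for $r=f_\ast$ from exactness of the comma square of $(f,\id)$. No gaps.
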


\begin{proof}
We know from Theorem~\ref{thm:universalproperty} (with $H$ being $\overline F$ and $Ff$ being $Ff^\ast$) that $\overline F(r)=Fq^\ast \cdot Fp_\ast$, that is, $\overline Fr(a,b)={FW}(Fp(a), Fq(b))$. In case $r=f_\ast$, since the square defining $(p,q)$ is exact, we have $\overline F(f_\ast)(a,b)=Ff^\ast(a,b)=FX(a,Ff(b))$.
\end{proof}

\begin{remark}\label{rmk:Hoare1}
If $\xcal$ is a category of spaces and $Ff=\twobb^f=f^{-1}$, then $r:X\looparrowright Y$ is mapped to $\overline Fr:FX\looparrowright FY$ such that, see Proposition~\ref{prop:extendviaspan},
$$(a,b)\in \overline Fr \ \ \Longleftrightarrow \ \ (\,x\in a \ \& \ xry \ \Rightarrow \ y\in b \,)$$ which we may write in Hoare-triple notation as
$$ \{a\}r\{b\}.$$
In case $r:X\looparrowright Y$ is a map $f:X\to Y$, that is, if $r=f_\ast$, which is $r(x,y)=Y(fx,y)$, then this can be written as
$$a\subseteq f^{-1} b.$$
\end{remark}

\noindent
The next result is analogous to Proposition~\ref{prop:contravariant1}, but worth spelling out for future reference.

\begin{proposition}\label{prop:contravariant2}
Let $\xcal$ and $\acal$ be concretely order-regular categories (Definition~\ref{convention}).
If a contravariant functor  $G:\acal\to\xcal$ preserves exact squares in the sense that
%
$
Gq^\ast\cdot Gp_\ast = Gg_\ast \cdot Gf^\ast
$
%
for all exact squares as in \eqref{eq:square-m4} and if $G$ takes surjections to embeddings,
then $G$ extends uniquely to a (covariant) functor $\overline\acal\to\overline\xcal^\co$. A relation $r:A\looparrowright B$ is mapped to $Gr:GA\looparrowright GB$ given by
$$(x,y)\in \overline Gr \ \Leftrightarrow \ Gp(x)\le_{GW} Gq(y)$$
where $(p:W\to A,q:W\to B)$ is a tabulation of $r$. $Gr$ is tabulated by the comma object of the cospan $(Gq,Gp)$. In case that the relation is a map, that is, in case that $r=g^\ast$ for some $g:B\to A$ this simplifies to 
$$(x,y)\in \overline G(g^\ast) \ \Leftrightarrow \ x\le_{GA} Gg(y)$$
\end{proposition}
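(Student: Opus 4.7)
The plan is to apply Theorem~\ref{thm:universalproperty2} with $\kcal = \overline\xcal^\co$, taking as input the locally monotone composite $F: \acal^\op \to \overline\xcal^\co$ that sends a morphism $g: B \to A$ of $\acal$ to $Fg = (Gg)_\ast$. Once the three hypotheses of that theorem are verified for this $F$, we obtain a unique locally monotone extension $\overline G: \overline\acal \to \overline\xcal^\co$, and the stated formulas for $\overline G(r)$ fall out of the universal construction.

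The verifications go as follows. Condition~1 asks that every $Fg = (Gg)_\ast$ admit a left adjoint in $\overline\xcal^\co$; since $(Gg)_\ast \dashv (Gg)^\ast$ in $\overline\xcal$ and passing to $\overline\xcal^\co$ merely reverses 2-cells (and so reverses the direction of adjunctions), $(Fg)_l := (Gg)^\ast$ serves as the required left adjoint. Condition~2, which demands $(Fq)_l \cdot Fp = Fg \cdot (Ff)_l$ for every exact square as in~\eqref{eq:square-m3}, translates directly into $(Gq)^\ast \cdot (Gp)_\ast = (Gg)_\ast \cdot (Gf)^\ast$ in $\overline\xcal$, and this is exactly the exactness-preservation hypothesis placed on $G$. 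Condition~3, namely $(Fe)_l \cdot Fe = \Id$ for every surjection $e$ of $\acal$, becomes $(Ge)^\ast \cdot (Ge)_\ast = \Id$, which follows from Proposition~\ref{prop:surjections-embeddings} together with the assumption that $G$ sends surjections to embeddings.

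With all three conditions in hand, Theorem~\ref{thm:universalproperty2} yields a unique locally monotone extension $\overline G: \overline\acal \to \overline\xcal^\co$; by the construction inside that theorem, for $r$ tabulated by $(p: W \to A,\, q: W \to B)$ one has $\overline G(r) = F(q)_l \cdot F(p) = (Gq)^\ast \cdot (Gp)_\ast$. Unpacking this composite pointwise in $\overline\xcal$ gives exactly $(x,y) \in \overline G(r) \Leftrightarrow Gp(x) \le_{GW} Gq(y)$, exhibiting $\overline G(r)$ as tabulated by the comma object of the cospan $GA \xrightarrow{Gp} GW \xleftarrow{Gq} GB$. For the special case $r = g^\ast$ with $g: B \to A$, I would tabulate $g^\ast$ by the comma of $(\id_A,g)$ and apply Condition~2 to the resulting exact square to collapse $(Gq)^\ast \cdot (Gp)_\ast = (Gg)_\ast \cdot (G\id_A)^\ast = (Gg)_\ast$, yielding the claimed simplified description.

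The only real obstacle is the book-keeping around variances: the roles of $(-)_\ast$ and $(-)^\ast$ are swapped relative to Proposition~\ref{prop:contravariant1}, because $\overline G$ now targets $\overline\xcal^\co$ rather than $\overline\xcal$ itself. Once the right $F$ and the right adjoint assignment $(Fg)_l = (Gg)^\ast$ have been identified, the three conditions of Theorem~\ref{thm:universalproperty2} are essentially already the hypotheses of the proposition, and the rest is mechanical.
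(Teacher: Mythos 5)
Your proof is correct and is essentially the paper's own argument: the paper likewise obtains $\overline G$ from Theorem~\ref{thm:universalproperty2} by setting $Fg=(Gg)_\ast$ with left adjoint $(Fg)_l=(Gg)^\ast$ in $\overline\xcal^\co$, translating conditions 2 and 3 into $Gq^\ast\cdot Gp_\ast = Gg_\ast\cdot Gf^\ast$ and $Ge^\ast\cdot Ge_\ast=\Id$, and reading off $\overline G(r)=(Gq)^\ast\cdot(Gp)_\ast$ from the construction of $H$. One caveat: your (correct) collapse $\overline G(g^\ast)=(Gg)_\ast$ unpacks to $Gg(x)\le_{GB}y$, which is what the paper's own proof writes, whereas the displayed formula $x\le_{GA}Gg(y)$ in the statement is not even well-typed for $g:B\to A$ (the variance is a misprint), so you should not assert without comment that your computation ``yields the claimed simplified description''.
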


\begin{proof}
We know from Theorem~\ref{thm:universalproperty2} (with $H$ being $\overline G$ and $Fg$ being $Gg_\ast$) that $\overline G(r)=Gq^\ast \cdot Gp_\ast$, that is, $\overline Gr(x,y)={GW}(Gp(x), Gq(y))$. In case $r=g^\ast$, because the square defining $(p,q)$ being exact, we have $\overline G(g_\ast)(x,y)=Gg_\ast(x,y)=GB(Gg(x),y)$.
\end{proof}

\begin{remark}\label{rmk:Hoare2}
If $\acal=\DL$  and $Gg=\twobb^g$, then a relation ${\vdash}:A\looparrowright B$ is mapped to $\overline G({\vdash}):GA\looparrowright GB$ such that for prime filters $x,y$
$$(x,y)\in \overline G({\vdash}) \ \ \Longleftrightarrow \ \ (\,a\in x \ \& \ a\vdash b \ \Rightarrow \ b\in y \,).$$ 
In the words of Remark~\ref{rmk:Hoare1}, $G(\vdash)$ is the largest relation $r$ making the Hoare triple $\{a\}r\{b\}$ true.

In case $\,{\vdash}:A\looparrowright B$ is a map $g:B\to A$, that is, if $\,{\vdash}=g^\ast$, which means $(a\vdash b) \Leftrightarrow (a\le_A g(b))$, then this can be written as $ x\le_{GA} Gg(y)$ which translates as a statement about prime filters into
$$x\subseteq g^{-1}(y)$$
\end{remark}

Before proving that dual equivalences extend from maps to relations, we need to check that the following holds.

\begin{lemma}
Let dually equivalent $F:\xcal\to\acal$, $G:\acal\to\xcal$ satisfy the assumptions of Propositions~\ref{prop:contravariant1} and \ref{prop:contravariant2}. Let $(p:W\to X,q:W\to Y)$ be a span representing the relation $r:X\looparrowright Y$. Then $\overline G\,\overline Fr$ is represented by $(GFp,GFq)$.
\end{lemma}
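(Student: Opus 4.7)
The plan is to compute $\overline F r$ explicitly via Proposition~\ref{prop:contravariant1}, apply $\overline G$ via Proposition~\ref{prop:contravariant2}, and then use the fact that $G$ preserves exactness to identify the resulting relation with the one represented by the span $(GFp,GFq)$. First, let $(p_0\colon W_0\to FX,\ q_0\colon W_0\to FY)$ be the comma in $\acal$ of the cospan $(Fp,Fq)$. By Proposition~\ref{prop:contravariant1}, this span tabulates $\overline F r$, and since comma squares are exact in any concretely order-regular category, the resulting square with span $(p_0,q_0)$ and cospan $(Fp,Fq)$ is exact in $\acal$.

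Next, by Proposition~\ref{prop:contravariant2} applied to $\overline F r$ with the tabulating span $(p_0,q_0)$, we have
\[
\overline G\,\overline F r \;=\; (Gq_0)^{\ast}\cdot (Gp_0)_\ast,
\]
so $\overline G\,\overline F r$ is represented by the cospan $(Gp_0,Gq_0)$ in $\xcal$.

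The key step is to apply $G$ to the exact comma square in $\acal$. Since $G$ satisfies condition~2 of Proposition~\ref{prop:contravariant2} --- preservation of exact squares in the form $Gq^\ast\cdot Gp_\ast = Gg_\ast\cdot Gf^\ast$ --- the image in $\xcal$ is an exact square whose span is $(GFp,GFq)$ (apex $GFW$) and whose cospan is $(Gp_0,Gq_0)$ (apex $GW_0$). Exactness therefore yields
\[
(GFq)_\ast \cdot (GFp)^{\ast} \;=\; (Gq_0)^{\ast}\cdot (Gp_0)_\ast \;=\; \overline G\,\overline F r,
\]
and the left-hand side is precisely the relation represented by the span $(GFp,GFq)$, as required.

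The only real subtlety is bookkeeping directions under the contravariance of $F$ and $G$: the comma span $(p_0,q_0)$ in $\acal$ becomes, after applying $G$, a cospan in $\xcal$, while the cospan $(Fp,Fq)$ becomes the span $(GFp,GFq)$. Importantly, the argument does not require $(p_0,q_0)$ or $(GFp,GFq)$ to be weakening-closed or embedding spans; it uses only the exactness of a comma square together with the exactness-preservation hypothesis on $G$, which is the crux of why the extension from maps to relations is natural.
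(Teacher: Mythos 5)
Your proof is correct and follows essentially the same route as the paper's: both take the comma of the cospan $(Fp,Fq)$ obtained by dualising the given span, note that this comma square is exact, and then use that $G$ preserves exact squares to identify the relation represented by the span $(GFp,GFq)$ with the relation $\overline G\,\overline F r$ represented by the $G$-image of the comma span. Your version just spells out the bookkeeping (which legs form spans versus cospans under the two contravariant applications) more explicitly than the paper does.
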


\begin{proof}
$\overline F r$ is represented by $(Fq,Fp)$. Let $(p',q')$ be its comma object. Because $((p',q'),(Fq,Fp))$ is an exact square,  and $G$ preserves exact squares, we know that $(Gq',Gp')$ and $(GFp,GFq)$ represent the same relation. 
\end{proof}

Combining Propositions~\ref{prop:contravariant1} and \ref{prop:contravariant2}
we obtain the following extension theorem.

\begin{theorem}\label{thm:dualityrelations}
Let $\xcal$ and $\acal$ be concretely order-regular categories (Definition~\ref{convention}). Let $F:\xcal\to\acal$ and $G:\acal\to\xcal$ be a dual equivalence of contravariant functors satisfying the assumptions of Propositions~\ref{prop:contravariant1} and \ref{prop:contravariant2}, namely preservation of exact squares and the mapping of surjections to embeddings.
Then $F$ and $G$ extend to an equivalence $\overline F:\overline\xcal^\co\to\overline\acal$ and $\overline G:\overline\acal\to\overline\xcal^\co$. Restricting this equivalence to maps as in
$$
\xymatrix@C=30pt@R=30pt{
\overline\xcal^\co 
\ar@/^/[r]^{\overline{F}}  
& \overline\acal
\ar@/^/[l]^{\overline{G}}  
 \\
\xcal \ar[u]^{(-)_\ast} 
\ar@/^/[r]^{F} & 
\acal^\op\ar[u]_{(-)^\ast}
\ar@/^/[l]^{G}
}
$$
gives back the dual equivalence $(F,G)$.
\end{theorem}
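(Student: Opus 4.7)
The plan is to apply the universal extension theorems (\ref{thm:universalproperty} and \ref{thm:universalproperty2}), already packaged as Propositions~\ref{prop:contravariant1} and~\ref{prop:contravariant2}, to produce $\overline F$ and $\overline G$, and then to lift the unit and counit of the dual equivalence $(F,G)$ to natural isomorphisms on the relational categories.

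First I would verify the three hypotheses of each extension theorem for the composites $(-)^\ast \circ F: \xcal \to \overline\acal$ and $(-)_\ast \circ G: \acal^\op \to \overline\xcal^\co$. The \emph{preservation of maps} hypothesis is automatic because every $f^\ast$ is right adjoint to $f_\ast$, and dually. The \emph{preservation of exact squares} hypothesis is exactly the assumed equation $Fq^\ast \cdot Fp_\ast = Fg_\ast \cdot Ff^\ast$ (and its analogue for $G$), which the preamble to the theorem derives from the stated hypothesis. The \emph{invertibility on surjections}, $Fe^\ast \cdot Fe_\ast = \Id$, is equivalent by Proposition~\ref{prop:surjections-embeddings} to $Fe$ being an embedding, i.e., to the stated hypothesis that $F$ sends surjections to embeddings. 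This produces the functors $\overline F$ and $\overline G$ together with the explicit formulas of the two propositions. Restriction to maps is then immediate: for $f: X \to Y$ in $\xcal$, Proposition~\ref{prop:contravariant1} gives $(a,b) \in \overline F(f_\ast) \iff a \le Ff(b)$, which is exactly the relation $(Ff)^\ast: FX \looparrowright FY$, i.e., the image of $f$ along the other route $\xcal \xrightarrow{F} \acal^\op \xrightarrow{(-)^\ast} \overline\acal$; symmetrically for $G$.

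Next I would use the lemma immediately before the theorem: for $r: X \looparrowright Y$ tabulated by $(p, q)$, the double dual $\overline G \overline F r$ is tabulated by $(GFp, GFq)$. Writing $\eta: \Id_\xcal \Rightarrow GF$ for the natural isomorphism witnessing the equivalence, naturality gives $GFp \circ \eta_W = \eta_X \circ p$ and similarly for $q$. Using that $\eta_W$ is an isomorphism, so $(\eta_W)^\ast \cdot (\eta_W)_\ast = \Id$, one computes
$$
\overline G \overline F r \;=\; (GFq)_\ast \cdot (GFp)^\ast \;=\; (\eta_Y)_\ast \cdot q_\ast \cdot p^\ast \cdot (\eta_X)^\ast \;=\; (\eta_Y)_\ast \cdot r \cdot (\eta_X)^\ast,
$$
which is precisely the naturality square expressing $\overline G \overline F \cong \Id$ with components the invertible $1$-cells $(\eta_X)_\ast$. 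The same argument with the counit $\varepsilon: \Id_\acal \Rightarrow FG$ yields $\overline F \overline G \cong \Id$. The main obstacle I anticipate is the variance bookkeeping --- making sure the natural isomorphism constructed in $\overline\xcal$ transfers correctly to $\overline\xcal^\co$ and pairs with the counit side to give an honest adjoint equivalence of $\Pos$-categories --- together with the independence of the construction from the choice of tabulating span, which is granted by the universality half of the extension theorems and by the fact that span isomorphisms preserve the represented relation.
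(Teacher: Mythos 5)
Your proposal is correct and follows essentially the same route as the paper: both obtain $\overline F$ and $\overline G$ from the extension theorems via Propositions~\ref{prop:contravariant1} and \ref{prop:contravariant2}, invoke the lemma that $\overline G\,\overline F r$ is represented by $(GFp,GFq)$, and then deduce $\overline G\,\overline F\cong\Id$ from naturality of $\eta$ on maps together with invertibility of its components. Your explicit relational computation $\overline G\,\overline F r=(\eta_Y)_\ast\cdot r\cdot(\eta_X)^\ast$ is just a spelled-out version of the paper's observation that the two commuting inner squares of diagram~\eqref{eq:naturality-isos} force the outer rectangle to commute because the vertical arrows are isomorphisms.
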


\begin{proof}
We have to show that the unit and counit are natural wrt relations. Using the previous lemma, it is enough to consider diagrams such as
\begin{equation}
\label{eq:naturality-isos}
\vcenter{
\begin{xymatrix}{
A \ar[d] \ar@{..>}@/^3ex/[rr]^R& \ar[l] W\ar[d] \ar[r] & B \ar[d]\\
GFA \ar@{..>}@/_3ex/[rr]_{\overline G\overline FR}& \ar[l] GFW \ar[r] & GFB
}
\end{xymatrix}
}
\end{equation}
where the upper span tabulates a relation $R$ and the vertical arrows are the unit of $F\dashv G$. The inner squares commute by naturality wrt maps, which implies that the outer rectangle (with the dotted horizontal arrows) commutes since the vertical arrows are isos. 
\end{proof}

\begin{remark}
The previous proof relies on the units being isos. This is where we cannot weaken from dual equivalence to dual adjunctions. We will see how to deal with this with the help of double categories in the next section.
\end{remark}

\begin{remark}\label{rmk:exact-cocommas}
Let $F:\xcal\to\acal$ and $G:\acal\to\xcal$ be a dual equivalence of concretely order-regular categories. Then $F,G$ preserve exact squares if and only if cocommas in $\xcal$ and $\acal$ are exact. 
For ``only if'', note that the dual of a cocomma square in $\acal$ is exact (due to being a comma square). It then follows from the functor preserving exactness that the cocomma square itself must be exact as well.
For ``if'', consider an exact square $(p,q,j,k)$ on one side with span $(p,q)$ and cospan $(j,k)$. Let $(j',k')$ be the cocomma of $(p,q)$ and $(p',q')$ be the comma of $(j',k')$. The squares $(p,q,j',k')$ and  $(p',q',j',k')$ are, respectively, cocomma and comma squares by definition. Then $(p,q,j,k)$ is also a comma square. The dual squares are then also comma and cocomma squares, respectively. Since comma and cocomma squares are exact, so is the dual of $(p,q,j,k)$. 
\end{remark}

\subsection{Extending Adjunctions to Double Categories of Relations}\label{sec:adjunction}
In this section, we are going to extend adjunctions 
$$
\xymatrix@C=30pt@R=30pt{
\xcal
\ar@/^/[r]^{F} & 
\acal^\op
\ar@/^/[l]^{G}
}
$$
to the corresponding categories of relations. As we noted above,  the commutativity of  Diagram~\eqref{eq:naturality-isos} in $\Rel(\xcal)$ depends on the unit of the adjunction being an isomorphism. Accordingly, in general, adjunctions on categories of maps do not extend to adjunctions on categories of relations. This problem can be solved by amalgamating the category of relations and the category of maps into a so-called weak double category \cite{GP99}. As shown in Grandis and Par\'e \cite{GP04} this makes it possible to extend adjunctions to relations and various other structures such as spans/cospans and distributors. An excellent account  can be found in the recent book by Grandis \cite{Grandis}. Framed bicategories are special weak double categories and we will rely on Shulman's \cite{shulman} in the following. 

\paragraph{Framed Bicategories.}\label{sec:fb}

Framed bicategories \cite{shulman}  allow us to have both $\ccal$ and $\Rel(\ccal)$ in one structure, see also Example 2.6 in \cite{shulman}. 
Since we only need a very special case of framed bicategories in this paper, we do not detail the general definition and only explain how any concretely order-regular category $\ccal$ gives rise to a framed bicategory $\S\ccal$.

\pskip 
Framed bicategories are special double categories \cite{GP04}. Informally speaking, a 2-cell in a double category is a square
$$
\xymatrix{
A\ar[d]_f \ar[r]^R \ar@{}[dr]|-\subseteq& B\ar[d]^g \\
C \ar[r]_S & D  
}
$$
where, in our examples, the horizontal arrows are relations, the vertical arrows are maps, and the 2-cell represents a subset-relation as indicated.%
\footnote{Grandis and Par\'e write relations as vertical arrows. }
In other words, forgetting the horizontal structure of $\S\ccal$ gives back $\ccal$ and forgetting the vertical structure of $\S\ccal$, we obtain $\Rel(\ccal)$.
Importantly, it is the double category theoretic view which gives us the right notion of functor and adjunction. 
The technical point where this matters can be seen if we go back to \eqref{eq:naturality-isos} and note that the unit of an adjunction $F\dashv G$ is not, in general, natural wrt relations. 
From a double category theoretic point of view, it suffices that the outer rectangle of \eqref{eq:naturality-isos} commutes up to a 2-cell.

\pskip
More technically, we can define, ignoring issues of size, a double category as an internal category \cite[Ch.8]{borceux2}
$$\xymatrix{\mathbb D_1\ar@<0.5ex>[r]^{\mathit{dom}}\ar@<-.5ex>[r]_{\mathit{cod}}&\mathbb D_0}$$
in $\Cat$.%
\footnote{Following \cite{GP04,shulman}, we should say \emph{weak} internal category, but in our example of relations only the special strict case occurs.}
Note that this point of view breaks the symmetry between `vertical' arrows, which are arrows in $\mathbb D_0$, and `horizontal' arrows, which are objects in $\mathbb D_1$. The internal composition in the $\mathbb D_i$ is vertical composition and the external composition of $\mathbb D_1$ is horizontal composition. Finally, a double category is a framed bicategory if every vertical arrow can be represented by horizontal arrows in a suitable way, see \cite[Sec.1.2 and 1.3]{GP04} and  \cite[Thm.4.1 and Thm.A.2]{shulman} for details. This gives a double category theoretic axiomatisation of the two ways \eqref{eq:f_ast} and \eqref{eq:f^ast} of embedding maps into relations.

\pskip For our purposes, it suffices to know that the construction described in the next proposition is a framed bicategory. This then allows us to use that framed bicategories form a strict 2-category and, therefore, come with  a native notion of adjunction. As it turns out, this notion of adjunction is precisely the one we need in Theorem~\ref{thm:adjunctionrelations} to prove that adjunctions extend from maps to relations.

\pskip We will write $\S\ccal$ for the framed bicategory of relations of the category $\ccal$. A framed bicategory is a double category (strict for us) with the additional property that for every horizontal  1-cell $R$ and every pair $(f,g)$ of vertical 1-cells as in 
\begin{equation*}
\vcenter{
\xymatrix@C=40pt{
A\ar@{..>}[r]^{R(f,g)} \ar[d]_f& B\ar[d]^g \\
C\ar@{->}[r]_R  & D
}}
\end{equation*}
there is a unique cartesian lifting of $R$ along $(f,g)$. In our special case, 
the cartesian lifting (also known as the restriction) of $R:C^\op\times D\to\Two$ along $(f,g)$ will be the relation $R(f,g)$, defined by mapping $(a,b)$ to $R(f(a),g(b))$.

\begin{remark}
For the reader who wants to understand in detail how framed adjunctions apply to our setting, we give a brief guide to the notation of \cite{shulman}. $A,B$ are objects and $f,g$ are vertical 1-cells (maps) and $M,N$ are horizontal 1-cells (relations). We will write $1$ for identity arrows dropping the usual subscript of $1_A:A\to A$ so that $A(1,1)$ is the identity relation on $A$.
In \cite[Def.1]{shulman}, the horizontal 1-cell $U_A$ is $A(1,1)$, the 2-cell $U_f$   records the fact that $A(1,1)\le B(f,f)$, that is, that $f:A\to B$ is monotone. Our notation for the horizontal composition $M \odot N$ is $M;N$ or $N\cdot M$.  The restriction $f^\ast M g^\ast$, that is, the cartesian lifting of $M$ along $(f,g)$,  is $M(f,g)$, or, equivalently, $g^\ast\cdot M \cdot f_\ast$.\footnote{Shulman uses $(-)^\ast$ to denote a cartesian lifting while we use $(-)^\ast$ for the embedding $\Pos\to\Rel(\Pos)$.} The extension $f_! Mg_!$, that is, the op-cartesian lifting of $M$  along $(f,g)$,  is $g_\ast\cdot M\cdot f^\ast$. The base change object ${}_fB$ is $B(f,1)=f_\ast$ and $B_f$ is $B(1,f)=f^\ast$.
\end{remark}

\begin{proposition}
Let $\ccal$ be a concretely order-regular category. Then there is a framed bicategory $\S\ccal$ that has the same objects as $\ccal$, that has the arrows of $\ccal$ as vertical arrows,  and that has the arrows of $\Rel(\ccal)$ as the horizontal arrows. The 2-cells are squares 
\begin{equation}
\vcenter{
\xymatrix{
A\ar@{->}[r]^S \ar[d]_f& B\ar[d]^g \\
C\ar@{->}[r]_R  & D
}}
\end{equation}
such that $S\le R(f,g)$, or, equivalently, any of $g_\ast \cdot S\subseteq R\cdot f_\ast$ or $g_\ast \cdot S\cdot f^\ast\subseteq R$ or $S\subseteq g^\ast\cdot R\cdot f_\ast$.
\end{proposition}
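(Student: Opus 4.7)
I will split the verification into three parts: (i) the equivalence of the four descriptions of a 2-cell, (ii) the (strict) double-category structure, and (iii) the framed property (existence of cartesian liftings). The backbone of all three parts is the adjunction $f_\ast\dashv f^\ast$ in $\Rel(\ccal)$, which is inherited from $\Rel(\Pos)$ via the P-faithful functor $\overline U$ because $f_\ast$ and $f^\ast$ are themselves comma objects in $\ccal$ (Section~\ref{sec:CORC2}).

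Part (i) is purely formal and I treat it first, since the other verifications will lean on it. By definition $R(f,g) = g^\ast\cdot R\cdot f_\ast$, so $S\le R(f,g)$ coincides with the fourth formulation $S\subseteq g^\ast\cdot R\cdot f_\ast$. Applying the adjunction $g_\ast\dashv g^\ast$ on the output side gives the equivalence with the second formulation $g_\ast\cdot S\subseteq R\cdot f_\ast$, and applying the pre-composition form of $f_\ast\dashv f^\ast$ on the input side gives the equivalence with the third formulation $g_\ast\cdot S\cdot f^\ast\subseteq R$. Each of these is a one-line calculation using the unit $\id\le f^\ast\cdot f_\ast$ and the counit $f_\ast\cdot f^\ast\le\id$.

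For part (ii), objects and vertical arrows are those of $\ccal$, while horizontal 1-cells and their composition form the $\Pos$-enriched bicategory $\Rel(\ccal)$, with horizontal identity on $A$ given by $A(1,1)=\id_\ast$. Vertical composition of 2-cells is by transitivity of $\le$, using the functoriality identity $R(f,g)(f',g') = R(ff',gg')$, which follows from $(-)_\ast$ being covariant and $(-)^\ast$ being contravariant on $\ccal$-arrows together with the computation $g'^\ast\cdot g^\ast\cdot R\cdot f_\ast\cdot f'_\ast = (gg')^\ast\cdot R\cdot (ff')_\ast$. Horizontal composition of a 2-cell $S\le R(f,g)$ with $T\le Q(g,h)$ rests on the inequality $Q(g,h)\cdot R(f,g)\le (Q\cdot R)(f,h)$, which is a one-line consequence of the counit $g_\ast\cdot g^\ast\le\id$ inserted in the middle of the composite $h^\ast\cdot Q\cdot g_\ast\cdot g^\ast\cdot R\cdot f_\ast$, together with associativity in $\Rel(\ccal)$; the interchange law then follows from associativity and local monotonicity. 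For part (iii), the cartesian lifting of $R:C\looparrowright D$ along $(f,g)$ is $R(f,g) := g^\ast\cdot R\cdot f_\ast$, with the universal 2-cell being the identity inequality (which by (i) is indeed a 2-cell over $(f,g)$), and its universal property --- given $S\le R(ff',gg')$, produce a unique factoring 2-cell $S\le R(f,g)(f',g')$ --- collapses to the functoriality identity $R(f,g)(f',g')=R(ff',gg')$ established in part (ii). The main obstacle, if any, is purely bookkeeping: tracking on which side of the relation each adjunction is applied. No new categorical input beyond the $f_\ast\dashv f^\ast$ adjunctions and associativity of composition in $\Rel(\ccal)$ is needed.
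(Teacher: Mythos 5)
Your proof is correct, but it takes a different route from the paper. The paper disposes of the proposition in one line by appealing to Shulman's Theorem~4.1: having set up the dictionary between his notation and ours (companions $f_\ast$, conjoints $f^\ast$), it invokes condition~(iii) of that theorem --- every vertical 1-cell has a companion and a conjoint --- which is equivalent to the framed (fibration) condition. You instead verify the fibration condition~(i) directly: you construct the cartesian filler of a niche explicitly as $R(f,g)=g^\ast\cdot R\cdot f_\ast$ and check its universal property via the identity $R(f,g)(f',g')=R(ff',gg')$, after first checking the double-category axioms by hand. Both arguments rest on exactly the same ingredients (the adjunctions $f_\ast\dashv f^\ast$ in $\Rel(\ccal)$, the identities $(ff')_\ast=f_\ast\cdot f'_\ast$ and $(gg')^\ast=g'^\ast\cdot g^\ast$, and posetality of the 2-cells, which trivializes interchange and uniqueness). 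What your version buys is self-containedness --- no reliance on the equivalence of Shulman's conditions --- and it makes explicit the double-category verification that the paper's citation tacitly presupposes; what the paper's version buys is brevity. One small point worth making explicit in your part~(iii): the restriction $g^\ast\cdot R\cdot f_\ast$ must again be a $\ccal$-relation, not merely a $\Pos$-relation; this is covered by the facts, already recorded in Section~\ref{sec:CORC2}, that $f_\ast$ and $g^\ast$ are $\ccal$-relations for $\ccal$-morphisms $f,g$ and that $\ccal$-relations are closed under composition.
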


\begin{proof}
With the notation of the remark above, it is immediate to verify condition (iii) of \cite[Thm.4.1]{shulman}.
\end{proof}

\noindent
We will write 
$$\S\ccal^\co \quad\quad\textrm{and} \quad\quad \S\ccal^\op$$ for the framed bicategories that are the same as $\S\ccal$ but have, respectively,  reversed 2-cells and reversed vertical 1-cells.

\paragraph{Extension Theorems.} In the following proposition we assume that we have an adjunction $F\dashv G:\acal^\op\to\xcal$ with $F$ and $G$ satisfying the assumptions that allow us to apply  Propositions~\ref{prop:contravariant1} and \ref{prop:contravariant2} in order to obtain extensions $\S F$ and $\S G$.

\begin{theorem}\label{thm:adjunctionrelations}
Let $\xcal$ and $\acal$ be concretely order-regular categories and let $F\dashv G:\acal^\op\to\xcal$ be an adjunction with both $F$ and $G$ preserving exact squares and mapping surjections to embeddings. Define the extensions $\S F$ and $\S G$ on 0- and 1-cells as $F$ and $G$ and on 2-cells by tabulation as in Proposition~\ref{prop:contravariant1} for $\S F$ and as in Proposition~\ref{prop:contravariant2}  for $\S G$. Then these extensions
$$
\xymatrix@C=60pt@R=30pt{
\S\xcal^\co 
\ar@/^/[r]^{\S{F}}  
& \S\acal
\ar@/^/[l]^{\S{G}}  
}
$$
constitute an adjunction of framed bicategories. Moreover, if $F$ and $G$ are an equivalence, so are $\S F$ and $\S G$.
\end{theorem}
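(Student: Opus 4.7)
The plan is to build on Theorem~\ref{thm:dualityrelations} (which extends equivalences) by upgrading from strict naturality to the naturality-up-to-2-cell native to framed bicategories; this is exactly what allows us to cover the case where $\eta, \epsilon$ are not invertible, which is what blocked the direct extension to $\Rel(-)$. First I would verify that $\S F$ and $\S G$, defined on horizontal 1-cells and 2-cells by the tabulation formulas of Propositions~\ref{prop:contravariant1} and \ref{prop:contravariant2}, are framed functors. This reduces to three checks: preservation of horizontal identity relations (commas of identity cospans, preserved since $F, G$ preserve exact squares); preservation of horizontal composition (computed via cocommas, which by Remark~\ref{rmk:exact-cocommas} are exact in both $\xcal$ and $\acal$, hence preserved); and preservation of cartesian liftings $R(f,g) = g^\ast \cdot R \cdot f_\ast$, which unfolds to an exact-square computation compatible with the universal characterisations in Theorems~\ref{thm:universalproperty} and \ref{thm:universalproperty2}.

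Next I would exhibit the unit and counit as framed natural transformations. Their vertical components remain $\eta_X: X \to GFX$ and $\epsilon_A: A \to FGA$ from the original adjunction. The new content is naturality with respect to horizontal 1-cells: for a relation $R: A \looparrowright B$ in $\xcal$ we must provide a 2-cell
\[
\vcenter{\xymatrix{A \ar[r]^{R} \ar[d]_{\eta_A} & B\ar[d]^{\eta_B}\\ GFA \ar[r]_{\S G\S F(R)} & GFB}}
\]
i.e.\ the containment $R \subseteq (\S G\S F(R))(\eta_A, \eta_B)$. Tabulating $R$ by a span $(p: W\to A,\ q: W\to B)$, I would use the lemma preceding Theorem~\ref{thm:dualityrelations}, together with preservation of exact squares by $F$ and $G$, to conclude that $\S G\S F(R)$ is represented by the span $(GFp, GFq)$. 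Naturality of the original $\eta$ gives $\eta_A\circ p = GFp\circ\eta_W$ and $\eta_B\circ q = GFq\circ\eta_W$, so any witness $w\in W$ with $a \le p(w)$ and $q(w)\le b$ produces $\eta_W(w)$ as a witness for $(\eta_A(a), \eta_B(b))\in \S G\S F(R)$. The counit argument is dual.

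The triangle identities on vertical 1-cells are inherited verbatim from the original adjunction; on horizontal 1-cells they say that two pasting composites of the naturality 2-cells equal an identity 2-cell, which again reduces to checking a containment between relations tabulated via $F,G,\eta,\epsilon$, and this follows from the same witness-transport calculation together with the triangle identities on maps. For the final sentence of the theorem: if $\eta$ and $\epsilon$ are isomorphisms of maps, then the naturality 2-cells constructed above become invertible (the constructed containment of relations is seen to be an equality by running the same argument with $\eta^{-1}, \epsilon^{-1}$), so the framed adjunction is a framed equivalence, which restricted to horizontal 1-cells recovers Theorem~\ref{thm:dualityrelations}. The main obstacle will be paragraph two, specifically coordinating the contravariance and the asymmetry between $(-)_\ast$ (used to define $\S F$) and $(-)^\ast$ (used to define $\S G$) so that $R$ and $\S G\S F(R)$ can be compared through a single pair of tabulating spans; a secondary difficulty is that horizontal composition is strictly preserved only because exactness of cocommas in the base categories (Remark~\ref{rmk:exact-cocommas}) is inherited from the assumption that $F,G$ preserve exact squares.
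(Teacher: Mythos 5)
Your proposal is correct and follows essentially the same route as the paper: establish that $\S F,\S G$ are (strong) framed functors via the already-proved extension results, then exhibit the unit and counit as framed transformations by producing, for each relation $R$ tabulated by a span $(p,q)$, the 2-cell $R\Rightarrow \S G\S F(R)$ from the commuting naturality squares \eqref{eq:naturality-isos} -- your witness-transport via $\eta_W$ is exactly what the paper's appeal to those squares encodes. The only difference is economy: the paper dispenses with your explicit triangle-identity and coherence checks by observing that the 2-cells are posetal, so all diagrams of 2-cells between the same 1-cells commute automatically.
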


\begin{proof}
First, we have to check that $\S F$ and $\S G$ are framed functors. Defining them on objects and vertical 1-cells as $F$ and $G$ and on horizontal 1-cells as $\overline F$ and $\overline G$ as in Propositions~\ref{prop:contravariant1} and \ref{prop:contravariant2}, $\S F$ and $\S G$ are strong framed functors in the sense of \cite[Defs.6.1,6.14]{shulman}. It remains to see that the units $\eta:\Id\to GF$ and $\eps:\Id\to FG$ of the adjunction extend to framed transformations  \cite[Defs.6.15,6.16]{shulman}. Since our 2-cells are posetal, all 2-cell diagrams between the same 1-cells commute. So it suffices to show that for all relations $R:A\to B$ we have  a 2-cell $(\eta_A,\eta_{B}):R\Rightarrow GFR$, that is, $(\eta_{B})_\ast\cdot R\subseteq GFR\cdot (\eta_A)\ast$ and this follows from the two squares in \eqref{eq:naturality-isos} commuting.
\end{proof}

As a corollary we obtain a result in the same spirit as the equivalence Theorem~\ref{thm:dualityrelations}. But, technically, they are different theorems, because Theorem~\ref{thm:dualityrelations} is about categories where relations are arrows, whereas Corollary~\ref{cor:adjunctionrelations} is about framed bicategories where relations are objects parameterised by maps.

\begin{corollary}\label{cor:adjunctionrelations}
Let $\xcal$ and $\acal$ be concretely order-regular categories and let $F\dashv G:\acal^\op\to\xcal$ be a dual equivalence with both $F$ and $G$ preserving exact squares and mapping surjections to embeddings. Then there is an equivalence between the framed bicategories $\S\xcal^\co$ and $\S\acal^\op$, determined by the action of $F$ and $G$ on vertical arrows.
\end{corollary}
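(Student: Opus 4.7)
The plan is to derive the corollary directly from Theorem~\ref{thm:adjunctionrelations}, exploiting the fact that a dual equivalence is in particular a dual adjunction satisfying the hypotheses on exact squares and surjections. That theorem already supplies an adjunction $\S F\dashv\S G$ of framed bicategories, together with framed transformations extending the unit $\eta:\Id\to GF$ and counit $\eps:\Id\to FG$. All that remains is to upgrade this adjunction to an equivalence by checking that the framed $\eta$ and $\eps$ are invertible.

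In the 2-category of framed bicategories, an equivalence is an adjunction whose unit and counit are invertible framed transformations. A framed transformation is invertible precisely when each vertical component on objects is an isomorphism; since our 2-cells are posetal, the globular components on horizontal 1-cells are then automatically invertible, being inequalities in both directions. The vertical components of the framed $\eta$ and $\eps$ are, by construction in the proof of Theorem~\ref{thm:adjunctionrelations}, the original components of the adjunction unit and counit, which are isomorphisms by the assumption that $(F,G)$ is an equivalence.

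It remains to verify that the globular component $(\eta_A,\eta_B):R\Rightarrow GFR$ at each horizontal 1-cell $R:A\looparrowright B$ is an iso in $\S\xcal$. In the proof of Theorem~\ref{thm:adjunctionrelations}, this 2-cell was realised as the inclusion $(\eta_B)_\ast\cdot R\subseteq GFR\cdot(\eta_A)_\ast$, read off from a tabulation $(p,q)$ of $R$ via diagram \eqref{eq:naturality-isos}. When $\eta_A$ and $\eta_B$ are invertible, the vertical arrows in \eqref{eq:naturality-isos} are isos, so the inner squares (commutative by naturality of $\eta$ on maps) force the outer rectangle to commute on the nose, supplying the reverse inclusion $(\eta_B)_\ast\cdot R\supseteq GFR\cdot(\eta_A)_\ast$. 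The same argument applies to $\eps$, so both framed transformations are invertible, and the restriction of the framed equivalence to vertical arrows is $(F,G)$ by definition of $\S F$ and $\S G$ on vertical 1-cells. The only subtle point, and the reason the corollary needs a proof at all, is the observation that Theorem~\ref{thm:adjunctionrelations} only delivers one of the two inclusions, with the reverse becoming automatic once the vertical components are isos.
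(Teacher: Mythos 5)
Your proposal is correct and follows the paper's intended route: the paper derives this corollary immediately from the final ``Moreover'' clause of Theorem~\ref{thm:adjunctionrelations}, and your elaboration of why the framed unit and counit become invertible --- vertical components are the original isos, and the reverse inclusion of the globular 2-cells comes from the outer rectangle of \eqref{eq:naturality-isos} commuting when the vertical arrows are isos --- is exactly the argument the paper itself uses in the proof of Theorem~\ref{thm:dualityrelations}. You have in effect supplied the details the paper leaves implicit, but there is no divergence in method.
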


The next theorem shows that the adjunction `homming into $\Two$' extends to relations. It only works for the framed bicategory $\S\Pos$ and has no analogue in terms of $\Rel(\Pos)$.

\begin{corollary}
$\twobb^-\dashv\twobb^-:\Posframed^\op\to\Posframed$ is a framed (and op-framed) adjunction.
\end{corollary}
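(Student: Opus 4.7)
The plan is to apply Theorem~\ref{thm:adjunctionrelations} to the self-dual situation $\xcal = \acal = \Pos$ with $F = G = \twobb^-$. Three hypotheses need to be verified: $\Pos$ itself is concretely order-regular, the adjunction $\twobb^- \dashv \twobb^-: \Pos^\op \to \Pos$ exists on the level of maps, and the functor $\twobb^-$ satisfies the two preservation conditions required by the extension theorem.

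First I would note that $\Pos$, equipped with its identity forgetful functor, is concretely order-regular in the sense of Definition~\ref{convention}: it is complete and cocomplete in the $\Pos$-enriched sense (Section~\ref{sec:weighted}) and it has the $(\Onto,\Emb)$-factorisation from Example~\ref{exle:onto-emb}. Next, the self-adjunction $\twobb^- \dashv \twobb^-: \Pos^\op \to \Pos$ is a standard consequence of the cartesian closure of $\Pos$ together with the symmetry of the product:
\[
\Pos(X, \twobb^Y) \;\cong\; \Pos(X\times Y, \twobb) \;\cong\; \Pos(Y\times X,\twobb)\;\cong\; \Pos(Y,\twobb^X).
\]
The two preservation hypotheses of Theorem~\ref{thm:adjunctionrelations} are precisely what Lemmas~\ref{lem:presexact} and~\ref{lem:presfact} have already established: $\twobb^-$ preserves exact squares, and it sends surjections to embeddings (and, by symmetry, embeddings to surjections).

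With these four inputs in place, Theorem~\ref{thm:adjunctionrelations} fires and produces a framed adjunction $\S(\twobb^-) \dashv \S(\twobb^-)$ whose action on vertical $1$-cells is the original self-adjunction and whose action on horizontal $1$-cells is computed via tabulation exactly as in Propositions~\ref{prop:contravariant1} and~\ref{prop:contravariant2}. The op-framed half of the claim then comes for free: because the same functor $\twobb^-$ plays the roles of both $F$ and $G$, swapping these roles in the application of Theorem~\ref{thm:adjunctionrelations} (equivalently, cotabulating instead of tabulating) yields the dual op-framed adjunction on the same data.

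The only non-routine point — and the one I would flag as the main bookkeeping obstacle — is matching the variance decorations. The theorem delivers an adjunction between $\S\Pos^\co$ and $\S\Pos^\op$, whereas the corollary writes the target compactly as $\Posframed^\op \to \Posframed$; reconciling the two amounts to tracking how the reversal of $2$-cells in $(-)^\co$ and the reversal of vertical $1$-cells in $(-)^\op$ interact under the symmetric choice $F = G$. Once this is unwound, nothing beyond the lemmas cited above is needed.
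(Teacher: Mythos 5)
Your proof is correct and follows exactly the route the paper intends: the corollary is an immediate instance of Theorem~\ref{thm:adjunctionrelations} with $\xcal=\acal=\Pos$ (concretely order-regular via the identity forgetful functor) and $F=G=\twobb^-$, the required hypotheses being supplied by the self-adjunction from cartesian closure together with Lemmas~\ref{lem:presexact} and~\ref{lem:presfact}. The variance bookkeeping you flag between $\S\Pos^\co$, $\S\Pos^\op$ and the corollary's compact notation $\Posframed^\op\to\Posframed$ is a notational looseness in the paper rather than a gap in your argument.
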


\subsection{Examples}
\label{sec:exles-adjunctions}

We exhibit two further dualities that satisfy the assumptions of Theorems  \ref{thm:dualityrelations} and \ref{thm:adjunctionrelations}. We start with some remarks on semi-lattices in ordered categories.

$\Pos$-algebras (or $\Pri$-algebras) have monotonic (or monotonic continuous) operations. But if the operations themselves determine a partial order, for example, if one of the operations is associative and idempotent, the underlying partial order does not have to coincide with the algebraically determined order. 
For example, it is possible to have a lattice in $\Pos$ for which the lattice order is not the same as the underlying order (take a lattice with discrete underlying set). So care needs to be taken in specifying how the poset order and derived order relate.

In $\Set$, a semilattice is only conventionally spoken of being a meet or join semilattice depending on intuition. In $\Pos$, a semilattice may actually be a meet or join semilattice (or neither) according to whether the underlying poset order coincides with the order defined by the lattice operation, or its opposite. Thus we call a unital semilattice $(X,*,e)$ in $\Pos$
a \emph{unital meet semilattice} if $x\leq y$ coincides with $x=x*y$, and $e$ is the maximal element. Likewise we call it a \emph{unital join semilattice} if $x\leq y$ coincides with $x*y=y$, and $e$ is the minimal element.

To show that the conditions of Theorems \ref{thm:dualityrelations} and \ref{thm:adjunctionrelations} are satisfied for a particular natural duality, the key step is to verify that the functors mediating the duality preserve exact squares. 

We use the notation of the proof of Lemma~\ref{lem:DLPri-pres-exact-squares}. 

\subsubsection{Hofmann-Mislove-Stralka Duality}

Hofmann-Mislove-Stralka duality \cite{hms} establishes that the duals of unital meet semilattices in $\Pos$ are unital join semilattices in $\Pri$, where the semilattice order and Priestley order coincide.
So take a \emph{Hofmann-Mislove-Stralka space}, or HMS space, to be a unital join semilattice in $\Pri$.

Suppose we have an exact square \eqref{eq:square-xy} in meet semilattices. We must show that the dual square is exact in HMS spaces.

Since $\Two^-$ preserves order on morphisms, $\Two^p\circ \Two^f \leq \Two^q\circ \Two^g$.
Consider some $a\in \Two^X$, $b\in \Two^Y$ so that $\Two^p(a)\leq \Two^q(b)$.
Then $f[a_+]$, see \eqref{eq:fa+}, is a filter. So it corresponds to an element $c$ of $\Two^Z$, which by construction satisfies $a\leq \Two^f(c)$. Also if $\Two^g(c)(y) = 1$, then there is some $x$ so that $a(x)=1$ and $f(x)\leq g(y)$. So by exactness of the given square, pick $w$ so that $x\leq p(w)$ and $q(w)\leq y$.
Hence $1=a(x) \leq a(p(w)) \leq b(q(w)) \leq b(y)$.
We have shown that $a\leq \Two^f(c)$ and $\Two^g(c)\leq b$, that is, the dual square in HMS is exact.

In the other direction, suppose we have an exact square  \eqref{eq:square-xy} in HMS spaces.
Again $\Two^{-}$ preserves order on morphisms, so $\Two^p\circ \Two^f \leq \Two^q\circ \Two^g$.

In an HMS space a closed ideal is principal. This follows from the following observations.
As Priestley spaces, HMS spaces are bitopologically spectral spaces. That is, 
(i) the upper opens constitute a spectral topology, as do the lower opens, 
(ii) the Priestley order is the specialization order for the upper open topology, and is the converse of the specialization order for the lower topology, and (iii)
the Priestley topology is the join of these two spectral topologies. In particular, the upper open topology is sober. So specialization is a dcpo. 
Suppose $I$ is a closed ideal. Since it is a downset, it is closed in the upper open set topology. Suppose $C\cup D\subseteq I$ for two closed sets $C$ and $D$. If $x\in I\setminus C$ and $y\in I\setminus D$, then $x\wedge y\in I\setminus(C\cup D)$. So $I$ is an irreducible closed, and must be the closure (in the upper open set topology) of a point.

Suppose $\Two^p(a)\leq \Two^q(b)$. Then $f[a_+]$ and $g[b_-]$, see \eqref{eq:fa+} and \eqref{eq:gb-}, must be disjoint. For suppose not. Then for some $x$ and $y$, $a(x)=1$, $f(x)\leq g(y)$, and $b(y)=0$. By exactness, there is a $w$ so that $x\leq p(w)$ and $q(w)\leq y$. But then $a(x)\leq b(y)$, contradicting $a(x)=1$ and $b(y)=0$.

Since $f$ is continuous, and $a_+$ is clopen, $f[a_+]$ is compact. And since $g[b_-]$ is a principal ideal, $g[b_-] = \mathord\downarrow g(y_*)$ for some $y_* \in b$.

For each $x\in a_+$, $f(x)\nleq g(y_*)$. So there is a clopen ideal $I_x$ separating them. That is,
$g(y_*)\in I_x$ and $f(x)\notin I_x$. The complements of these $I_x$'s form an open cover of $f[a_+]$. So finitely many suffice, and the intersection of the corresponding clopen ideals contains $g(y_*)$, and is disjoint from $f[a_+]$. This intersection is itself a clopen ideal determining an HMS morphism $c\in \Two^Z$. Clearly,  $a\leq \Two^f(c)$ and $\Two^g(c) \leq b$ directly by the construction. 

\subsubsection{Banaschewski Duality}

Banaschewski \cite{banaschewski-duality} shows, in effect, that the topological duals of posets are bounded distributive lattices in $\Pri$ where the underlying order coincides with the lattice order -- we call such spaces \emph{Banaschewski spaces}.

Suppose (\ref{eq:square-xy}) is an exact square in Banaschewski spaces.
Then $\Two^p\circ \Two^f \leq \Two^q\circ \Two^g$ in $\Pos$.

Suppose $\Two^p(a)\leq \Two^q(b)$. By the same argument as in HMS spaces, $g[b_-]$ is a principal ideal and $f[a_+]$ is a principal filter. Let $y_*$ be the generator of $g[b_-]$ and $x_*$ be the generator of $f[a_+]$. Then $f(x_*)\nleq g(y_*)$ by exactness of the given square. So there is a closed prime ideal separating them.

Suppose (\ref{eq:square-xy}) is an exact square in $\Pos$. 
Then $\Two^p\circ \Two^f \leq \Two^q\circ \Two^g$ in Banaschewski spaces. If $\Two^p(a)\leq \Two^q(b)$, then $f[a_+]$ is an up-set,  $g[b_-]$ is a down-set, and the two are disjoint. So $f[a_+]$
determines an $c$ element of $\Two^C$ that satisfies $a\leq \Two^f(c)$ by construction. 
Clearly, $\Two^g(c)(y)=c(g(y)) \leq b(y)$ for every $y\in Y$. 

\section{Conclusion}

We showed how to extend an equivalence or adjunction from maps to relations. In more detail, Theorem~\ref{thm:dualityrelations} extends a dual equivalence of maps to a dual equivalence of relations, while Theorem~\ref{thm:adjunctionrelations} extends a dual adjunction (or equivalence) of maps to a dual adjunction (or equivalence) of the framed bicategory of relations.

The general framework is that of regular categories in a suitable order-enriched sense. Roughly speaking, the categories in question must have forgetful functors that preserve order-enriched limits and preserve regular factorisations; and the adjoint functors must preserve exact squares and regular factorisations. 

In our experience, to exhibit a particular example of an adjunction or equivalence satisfying these conditions, most of the work will go into verifying preservation of exact squares, see Lemma~\ref{lem:DLPri-pres-exact-squares} for our main example. It is worth noting that the proofs involving the dualising object $\Two$, always follow the same common outline inherited from $\Pos$, with the particularities of the situation entering only in one specific place, see Lemma~\ref{lem:DLPri-pres-exact-squares} and the proofs of Section~\ref{sec:exles-adjunctions} for specific examples.


In a sequel paper, we will apply the duality of relations in order to extend zero-dimensional dualities to continuous ones in a systematic way. As we have seen here, dualities such as the one between ordered Stone spaces and distributive lattices can be extended from maps to relations. Once we have relations, we can split idempotents and then restrict to maps again, obtaining a non-zero dimensional duality.
 
For future investigations, two important questions concern other dualising objects than 
$\Two$. First, while staying inside order-enriched categories, we plan to integrate our work here into the theory of natural dualities as described by Clark and Davey \cite{clark+davey} and to specialise Theorems \ref{thm:dualityrelations} and \ref{thm:adjunctionrelations} to this setting. Also possible relationships with J\'onnson-Tarski duality \cite{JT} and the theory of canonical extensions \cite{Venema:ac,DGP} should be explored.

Second, we want to know whether our approach can be extended to other enrichments than $\Two$ as for example Lawvere metric spaces \cite{lawvere}. In particular, it would be interesting to see whether this could find applications to stochastic relations as studied, for example, in Doberkat \cite{doberkat} and  Panangaden \cite{panangaden}. 

Another question is how much of the theory developed in this paper can be salvaged for functors that do not preserve exact squares. Looking back to Theorem~\ref{thm:adjunctionrelations}, even without the assumption of preservation of exact squares, we are in a situation similar to orthogonal adjunction in the double category of pseudo double categories with lax and colax double functors as in Section 5.3 of  \cite{GP04}.

From a category theoretic point of view, there is the question how much of the theory of regular categories transfers to order-regular categories. While P-varieties feature prominently in our work, recent work by Abramsky and coauthors \cite{ADW,abramsky-shah} on game comonads suggests potential examples of enriched \emph{co}varieties. In particular, one could have a look at games for continuous model theory, which has been given a category theoretic foundation recently by Cho \cite{cho}.

There is also a long list of more specific questions. For example, as discussed after Example~\ref{exle:total-relation-DL}, it should be interesting to look at dual relations of not-necessarily-bounded distributive lattices. Or a wide range of other dualities, for that matter. Finally, there are a number of technical questions, for example whether cocommas are exact in all order-regular categories or how an explicit characterisations of cocommas in various algebraic categories including distributive lattices would look like.

Returning to more fundamental questions, this paper focussed on heterogeneous relations $A\looparrowright B$ in the context of order-enriched algebra. An investigation  into homogeneous relations $A\looparrowright A$ in the presence of order-preserving as well as order-reversing operations is one important topic for future investigation, in particular in connection with some recent work in proof theory of Greco et.al~\cite{GRMT}. Another question is whether our approach can be extended to relations $A_1\times\ldots A_n \looparrowright B_1\times\ldots B_m$.


\end{document}